\documentclass{article}

\usepackage{apptools}
\usepackage{bbold}
\usepackage[usenames,dvipsnames]{xcolor}
\usepackage[margin=1in]{geometry}
\usepackage[utf8]{inputenc}
\usepackage{amsfonts}
\usepackage{amsmath}
\usepackage{amssymb}
\usepackage{amsthm}
\usepackage{graphicx}
\usepackage[colorlinks=true, linkcolor=blue, citecolor=gray]{hyperref}
\usepackage{indentfirst}
\usepackage{mathrsfs}
\usepackage{multirow}
\usepackage{physics}
\usepackage{subfig}
\usepackage{tikz} 

\usetikzlibrary{quantikz}

\AtAppendix{\counterwithin{theorem}{section}}

\newcommand{\thm}[1]{\hyperref[thm:#1]{Theorem~\ref*{thm:#1}}}
\newcommand{\defn}[1]{\hyperref[defn:#1]{Definition~\ref*{defn:#1}}}
\newcommand{\lem}[1]{\hyperref[lem:#1]{Lemma~\ref*{lem:#1}}}
\newcommand{\prop}[1]{\hyperref[prop:#1]{Proposition~\ref*{prop:#1}}}
\newcommand{\conj}[1]{\hyperref[conj:#1]{Conjecture~\ref*{conj:#1}}}
\newcommand{\fig}[1]{\hyperref[fig:#1]{Figure~\ref*{fig:#1}}}
\newcommand{\tab}[1]{\hyperref[tab:#1]{Table~\ref*{tab:#1}}}
\renewcommand{\sec}[1]{\hyperref[sec:#1]{Section~\ref*{sec:#1}}}
\newcommand{\app}[1]{\hyperref[app:#1]{Appendix~\ref*{app:#1}}}
\newcommand{\cor}[1]{\hyperref[cor:#1]{Corollary~\ref*{cor:#1}}}
\newcommand{\obs}[1]{\hyperref[obs:#1]{Observation~\ref*{obs:#1}}}
\newcommand\Tstrut{\rule{0pt}{3ex}}         
\newcommand\Bstrut{\rule[-1.5ex]{0pt}{0pt}}   

\DeclareMathOperator{\LambertW}{LambertW}
\DeclareMathOperator{\openone}{\mathbb{1}}
\DeclareMathOperator{\QFT}{QFT}
\DeclareMathOperator{\rnd}{rnd}
\DeclareMathOperator{\sgn}{sgn}

\setlength{\parskip}{1em}

\newtheorem{theorem}{Theorem}
\newtheorem{conjecture}{Conjecture}

\newtheorem{lemma}[theorem]{Lemma}

\theoremstyle{definition}
\newtheorem{definition}[theorem]
{Definition}

\title{Time Dependent Hamiltonian Simulation \\ Using Discrete Clock Constructions}
\author{Jacob Watkins$^1$, Nathan Wiebe$^{2,3,4}$, Alessandro Roggero$^{5,6,7}$, Dean Lee$^1$\smallskip\\
$^1$ {\small \it Facility for Rare Isotope Beams and Department of Physics and Astronomy,}\\ {\small\it Michigan State University, MI 48824, USA}\\
$^2$ {\small\it Department of Computer Science, University of Toronto, Toronto, ON M5S 2E4, Canada}\\
$^3$ {\small\it Pacific Northwest National Laboratory, Richland, WA 99354, USA}\\
$^4$ {\small\it Department of Physics, University of Washington, Seattle, WA 98195, USA}\\
$^5$ {\small\it InQubator for Quantum Simulation (IQuS), Department of Physics, }\\{\it\small University of Washington, Seattle, WA 98195, USA}\\
$^6$ {\small\it Dipartimento di Fisica, University of Trento, I–38123, Povo, Trento, Italy}\\
$^7$ {\small\it INFN-TIFPA Trento Institute of Fundamental Physics and Applications,  Trento, Italy}\\}
\date{}

\begin{document}
\maketitle

\begin{abstract}
    Compared with time independent Hamiltonians, the dynamics of generic quantum Hamiltonians $H(t)$ are complicated by the presence of time ordering in the evolution operator. In the context of digital quantum simulation, this difficulty prevents a direct adaptation of time independent simulation algorithms for time dependent simulation. However, there exists a framework within the theory of dynamical systems which eliminates time ordering by adding a ``clock" degree of freedom. In this work, we provide a computational framework, based on this reduction, for encoding time dependent dynamics as time independent systems. As a result, we make two advances in digital Hamiltonian simulation. First, we create a time dependent simulation algorithm based on performing qubitization on the augmented clock system, and in doing so, provide the first qubitization-based approach to time dependent Hamiltonians that goes beyond Trotterization of the ordered exponential.  Second, we define a natural generalization of multiproduct formulas for time-ordered exponentials, then propose and analyze an algorithm based on these formulas. Unlike other algorithms of similar accuracy, the multiproduct approach achieves commutator scaling, meaning that this method outperforms existing methods for physically-local time dependent Hamiltonians. Our work reduces the disparity between time dependent and time independent simulation and indicates a step towards optimal quantum simulation of time dependent Hamiltonians.
\end{abstract}

\newpage
\tableofcontents
\newpage

\section{Introduction}
    Quantum simulation has, within recent years, emerged as the preeminent application of quantum computing~\cite{Lloyd1998Universal,Berry2007Sparse,Childs2012LCU,Childs2021Theory,Low2019Qubitization,Campbell2019Random, Berry2017Sparse}. Specifically, there now exists a large family of digital Hamitonian simulation algorithms, each of which successfully compiles complicated quantum dynamics into elementary sequences of gates. This has raised the possibility for solving problems in the domain sciences spanning chemistry~\cite{Lanyon2010Chemistry,Reiher2017Reaction,Burg2021Catalysis,Lee2021Tensor}, materials science~\cite{Babbush2018Materials}, nuclear and neutrino physics~\cite{Roggero2020Scattering,Roggero2021Oscillations,Baroni2022Nuclear}, field theory~\cite{Jordan2012Quantum,Klco2018Schwinger,Shaw2020Schwinger} and beyond.  Further work has shown the central role that these algorithms play in translating continuous time quantum walk algorithms, as well as alternative models of quantum computing such as adiabatic quantum computing, to the standard circuit model~\cite{Aharonov2003Adiabatic}.
    
    Presently, there are three fundamental strategies for simulating quantum dynamics: qubitization~\cite{Low2019Qubitization,Gilyen2019QSVT,Dong2021QSP}, linear combinations of unitaries (LCU) ~\cite{Childs2012LCU,Berry2015Optimal,Berry2015Taylor,Low2019Multiproduct, Kieferova2019Dyson} and product formulas~\cite{Berry2007Sparse,Wiebe2010Higher,Childs2019Randomization,Childs2021Theory}. Of these methods, product formulas have the unique characteristic that the error depends on the commutators of the Hamiltonian terms; however, the error scaling of these methods is super-polynomially worse than either of the other two strategies. This can be addressed through the use of multiproduct formulas, which hybridize product formula and LCU techniques~\cite{Childs2012LCU,Low2019Multiproduct,Faehrmann2022Randomizing}. However, to date this approach has not been successfully applied to simulate time dependent Hamiltonians. This is because the formalism for analyzing ordered operator exponentials is more complicated than for ordinary operator exponentials, which makes translating MPFs to the time dependent domain highly nontrivial.
    
    Our work reduces the discrepancy between time dependent and time independent Hamiltonian simulation techniques by providing a computational framework, based on the $(t,t')$-formalism~\cite{Hale1991Bifurcations,Casati1989Chaos}, for encoding time dependent Hamiltonian dynamics into a time independent Hamiltonian. Essentially, this is accomplished by promoting the $t$ in $H(t)$ to a degree of freedom and introducing a new evolution parameter. Unlike existing approaches based on a continuous time parameter~\cite{Peskin1993Solution,Suzuki2003Methodology}, our clock spaces are all finite dimensional, providing the advantages of direct simulability as well as formal well-behavedness. After proving the accuracy of clock space evolutions, we propose an algorithm for time dependent Hamiltonian simulation based on direct simulation of the augmented clock system. We focus on qubitization because of its inability to handle time ordering directly, and due to its asymptotically superior performance in the time independent setting. We perform a rigorous error and query complexity analysis starting from a Hamiltonian input as a LCU with coefficients computed by an oracle. While our asymptotic bounds do not match simulation lower bounds, we suspect this may be improved upon with better analysis. Regardless, we provide, to our knowledge, the first non-trivial qubitization algorithm for time dependent simulation.
    
    Additionally, we take advantage of the finite clock space formalism to propose a natural generalization of multiproduct formulas (MPFs) to the time ordered setting, and offer evidence that these are valid extrapolations to the exact time evolution. We then create an algorithm based on these ``time dependent MPFs" and analyze its performance. The algorithm is, to the authors' knowledge, the most efficient existing quantum algorithm for simulating time dependent Hamiltonians which also exhibits ``commutator scaling." By this we mean that the algorithm exhibits zero error for time independent Hamiltonian with easily simulatable and commuting terms. In practice, this constitutes a super-polynomial advantage over the error scaling provided by product formula methods for solving the same problems. We perform two rudimentary numerical demonstrations which indicate the the approach works as anticipated in reducing product formula error.

    The rest of the paper is laid out as follows. In~\sec{Main_Results}, we summarize our main findings and compare with leading algoriths for time dependent Hamiltonian simulation. In~\sec{Basic_Ideas}, we review the notions of time evolution, product formulas, the $(t,t')$-formalism, and MPFs needed for our work, while also introducing some notation regarding the various vector and functional norms. In~\sec{Finite_Clock_Space}, we introduced the notion of finite dimensional clock spaces, and show that they encode simulations of $H(t)$ with an appropriate time independent clock Hamiltonian. We then apply these constructions, first to direct simulation by qubitization in~\sec{TDQubitization}, then as a tool for proof of time dependent MPFs in~\sec{TDMPF}. We conclude in~\sec{Conclusion} with remarks on the broader impact of this work and directions for future research.
    
\section{Main Results} \label{sec:Main_Results}
    \tab{babbush} summarizes the query complexity of our proposed algorithms (green) and displays, for comparison, several leading algorithms for time dependent simulation. To facilitate comparison, the Hamiltonian is taken in an LCU model
    \begin{equation}
        H(t) = \sum_{j=1}^L \alpha_j(t) U_j
    \end{equation}
    for unitary and Hermitian $U_j$, and with each $\alpha_j$ real valued. The Trotter, QDrift, and MPF approaches apply for more general Linear Combination of Hamiltonians (LCH) inputs, but Trotter and MPF also have stronger smoothness assumptions on $\alpha_j$. The various norms are defined in~\sec{Basic_Ideas}, and the scalar function $\Lambda$ characterizes the Hamiltonian ``difficulty" at each time in terms of the size of $H$ and its derivatives (see~\defn{LambdaBound}). The tilde in ``$\widetilde{O}$" indicates the exclusion of subdominant multiplicative logarithmic factors from the complexity.
    \begin{table}[t]
        \centering
        \resizebox{\textwidth}{!}{%
        \renewcommand{\arraystretch}{1.5}
        \begin{tabular}{|c|l|l|l|}
            \hline
            \large{Method} & \large{Query Complexity} & \large{Auxiliary Qubits} & \large{CS?}\\
            \hline
            Trotter~\cite{Wiebe2010Higher} & $O\big(L (\|\Lambda\|_1)^{1+ o(1)}/\epsilon^{o(1)}\big)$ & 0 & Yes \Tstrut\Bstrut\\
            QDrift~\cite{Berry2020L1} & $O(\|\alpha\|_{1,1}^2 /\epsilon)$ & 0 & No \Tstrut\Bstrut\\
            Dyson~\cite{Berry2020L1} & $O\big(L^2 \|\alpha\|_{1,1} \log(1/\epsilon)\big)$ & $\widetilde{O}\big (\log(\|\dot{\alpha}\|_{1,1} /\epsilon)+\log(\|\alpha\|_{1,\infty} T/\epsilon)\big)$ & No \Tstrut\Bstrut\\
            \color{Green}
            Qubitization & $\begin{aligned}&\widetilde{O}\big(\|\alpha\|_{\infty,1}^\mathrm{rev}T + \log1/\epsilon\big)  \\
            &+ \max_t \|\dot{H}\| T^2/\epsilon\big)\end{aligned}$ & $\begin{aligned}O\big(&\log\left(L\|\alpha\|_{\infty,1}^\mathrm{rev} T/\epsilon\right) \\ 
            +&\log\left(L\max_t \|\dot{H}\| T^2/\epsilon\right)\big)\end{aligned}$ & {No}\Tstrut\Bstrut\\
            \color{Green}
            MPF & $\widetilde{O}\big({L\|\Lambda\|_1 \log^2(1/\epsilon)}\big)$ & $\widetilde{O}\left(\log\left(L\|\Lambda\|_1 \|\dot{\alpha}\|_{\infty,\infty} T^2/\epsilon \right) \right)$ & Yes \Tstrut\Bstrut\\
            \hline
        \end{tabular}}
        \caption{Summary of our results (green) and comparison to leading quantum simulation methods for time dependent Hamiltonians. We assume that $H=\sum_{j=1}^L \alpha_j(t) U_j$ for Hermitian unitaries $U_j$ and real-valued $\alpha_j(t)$. $\Lambda$ is a positive, time dependent function with dimensions of $H$, and quantifies the size of $\alpha_j$ and its derivatives (see~\defn{LambdaBound}). $\|\alpha\|_{p,q}$ refers to a nested vector-$p$ and functional-$q$ norm for the coefficients $\alpha =(\alpha_j)_{j=1}^L$, and $\|\alpha\|_{p,q}^\mathrm{rev}$ indicates these are taken in the reverse order.  Commutator scaling (CS) here means the simulation error vanishes in the limit where $H$ is time independent and $[U_j, U_k] = 0$ for all $j, k \in [L]$.} 
        \label{tab:babbush}
    \end{table}

    The qubitization algorithm is obtained from simulating the finite clock space construction of~\sec{Finite_Clock_Space}. We carry out a procedure for taking the LCU input $H(t)$ and producing a time independent LCU input on the larger space. Unfortunately, our analysis shows a Trotter-like error term in the query complexity, as can be seen in the Table. Thus, our analysis does not show that our qubitization algorithm improves over Trotter simulation. In~\sec{Qubitization_Discussion} we discuss how this error term might be eliminated with improved analysis and modification of the clock construction. If this term were absent, the query complexity of our approach would match the $T$ and $\epsilon$ lower bounds set by the no fast-forwarding theorem~\cite{Berry2007Sparse,Atia2017Fast}.  

    The time dependent Multiproduct Formulas (MPFs) we use for simulation do not require the clock space. However, there is a lack of formal proof that the extrapolants work. We clearly state the needed fact in~\conj{TDMPF_existence}, and outline a possible path to proof using the clock space formalism. The generalization from standard MPFs is rather intuitive, and the numerics of~\sec{NumericalDemos} support the truth of the conjecture. Assuming this, we go on to analyze the error and query complexity of an algorithm based on these formulas, using adaptive time steps to handle harder parts of the simulation with a greater share of resources. In the query model, the algorithm performs overall comparably to the Dyson method, as can be seen in the Table. However, the MPF method does have some notable strengths. For one, it exhibits commutator scaling, meaning it is errorless in the time independent, commuting limit. Additionally, the dependence on the number of terms $L$ is quadratically improved, though not as good as QDrift. 
    
\section{Basic Ideas and Notation} \label{sec:Basic_Ideas}
    Here we review fundmental concepts behind our work and meanwhile introduce notation. The reader is encouraged to read as desired, or reference this section and continue to new results starting in~\sec{Finite_Clock_Space}.

    \subsection{Nested Norms}
        In our characterization of simulation errors and computational complexity, we will make frequent use of vector $p$ or functional $q$ norm, and often both in conjunction. For a vector $v \in \mathbb{C}^n$ the vector $p$-norm is given by
        \begin{equation}
            \|v\|_p := \left(\sum_{i=1}^n \abs{v_i}^p\right)^{1/p}.
        \end{equation}
        For a function $f:[a,b]\rightarrow \mathbb{C}$ the functional $p$ norm is analogously defined as
        \begin{equation}
            \|f\|_p := \left(\int_a^b \abs{f(t)}^p dt\right)^{1/p}
        \end{equation}
        provided the integral exists. These expressions hold for $p$ positive integers. We also use $p = \infty$ in the standard way.
        \begin{equation}
            \|v\|_\infty = \max_j \abs{v_j},\qquad \|f\|_\infty = \sup_{t\in[a,b]} \abs{f(t)}
        \end{equation}
        For a vector-valued function $v:[a,b]\rightarrow \mathbb{C}^n$, the norm $\|v\|_{p,q}$ is just the nesting of these two starting with the vector norm. For integer $p,q$
        \begin{equation}
            \|v\|_{p,q} := \left(\int_a^b \Big(\sum_{j=1}^n\abs{v_j(t)}^p\Big)^{q/p} dt\right)^{1/q}
        \end{equation}
        and similar expression hold for $p$ or $q$ being $\infty$. The norm $\|v\|_{p,q}^\mathrm{rev}$ indicates that the functional $p$ norm is taken, followed by the vector $q$ norm. For example,
        \begin{equation}
            \|v\|_{\infty,1}^{\mathrm{rev}} = \sum_j \sup_t \abs{v_j(t)}.
        \end{equation}
    
\subsection{Digital Hamiltonian Simulation} 
    According to the postulates of quantum mechanics, a closed system with Hilbert space $\mathcal{H}$ has dynamics which are generated by some self-adjoint operator $H$ on the space, called the \emph{Hamiltonian}. In certain cases of physical interest, the parameters of the physical system may change over time, or one may work in a ``non-inertial frame'' such as an interaction picture. In such instances, a proper description requires that $H$ be a function of time. In saying $H(t)$ generates the dynamics, we mean there exists a unique unitary-operator-valued function $U$, termed the \emph{time evolution operator}, which solves the following initial value problem.
    \begin{align} \label{eq:Schrödinger_eqtn}
    \begin{aligned}
        i\partial_{t} U(t, t_0) &= H(t) U(t, t_0) \\
        U(t_0, t_0) &= \openone
    \end{aligned}
    \end{align}
    (Here, and throughout, we choose units where Planck's constant $\hbar$ is one.) The initial value problem~\eqref{eq:Schrödinger_eqtn} is the \emph{Schrödinger equation} for the time evolution operator. Typically, we set $t_0 = 0$ and use $T$ to denote the final time of interest. The solution $U$ encodes maximal knowledge about the system dynamics. For any initial state $\ket{\psi(t_0)} = \ket{\psi_0}$ (only pure states need be considered here), one obtains the time-evolved quantum state $\ket{\psi(t)} = U(t,t_0)\ket{\psi_0}$ simply by applying $U$. Hence, $U$ ``propagates" our state in time, and is sometimes called the \emph{propagator}. 
    
    In the case where $H(t)$ is a constant function, we say it is \emph{time independent}. Such behavior naturally arises in systems whose dynamical laws exhibit time-translation invariance. In this case, the solution $U$ to equation \eqref{eq:Schrödinger_eqtn} takes the expression
    \begin{equation}
        U(T, 0) = e^{-i H T}.
    \end{equation}
    We say that $U$ is an \emph{ordinary} (operator) exponential of $H$. For more arbitrary $H(t)$, in contrast, the solution $U$ is typically written as an \emph{ordered} (operator) exponential of $H(t)$.
    \begin{equation} \label{eq:gen_solution}
        U(T, 0) = \mathcal{T} \exp{-i\int_{0}^{T} H(\tau) d\tau}
    \end{equation}
    There are several different ways to understand the meaning of~\eqref{eq:gen_solution}, but for our purposes, the most insightful is through through product integration~\cite{Dollard1984Product}. Given a family of partitions $\{t_j\}_{j=1}^n$ of the interval $[s,t]$, with maximum width $\delta_n$ tending to zero as $n\rightarrow \infty$, a solution is given by the product integral
    \begin{equation}
         \mathcal{T} \exp{-i\int_{0}^{T} H(\tau) d\tau} = \lim_{n\rightarrow\infty} \prod_{j=1}^{n-1} e^{-i H(t) \Delta t_j}
    \end{equation}
    where $\Delta t_j = t_{j+1} - t_j$. One feature of this approach is that, for sufficiently large but finite $n$, we achieve an approximation that is amenable to simulation by time independent methods. This is exactly the approach taken for product formula simulation. However, simulation of this product by LCU or qubitization is generally pointless, since the errors of discretization outweigh any accuracy benefits achieved of using these protocols over Trotter.
    
    In the setting of digital quantum computation, one can only hope to calculate the propagator to some approximation, which can be made better at increasing cost. Constructing the approximate circuit for $U$ defines the problem of Hamiltonian simulation.
    \begin{definition}[Heuristic]
        Let $d$ be some distance measure on the set of quantum channels on $n$ qubits, and define the \emph{Hamiltonian simulation problem} as follows. Given an interval $[0, T]$, $\epsilon > 0$ and a Hermitian valued function $H: [0,T] \rightarrow \mathrm{Herm}(\mathbb{C}^{2^n})$, construct a quantum circuit $V$ such that
        \begin{equation*}
            d(U, V) \leq \epsilon
        \end{equation*}
        where $U \equiv U(T, 0)$ is the exact propagator generated by $H(t)$. Any circuit-valued function $V(\epsilon, T, H)$ which solves the problem above for some subset of the domain of parameters is known as a  \emph{Hamiltonian simulation algorithm} (abbreviated ``simulation algorithm'') over that domain.
    \end{definition}
    Additional technical requirements to are needed for the definition above to be precise, such as an input model for $H$ and that the circuit is efficiently compilable. In this work, we will take $d(U, V) = \norm{U-V}$, where $\norm{\cdot}$ will denote the spectral norm (also called the induced 2-norm).
    \begin{equation} \label{defn:spectral_norm}
        \|A\| := \max_{v \neq 0} \frac{\|Av\|}{\|v\|}
    \end{equation}
    The norm appearing on on the right is the Euclidean norm. The spectral error $\|U - V\|$ should be thought of as a worst-case simulation error for any initial state. For various reasons, such as partial measurements with post-processing, $V$ may not be unitary, but the actual underlying channel will still a valid quantum operation. The parameter $\epsilon$ is called the \emph{error tolerance}, and $d(U, V)$ is the \emph{simulation error}. We will $\epsilon$ the error, accuracy, or precision of the simulation interchangeably.
    
    Any decent approximation to $U(t,t_0)$ should become arbitrarily accurate as $t \to t_0$ (and approach the identity). The quality of the approximation, in the context of product and multiproduct formulas, is typically quoted in terms of a power law convergence. This is captured by the following definition.
    \begin{definition}
        For finite-dimensional $\mathcal{H}$, let $L:[0,T]^2\rightarrow L(\mathcal{H})$. We say that $L_p:[0,T]^2\rightarrow L(\mathcal{H})$ is a $p$th-order approximation to $L$ if, for all $t \in [0,T)$, 
        \begin{equation*}
            \|L(t+\tau, t) - L_p(t+\tau,t)\| \in O(\tau^{p+1})
        \end{equation*}
        where $\tau$ is taken asymptotically to $0$.
    \end{definition}
    As an important example, product formulas approximate operator exponentials of sums, by splitting exponential to match terms in a power series of error operators. The simplest example is $1\mathrm{st}$-order Trotter, with
    \begin{equation}
        e^{\lambda(A+B)} = e^{\lambda A}e^{\lambda B} + O(\lambda^2).
    \end{equation}
    Such splittings are not exact because $A$ and $B$ don't generally commute. 
    
    A linear function $L:[0,T]^2\rightarrow L(\mathcal{H})$ is said to be \emph{symmetric} if it possesses the following ``time reversal symmetry": $L(t_1,t_0) = L^{\dagger}(t_0,t_1)$. Symmetric operators are closed under addition and scalar multiplication by a real number. They are also closed under multiplication of the form
    \begin{equation}
        L^{(2)}(t, t_0) := L(t, t') L(t', t_0)
    \end{equation}
    for any $t' \in [0,T]$. This is demonstrated by the following calculation.
    \begin{align}
    \begin{aligned}
        V^{(2)}(t, t_0)^\dagger &= V(t', t_0)^\dagger V(t, t')^\dagger \\
        &= V(t_0, t') V(t', t) \\
        &\equiv V^{(2)}(t_0, t).
    \end{aligned}
    \end{align}
    This symmetry is valuable, because approximation schemes for $U$ involving symmetric operations are ensured to have an odd error series~\cite{Berry2007Sparse}. Consequently, any symmetric formula is of order $2n$ for integer $n > 0 $.

\subsection{Continuous Clock Space} \label{sec:ContinuousClockSpace}
    Mathematically, and more broadly than the Hamiltonian setting considered here, the distinction between time dependent and time independent systems can be cast as a distinction between autonomous and nonautonomous dynamical systems. Dynamical systems are differential equations in a single evolution parameter $t$, which can always be expressed as a first-order differential equation
    \begin{equation} \label{eq:nonautonomous_system}
        \dot{x} = f(x,t), \qquad x(0) = x_0
    \end{equation}
    possibly by standard reduction-of-order techniques. Here $x \in \mathbb{R}^n$ consists of $n$ evolution parameters which implicitly depend on time $t$. It has long been recognized that a simple transformation allows for the reduction of nonautonomous systems to autonomous ones~\cite{Hale1991Bifurcations}. The trick is to promote $t$ to a coordinate, thereby making $f(x,t)$ satisfy the requirement of only depending on coordinates. Letting $\tau$ take the place of the evolution parameter (time), we still want $t$ and $\tau$ to be essentially the same. This is supplied by the simple equation 
    \begin{align}
        \frac{dt}{d\tau} = 1,\qquad t(0) = 0.
    \end{align}
    With this, we have the following autonomous system
    \begin{align}
        (\dot{x}, \dot{t}) = (f(x,t), 1),\qquad (x(0), t(0)) = (x_0, 0)
    \end{align}
    whose solution encodes the solution to the original system~\eqref{eq:nonautonomous_system}.

    When the dynamics of interest are generated by a classical Hamiltonian $H(q,p,t)$, autonomous systems are those in which $H$ has no time dependence. Playing the same trick as before and promoting $t$ to a coordinate, we want $dt/d\tau = 1$ as above. By Hamilton's equations, we require a Hamiltonian $H_F$ such that
    \begin{align}
        \frac{\partial H_F}{\partial (-E)} = 1
    \end{align}
    where $-E$ is the conjugate momentum to $t$ (the minus sign chosen so that $E$ represents energy; we anticipate our results). But we also want $H_F$ to reproduce the same dynamics for $q,p$ as $H$. This altogether implies the choice
    \begin{align}
        H_F(q,p;t,E) := H(q,p;t) - E
    \end{align}
    which is known as the Floquet Hamiltonian in condensed matter physics. Simply put, the term $-E$ pulls the $t$ coordinate at constant velocity to the right, changing $H(q,p,t)$ just as needed to enact the appropriate effect on the other coordinates and momenta.

    For \emph{quantum} Hamiltonians, we can simply employ a standard quantization procedure on $H_F$.
    \begin{equation}
        [\hat{t},-\hat{E}] = i I.
    \end{equation}
    A natural choice of Hilbert space for $t$ is $L^2([0,T])$: square integrable functions on the interval of simulation. We then get a representation of $t$ and $E$ as $t$ multiplication and $t$-derivatives, respectively.
    \begin{equation}
        (\hat{t}\psi)(t) = t\psi(t),\qquad (\hat{E}\psi)(t) = i\partial_t \psi 
    \end{equation}
    The Floquet Hamiltonian becomes
    \begin{equation} \label{eq:cont-clock-Ham}
        H_F = H - i\partial_t
    \end{equation}
    which looks remarkably similar to a rearranged Schrödinger operator, but as a caution, $t$ is no longer the evolution parameter. As representing a physical system $H_F$ is certainly odd and infeasible for a number of reasons, including unboundedness from below and that the $t$-system is unaffected by the state of the main system. As a manufactured system, however, it can be useful both for simulation and formal purposes, as we've alluded to.
    
    The framework presented, sometimes called the $(t,t')$-formalism because of the two distinct ``times", finds use in  periodically-driven quantum systems~\cite{Casati1989Chaos}. But for our purposes, the elimination of explicit dependence of on the evolution parameter in $H$ is most exciting, because it implies the time evolution operator requires no time ordering, which still encoding the full time dynamics~\cite{Peskin1993Solution}.

    Having taken a developmental approach, let's provide a more concrete characterization of the continuous clock space. The Hilbert space is given by
    \begin{equation}
        \mathcal{H} = \mathcal{H}_s \otimes \mathcal{H}_c.
    \end{equation}
    where $\mathcal{H}_c \cong L^2(\mathcal{M})$, and $\mathcal{M}$ is the connected one-dimensional smooth manifold representing $t$ which contains the interval $[0,T]$. On $\mathcal{H}_c$, $E$ acts as a generator of translations, but is an unbounded operator. Nevertheless, the exponentials of $E$ above are well defined through the spectral theorem and functional calculus for unbounded operators~\cite{Hall2013ch10}. States $\psi \in \mathcal{H}$ can be expressed as a certain class of integrable functions on $\mathcal{M}$ whose values $\psi(t)$ are states on $\mathcal{H}_s$. The inner product on $\mathcal{H}$ is the natural one
    \begin{equation}
        \langle \phi \vert \psi\rangle := \int_\mathcal{M} \braket{\phi(t)}{\psi(t)}_s dt
    \end{equation}
    where $\langle \cdot\vert\cdot\rangle_s$ denotes the inner product on $\mathcal{H}_s$. 
    
    If $\mathcal{M}$ is not exactly $[0,T]$, then a time dependent observable $A(t)$ acting on $\mathcal{H}_s$ will need to be defined on the entire clock space. Once done, $A(t)$ is promoted to a parameter-independent observable $\mathbf{A}$ on $\mathcal{H}$, acting on $\psi \in \mathcal{H}$ in a manner corresponding with the original space.
    \begin{equation}
        (\mathbf{A}\psi)(t) := A(t)\psi(t)
    \end{equation}
    We observe that $\mathbf{A}$ is \emph{local} in $\mathcal{H}_c$ in the sense of acting via multiplication in $t$-space. Let $\textbf{H}$ be the promoted Hamiltonian operator, and let $\mathbf{U}(\tau)$ be the unitary operator given by 
    \begin{equation} \label{eq:clockU}
        \mathbf{U}(\tau) = e^{iE \tau}e^{-i(\mathbf{H}-E)\tau}.
    \end{equation}
    One can verify that $\mathbf{U}$ solves the following Schrödinger equation,
    \begin{align}
    \begin{aligned}
        i \partial_\tau \mathbf{U}(\tau) &=  \mathbf{H}(s) \mathbf{U}(\tau) \\
        \mathbf{U}(0) &= \openone.
    \end{aligned}
    \end{align}
    Here,
    \begin{equation}
        \mathbf{H}(s) \equiv e^{iEs}\mathbf{H} e^{-iE s}
    \end{equation}
    is a $\tau$-dependent Hamiltonian corresponding to simple, uniform translation along the clock space. For any state $\Psi_0 \in \mathcal{H}$, the function 
    \begin{equation}
        \Psi(s) := \mathbf{U}(\tau) \Psi_0
    \end{equation}
    solves the Schrödinger equation generated by $\mathbf{H}(\tau)$, but more importantly, it encodes solutions to the dynamics under $H(t)$. Indeed, for any $t \in \mathcal{M}$, we have a state $\psi(\tau, t) \in \mathcal{H}_s$ defined by
    \begin{equation}
        \psi(\tau, t) := [\Psi(\tau)](t) = [\mathbf{U}(\tau) \Psi_0] (t)
    \end{equation}
    which solves the Schrödinger equation of interest.
    \begin{align}
    \begin{aligned}
        i\partial_\tau \psi(\tau, t) &= i \partial_\tau [\mathbf{U}(s) \Psi_0](t) \\
        &= [\mathbf{H}(\tau)\Psi_0 ](t) \\
        &=H(\tau + t) \psi(0, t)
    \end{aligned}
    \end{align}
    The interpretation is that each $t$ constitutes an initial time for performing the simulation, so we have a family of solutions parameterized by $t$ with initial state $\psi(0,t)$. The evolution parameter acts, as expected, as the total time elapsed in the simulation. Finally, we can obtain a collection of time-evolution operators $U(t + \tau, t)$ on $\mathcal{H}_s$ for each $t \in [0,T]$ as follows.
    \begin{equation}
        U(t + \tau, t) \psi_0 := [\mathbf{U}(\tau) \Psi_0] (t)
    \end{equation}
    where $\Psi_0 \in \mathcal{H}$ is any state for which $\Psi_0 (t) = \psi_0$. This operator is unitary and solves the operator Schrödinger equation~\eqref{eq:Schrödinger_eqtn}. 
    
    To summarize, the operator $\mathbf{U}$ of equation~\eqref{eq:clockU} encodes a one-parameter family of time evolution operators for the system of interest, parameterized by the initial time. Thus, we have shown how the propagator $U$ generated by a time dependent $H$ can be cast as an ordinary operator exponential on an augmented space. Interesting in its own right, this framework also allows for a natural unification of ideas regarding ``Trotterization." This term is used to refer to both (a) the splitting up of an (ordinary) operator exponential of $H = \sum_j H_j$ into exponentials of the various $H_j$, or (b) the simulation of a time dependent Hamiltonian by time independent simulations over small time intervals. These can be seen as manifestations of the same phenomenon. To illustrate with a pertinent example, consider a symmetric Trotterization of the unitary $\mathbf{U}$.
    \begin{align}
    \begin{aligned}
        U_2(t + \tau, t) &:= e^{iE\tau}\left(e^{-iE\tau/2}e^{-i H(t)\tau}e^{-iE\tau/2}\right) \\
        &= e^{-i H(t + \tau/2) \tau}
    \end{aligned}
    \end{align}
    We have just derived the \emph{midpoint formula}~\cite{Wiebe2010Higher,Suzuki2003Methodology} from scratch. The Trotter product theorem says that 
    \begin{equation}
        \lim_{k\rightarrow\infty} e^{iE\tau}\left(e^{-iE\tau/2k}e^{-i H(t)\tau/k}e^{-iE\tau/2k}\right)^k = U(t + \tau, t).
    \end{equation}
    Note that this holds even though $E$ is unbounded~\cite{Reed1980MathPhysics}. Thus,
    \begin{equation}
        U_2^{(k)} (t + \tau, t) = e^{iE\tau}\left(e^{-iE\tau/2k}e^{-i H(t)\tau/k}e^{-iE\tau/2k}\right)^k
    \end{equation}
    constitutes a good approximation to $U$ for sufficiently large $k \in \mathbb{Z}_+$. This opens up the possibility of a more unified approach to Hamiltonian simulation algorithms that has not yet been properly considered.
        
\subsection{Multiproduct Formulas} \label{sec:MultiproductFormulas}
    Multiproduct formulas (MPFs) are a generalization of the celebrated product formulas, and span two of the great pillars of quantum simulation.  The aim of the MPF is to approximate the time evolution operator $U$ as a linear combination of lower-order Trotter formulas, in such a way that higher order errors are cancelled~\cite{Chin2010Multiproduct,Childs2012LCU,Low2019Multiproduct}. They are, fundamentally, nothing more than a Richardson extrapolation of a product formula $\mathcal{P}$ to Trotter step size $s\rightarrow 0$. This extrapolation is done to address the primary deficiency of product formulas, which is that the number of exponentials used in the $2n^\mathrm{th}$-order formula scales as $5^n$. Product formulas, unfortunately, cannot be easily optimized beyond this. As the MPF is a sum of product formula approximations, the number of error terms in the expansion does not grow exponentially. This allows us to approximate the quantum dynamics using polynomially many, rather than exponentially many, operator exponentials. 
    
    We now reference a theorem~\cite{Low2019Multiproduct} which justifies the effectiveness of MPFs in the time independent setting, while also implicitly defining them.
    \begin{theorem}[Time Independent MPFs (Theorem 1 of~\cite{Low2019Multiproduct})]\label{thm:low}
        Let $H$ be a bounded, time independent Hamiltonian, and let $U_2(t)$ be the $2^\mathrm{nd}$-order Suzuki-Trotter formula for the time evolution operator $U(t) = e^{-i H t}$. Let $a = (a_1, a_2, \dots, a_m) \in \mathbb{R}^m$ and $\vec{k} = (k_1, k_2, \dots, k_m) \in \mathbb{Z}_+^m$. There exist choices of $a$ and $\vec{k}$ such that multiproduct formula,
        \begin{equation*}
            U_{2,m}(t) := \sum_{j=1}^m a_j  U_2^{k_j}(t/k_j)
        \end{equation*}
        is order $2m$ and satisfies
        \begin{equation*}
            \max_j k_j \in O(m^2),\qquad \|a\|_1 \in O(\mathrm{polylog}(m)).
        \end{equation*}
    \end{theorem}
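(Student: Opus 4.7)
The plan is to build the MPF via Richardson-style extrapolation and control the combination coefficients using approximation-theoretic tools. First I would exploit the symmetry of the second-order Trotter step, $U_2(\D t)^{-1} = U_2(-\D t) = U_2(\D t)^\dagger$. Writing $U_2(\tau) = \exp(M(\tau))$, symmetry forces $M(-\tau) = -M(\tau)$, so $M(\tau) = -i\tau H + \tau^3 A_3 + \tau^5 A_5 + \cdots$ is odd in $\tau$. Because $U_2(\tau)^k = \exp(k M(\tau))$, substituting $\tau = \D t/k$ gives an effective generator whose correction to $-i\D t H$ is an even power series in $1/k$. Expanding the exponential around $U(\D t) = \exp(-i\D t H)$ yields the key structural identity
\begin{align*}
    U_2^{k}(\D t / k) = U(\D t) + \sum_{m=1}^{\infty} \frac{E_m(\D t)}{k^{2m}},
\end{align*}
where each $E_m(\D t)$ is an operator built from nested commutators of $H$ and the $A_{2j+1}$, satisfying $\|E_m(\D t)\| \in O(\D t^{2m+1})$.

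Substituting this expansion into the MPF ansatz gives
\begin{align*}
    U_{\vec{k}}(\D t) = \Big(\sum_{j=1}^M a_j\Big) U(\D t) + \sum_{m=1}^\infty \Big(\sum_{j=1}^M \frac{a_j}{k_j^{2m}}\Big) E_m(\D t),
\end{align*}
so the error is $O(\D t^{2M+1})$ provided the $M$ linear conditions $\sum_j a_j = 1$ and $\sum_j a_j / k_j^{2m} = 0$ for $m = 1, \dots, M-1$ all hold. With $x_j := 1/k_j^2$, these are a transposed Vandermonde system in the distinct nodes $\{x_j\}$; its unique solution is precisely the set of Lagrange weights for evaluating a degree-$(M-1)$ polynomial at $x = 0$ from samples at the $x_j$. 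This instantly guarantees that an $n^{\rm th}$-order MPF of the desired form exists for every choice of distinct $k_j$; the open question becomes a quantitative one about the choice of nodes.

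The remaining and most delicate step is to choose $\vec{k}$ so that $\max_j k_j \in O(M^2)$ and $\|a\|_1 \in O(\mathrm{polylog}(M))$ hold simultaneously. This is the main obstacle: a naive choice such as $k_j = j$ clusters the nodes $x_j$ near zero, producing an ill-conditioned Vandermonde system and $\|a\|_1$ that grows polynomially in $M$. To overcome this I would select $\vec{k}$ so that the nodes $\{x_j\}$ are spread across their admissible range analogously to Chebyshev extrapolation nodes, so that the Lebesgue constant for extrapolation to $x = 0$ grows only polylogarithmically in $M$; the required spread is compatible with keeping all $k_j$ within an $O(M^2)$ window. The quantitative extrapolation estimate is the substantive ingredient and matches the explicit construction given in Theorem~1 of \cite{Low2019}, whose argument the above sketch would mirror.
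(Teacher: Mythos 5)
Your proposal matches the paper's treatment of this statement: the paper does not prove it but imports it from \cite{Low2019} together with essentially the sketch you give --- symmetry of $U_2$ yields an error expansion in even powers of $1/k_j$, the coefficients $a_j$ are fixed by the Vandermonde system \eqref{eq:vandermonde}, and a Chebyshev-like choice of $\vec{k}$ (reproduced in \app{proof} as $k_j=\bigl\lceil \sqrt{8}M/\bigl(\pi\,\lvert\sin(\pi(2j-1)/8M)\rvert\bigr)\bigr\rceil \le 3M^2$) keeps $\|a\|_1$ small. Since the quantitative bound on $\|a\|_1$ is deferred to \cite{Low2019} both in your sketch and in the paper itself, your argument is correct and takes essentially the same route.
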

    The details of the proof can be seen in~\cite{Low2019Multiproduct}, but at a high level, the MPF $U_{2,m}$ is a Richardson extrapolation of $U_2$ with respect to the Trotter step size parameter $1/k$. Such an extrapolation is possible for arbitrary $m$ because there exists an error series~\cite{Blanes1999Extrapolation} 
    \begin{equation}
        U_2^k(t/k) - U(t) = \sum_{j=1}^\infty E_{2j+1} \frac{t^{2j+1}}{k^{2j}}
    \end{equation}
    with $E_{2j+1}$ independent of $k$ (but not $t$ generically). The existence of this series suffices for a $1/k \rightarrow 0$ Richardson extrapolation~\cite{Sidi2003Richardson}. In particular, cancellation occurs for coefficients $a_j$ satisfying the following Vandermonde linear system.
    \begin{equation} \label{eq:vandermonde}
        \begin{pmatrix} 
            1 &\cdots &1 \\
            {k_1}^{-2} &\cdots &{k_m}^{-2} \\
            \vdots &\ddots & \vdots\\
            {k_1}^{-2m+2} &\cdots &{k_m}^{-2m+2} \\
            \end{pmatrix}
            \begin{pmatrix}
            a_1\\
            a_2\\
            \vdots\\
            a_m
            \end{pmatrix}
            =
            \begin{pmatrix}
            1 \\ 0 \\ \vdots \\ 0
        \end{pmatrix}
    \end{equation}
    Though the matrix is ill-conditioned, this is irrelevant to the matrix inversion, as the inverse Vandermonde matrix admits an analytic solution that may be reasoned from the theory of polynomial interpolation. What matters for our application is the one-norm $\|a\|_1$ of the coefficients, which serves as our condition number because of the need to amplify an amplitude of size $1/\|a\|_1$ in the LCU procedure. The content of Theorem~\ref{thm:low} is that Trotter steps $\vec{k}$ may be chosen such that $\|a\|_1$ is not too large. For time-ordered $U$, the analysis of~\cite{Blanes1999Extrapolation} does not carry over, although reasonable ``time dependent" MPFs can be defined heuristically. One of our motivations in constructing a clock space is to be able to eliminate time ordering an rigorously show these formulas work. 

    As discussed in~\cite{Low2019Multiproduct}, specific choices of $k_j$ can be found numerically to minimize $\|a\|_1$, and this may be the best approach in practice. However, for our analytical results it will be most appropriate to utilize the specific $k_j$ chosen in their constructive proof of well-conditioned MPFs. Thus, for all results we will take the powers $k_j$ as follows.
    \begin{equation} \label{eq:kj_def}
        k_j = \left\lceil \frac{\sqrt{8}m}{\pi}\abs{\sin\left(\frac{\pi(2j-1)}{8m}\right)}^{-1}\right\rceil,\qquad j = 1,\dots,m
    \end{equation}
    We will use these same coefficients even in the time dependent MPFs to be introduced in~\sec{TDMPF}. For error analysis, it will be useful to have simple, concrete bounds on $k_j$. We can achieve this by noting that $\sin(x)\leq x$ and $\sin(x)\geq 4x/5$ for $x\in[0,1]$. This gives the lower bound
    \begin{equation} \label{eq:kjLB}
        k_j\geq \left\lceil\frac{8^{3/2}m^2}{\pi^2(2j-1)}\right\rceil\geq\left\lceil\frac{8^{3/2}m^2}{\pi^2(2m-1)}\right\rceil>\frac{\sqrt{128}m}{\pi^2}>m
    \end{equation}
    and the upper bound
    \begin{equation}\label{eq:kjUB}
        k_j\leq \left\lceil \frac{5\times8\sqrt{8}m^2}{4(2j-1)\pi^2}\right\rceil\leq \left\lceil \frac{5\times8\sqrt{8}m^2}{4\pi^2}\right\rceil < 3m^2.
    \end{equation}
    Note the consistency of~\eqref{eq:kjUB} with the big-$O$ scaling of~\thm{low}.

\section{Finite Dimensional Clock Spaces} \label{sec:Finite_Clock_Space}
        
    We now introduce our finite dimensional clock space, which we will sometimes call the ``clock register" to distinguish from the continous version. We discretize the clock variable $t$ into $N_c = N_p\times N_q$ basis states, where $N_p \in \mathbb{Z}_+$ will represent the number of ``Trotter steps" used in the simulation. Each is further divided into $N_q \in \mathbb{Z}_+$ steps for reasons that will be discussed shortly. We label these orthonormal basis states $\ket{j}$ for $j \in [0,N_c-1] \cap \mathbb{Z}$. We will find it useful to consider, for our purposes, only periodic Hamiltonians. This is natural since translation operators like $E$ act most naturally on $\mathbb{R}$ or with periodic boundary conditions. Nonperiodic Hamiltonians can be accommodated by a simple reflection, defining $H(T + t) := H(T - t)$ for $t \in [0,T]$. In our work below, we will want $H(t)$ to be a differentiable bounded function within the grid points, and although the reflection introduces nonsmoothness, we can simply take one of the grid points to be the midpoint of simulation.

    For simplicity, and for lack of a compelling alternative, we will take these grid points $(t_j)_{j=0}^{N_c-1}$ to be uniformly spaced over the interval $[0,T]$: $t_j = T j/N_c$ (taking $N_c$ to be an even integer, so that the midpoint requirement discussed directly above is satisfied). We let $\delta t := T/N_c$ denote the grid width. We also take the natural discretization of $H(t)$ onto the clock space.
    \begin{equation} \label{eq:CH_def}
        H(t) \mapsto \sum_{j=0}^{N_c - 1} H_j \otimes \vert j\rangle\langle j \vert =: C(H)
    \end{equation}
    where $H_j \equiv H(t_j)$. Observe that $C(H)$ has no dependence on the evolution parameter; it is time independent. The notation $C(H)$ is used to suggest a controlled operation, where the control is on the clock register.

    Choosing the appropriate discretization of $E$ is somewhat more tricky, though the choice appears obvious in hindsight. Since $E$ acts as a derivative, it makes sense to take the discretized version to be a finite difference operator. For example,
    \begin{equation} \label{eq:Deltadef}
        \Delta := -i \frac{U_+ - U_-}{2 \delta t}
    \end{equation}
    where $U_+$ is the shift operator defined by $U_+\vert j\rangle = \vert j+1\rangle$ and $U_- = U_+^\dagger$ is the backwards shift (all increments taken mod $N_c$). This is the approach we ultimately take. However, we note that the authors began by considering a distinct approach via the logarithm of the translation operator
    \begin{equation}
        \tilde{\Delta} = i \log U_+.
    \end{equation}
    While apparently sensible, given the analogous relation between $E$ and shifts on the clock space, this operator is not nicely behaved. For example, its commutator with the ``position operator" $\sum_j t_j \vert j\rangle\langle j
    \vert$, rather than being near-identity, has long off-diagonal tails. This behavior may be of independent interest, but from now on we will concern ourselves with $\Delta$ as the discrete version of $-E$.

    With these choices, our full clock Hamiltonian becomes 
    \begin{equation}
        H_c := C(H) - \Delta. 
    \end{equation}
    Already, we can show some reasonable properties carry over to this setting.
    \begin{lemma} \label{lem:canonical_commutator}
        In the notation above, let $H:[0,T]\rightarrow \mathrm{Herm}(\mathcal{H})$ be a time dependent Hamiltonian on a finite-dimensional vector space $\mathcal{H}$. Then
        \begin{equation}
            [\Delta, C(H)] = i \Re\left(U_+\sum_j \frac{H_{j+1} - H_j}{\delta t} \otimes \vert j\rangle\langle j\vert\right)
        \end{equation}
        where $\Re(A) := (A + A^\dagger)/2$ denotes the Hermitian part of $A$. If $H$ is differentiable in each subinterval with bounded derivative, then we further have
        \begin{equation}
            \|[\Delta, C(H)]\| \leq \max_{t \in [0,T]} \|\dot{H}(t)\|.
        \end{equation}
    \end{lemma}
    We remark here the connections to the canonical commutation relation $[f(x),p] = i f'(x)$. The additional shift by $U_+$ is a relatively small deviation from a finite difference approximation being performed on the Hamiltonian. The proof is relatively straightforward and provided in~\app{clocklemmas}.

    Having defined the clock space and Hamiltonian, we wish to prepare a suitable initial state. A seemingly adequate and natural choice is to take $\vert\psi_0\rangle\otimes \vert0\rangle$, where $\vert\psi_0\rangle$ is the initial state of the system of interest and $\vert0\rangle$ is the clock state at the initial time $t = 0$. However, problems immediately arise which can be traced to the fact that the continuous version of $\ket{0}$ is $\delta(t)$, which is not a normalizable state vector. This formal problem finds its way into the discrete setting, in that the finite difference $\Delta$ does not properly compute a derivative of $\ket{0}$. Thus, $\Delta$ fails to translate $\ket{0}$ properly into later times, and the time dependent simulation fails.

    To fix this issue, we take a cue from the continuous setting, where the best we can do is take a wavepacket of small enough width to suit our purposes. For simplicity, this wavepacket may as well be Gaussian, with some width $\sigma$ to be chosen with care. Thus, we introduce Gaussian functions
    \begin{equation}
        \phi_{\mu}(t;\sigma) = \frac{1}{\sqrt{\mathcal{N}}}e^{-\lvert t -\mu\rvert_c^2/\sigma^2}.
    \end{equation}
    of width $\sigma \in \mathbb{R}_+$ and center $\mu \in [0,T)$. Here $\lvert\cdot\rvert_c$ is the shortest distance to $0$ modulo $T$,
    \begin{equation}
        \lvert t\rvert_c := \min\left\{\lvert t\rvert, \lvert T - t\rvert\right\}
    \end{equation}
    so that, with $0$ and $T$ identified, $\phi_\mu$ is smooth everywhere except $\mu + T/2\mod{T}$. Moreover, $\mathcal{N} \in \mathbb{R}_+$ is chosen such that the discretized vector
    \begin{equation} \label{eq:gaussian-state}
        \vert\phi_\mu \rangle = \sum_j \phi_\mu (t_j;\sigma) \vert j\rangle
    \end{equation}
    is normalized in the Euclidean sense (i.e., a quantum state vector). Technically, $\mathcal{N}$ has some dependence on $\mu$, but in our case we will only consider $\mu = t_j$ for some $j$, in which case $\mathcal{N}$ only depends on parameters such as $N_c$ and $\sigma$. Because of this choice, we will more simply write $\vert\phi_j\rangle \equiv \vert \phi_{t_j} \rangle$.  

    We are now ready to more clearly elucidate the overall strategy of the clock space construction. \fig{clock_schematic} gives a schematic of the relevant components. As stated above, each of the $N_p$ should be thought of as a single Trotter step in the evolution under $H(t)$. The $N_q$ subintervals ensure that $\delta t$ is sufficiently small such that the approximation of $\Delta$ to a derivative of $\phi_j$ holds. In particular, we will desire $\sigma \gg \delta t$. On the other hand, we want the variation of $H$ within the envelope of $\phi_j$ to be small. That is, we want $\sigma < T/N_p$. Because, presumably, we've chosen each Trotter step sufficiently small, this ensures that $H$ is approximately constant over the bulk of $\vert\phi_j\rangle$. Of course, we will want to ensure all of the above conditions with as few resources, such as clock register states, as possible.
    \begin{figure}
        \centering
        \includegraphics[scale=0.2]{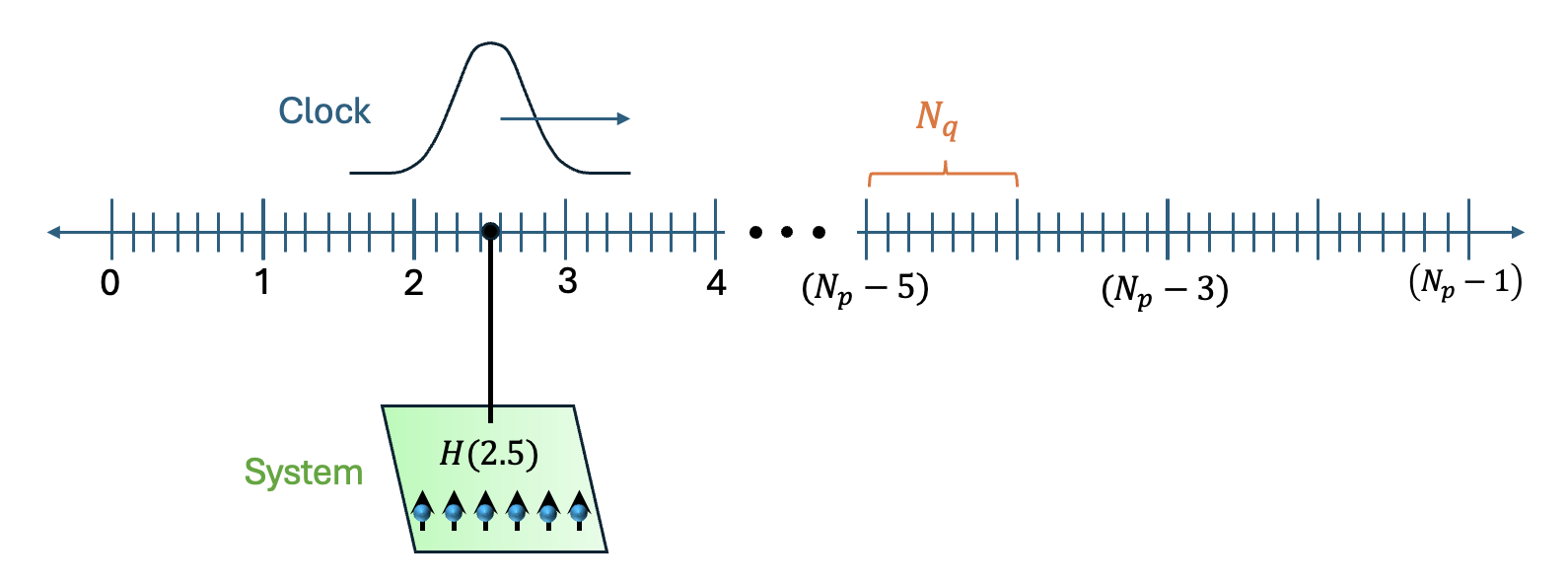}
        \caption{Schematic of the discrete clock Hilbert space. The clock register has an initially prepared Gaussian state which is translated uniformly under the clock Hamiltonian. Its location controls the Hamiltonian applied to the system of interest. The Hamiltonian varies little over each of the $N_p$ large steps, and the Gaussian is wide compared to the $N_q$ subdivisions within each large step. }
        \label{fig:clock_schematic}
    \end{figure}

    We now begin to characterize the simulation error in using the clock space for approximating $U(T,0)$. First, it will be helpful to have a characterization of the size of the normalization $\mathcal{N}$.
    \begin{lemma}\label{lem:normalization}
        In the notation above, the normalization constant $\mathcal{N} \in\mathbb{R}_+$ for Gaussian states $\vert\phi_j\rangle$ peaked at $\mu = t_j$ satisfies
        \begin{equation}
            \frac{1}{\sqrt{\mathcal{N}}} \in O(\sqrt{\delta t/\sigma}). 
        \end{equation}
    \end{lemma}
    The proof is provided in~\app{clocklemmas}. With this technical lemma in hand, we turn to showing that $\Delta$ indeed acts as a generator of translations on the clock space for $\vert\phi_j\rangle$, provided $\sigma$ is large relative to $\delta t$ and that the Gaussian is not truncated by small $T$.
    \begin{lemma} \label{lem:Delta_effect}
        In the notation introduced in this section, for any $m \in \mathbb{Z}_+$ we have
        \begin{equation}
            e^{i \Delta m\delta t}\vert\phi_j\rangle = \vert\phi_{j+m}\rangle + O\left(m(\delta t/\sigma)^2 + m\sqrt{\delta t/\sigma} e^{-(T/2\sigma)^2}\right)
        \end{equation}
        where the asymptotics $O$ are understood to be taken as $\delta t/\sigma \rightarrow 0$ and $\sigma/T \rightarrow 0$.
    \end{lemma}
    \begin{proof}
        Performing a 1st order Taylor expansion of the exponential,
        \begin{equation} \label{eq:Delta_Taylor_poly}
            e^{i \Delta \delta t}\vert\phi_j\rangle = \vert\phi_j\rangle + i \delta t \Delta \vert\phi_j\rangle + R_1(\delta t)\vert\phi_j\rangle,
        \end{equation}
        where $R_1$ is the Taylor remainder operator
        \begin{equation} \label{eq:R1_Delta2}
            R_1(\delta t) = \delta t \int_0^{\delta t} \frac{\partial^2}{\partial \tau^2}e^{-i \Delta \tau} d\tau = - \int_0^{\delta t} e^{-i \Delta \tau} d\tau (-\delta t \Delta^2).
        \end{equation}
        Thus, the error can be bounded, via the triangle inequality for integrals, as
        \begin{equation}
            \| R_1(\delta t) \vert\phi_j\rangle \| \leq \delta t^2 \|\Delta^2 \vert\phi_j\rangle\|
        \end{equation}
        The action of $\Delta$ on discretized functions $\ket{g}$ of the clock space is given by
        \begin{align}
        \begin{aligned}
            \Delta \ket{g} &= -i \sum_{j=0}^{N_c -1} g(t_j) \left(\frac{\ket{j+1} - \ket{j-1}}{2\delta t}\right) \\
            &= i \sum_j \frac{g(t_{j+1}) - g(t_{j-1})}{2\delta t} \ket{j} \\
            &= i \ket{D_{\delta t} g}.
        \end{aligned}
        \end{align}
        Here $D_{\delta t}f(x) := \frac{f(x+\delta t) - f(x-\delta t)}{2\delta t}$ is the symmetric finite difference of halfwidth $\delta t$ at point $x$. Thus, $\Delta^2 \vert\phi_j\rangle = - \vert D_{\delta t}^2 \phi_j\rangle$.
        We consider the error of this finite difference in terms of an approximation to the derivative for values of $t$ within $T/2 - 2 \delta t$ of $t_j$ in circle distance. On this part of the domain, $\phi_j(t \pm 2\delta t)$ is smooth, hence
        \begin{equation}
            \vert D_{\delta t}^2 \phi_j\rangle = \vert \partial_t^2 \phi_j\rangle + O(\delta t^2 \phi_j^{(4)})
        \end{equation}
        where the superscript $(4)$ indicates a fourth derivative. Near the edge of the Gaussian, the second-derivative property does not hold; however, these parts of the state vector have amplitude which is on the order $O(\mathcal{N}^{-1/2}e^{-(T/2\sigma)^2})$, which by Lemma~\ref{lem:normalization} is $O(\sqrt{\delta t/\sigma} e^{-(T/2\sigma)^2})$. This gets multiplied by $\delta t^{-2}$ due to the second finite difference $D_{\delta t}$ being taken. Taking the two sources independently as an upper bound, we have
        \begin{align}
        \begin{aligned}
            \|\Delta^2 \vert\phi_j\rangle\| &= \|\ket{\partial_t^2 \phi_j}\| +  O\left(\delta t^2/\sigma^4 + (\sigma \delta t^3)^{-1/2}e^{-(T/2\sigma)^2}\right) \\
            &\in O\left(1/\sigma^2 + \delta t^2/\sigma^4 + (\sigma \delta t^3)^{-1/2}e^{-(T/2\sigma)^2}\right)
        \end{aligned}
        \end{align}
        where $\sigma^{-4}$ comes from the four derivatives of the Gaussians. Thus, the total Taylor remainder may be upper bounded using~\eqref{eq:R1_Delta2} as
        \begin{equation}
            \| R_1(\delta t) \vert\phi_j\rangle\| \in O\left((\delta t/\sigma)^4 + \sqrt{\delta t/\sigma} e^{-(T/2\sigma)^2}\right).
        \end{equation}
        
        To complete the proof we return to the linear Taylor expansion in~\eqref{eq:Delta_Taylor_poly}. Using similar reasoning to above,
        \begin{align}
        \begin{aligned}
            \vert\phi_j\rangle - i\delta t\Delta \vert \phi_j\rangle &= \vert\phi_j\rangle + \delta t \vert D_{\delta t} \phi_j\rangle \\
            &= \vert\phi_j\rangle + \delta t\vert \partial_t \phi_j\rangle + O\left((\delta t/\sigma)^2 + \sqrt{\delta t/\sigma} e^{-(T/2\sigma)^2}\right).
        \end{aligned}
        \end{align}
        Finally, what remains is a linear approximation to $\vert\phi_{j+1}\rangle$, with error also $(\delta t/\sigma)^2$. Keeping only the leading terms, notice that the Taylor remainder error is subdominant. Altogether,
        \begin{equation}
            e^{-i \Delta \delta t}\vert\phi_j\rangle = \vert\phi_{j+1}\rangle + O\left((\delta t/\sigma)^2 + \sqrt{\delta t/\sigma} e^{-(T/2\sigma)^2}\right).
        \end{equation}
        So far, we've proved the result for $m = 1$. The full result follows by noting that $e^{-i\Delta m \delta t} = (e^{-i \Delta \delta t})^m$ and taking, as upper bound, $m$ times the error of a single step.
    \end{proof}
    We note that the error in $\Delta$ generating translations comes from two sources: the discretization at small scales and the boundary effects at large scales. We might name these, in the language of lattice field theory, ultraviolet and infrared truncation effects, respectively.

    Our next intermediate result will be concerned with the evolution of $C(H)$ controlled on the Gaussian state $\vert\phi_j\rangle$. We want the result to be, approximately, an evolution under $H(t_j)$ on the main register of interest. In what follows, it will be convenient to take $\tau := T/N_p$ as the time duration of a larger subdivision of steps.
    \begin{lemma} \label{lem:C(H)_effect}
        Let $H:[0,T]\rightarrow \mathrm{Herm}(\mathcal{H})$ be a bounded differentiable function with bounded derivative. For any $\eta \in \mathbb{R}$, we have
        \begin{equation}
        \begin{split}
            e^{-i C(H) \eta} \vert\psi\rangle \vert\phi_j\rangle &= e^{-i H(t_j)\eta}\vert\psi\rangle \vert\phi_j\rangle \\
            &+ O\left(\eta \tau \max_{t\in[0,T]}\|\dot{H}(t)\| + (1 + \eta \max_{t\in[0,T]} \|H(t)\|) e^{-\tau^2/4\sigma^2}\right)
        \end{split}
        \end{equation}
        where $\tau := T/N_p$.
    \end{lemma}
    \begin{proof}
        We begin by grouping the terms of $C(H)$ into two chunks: one with significant overlap with the Gaussian, the other with small overlap. Specifically, we take $C(H) = H_\mathrm{av} + H_\perp$, with
        \begin{align}
        \begin{aligned}
            H_\mathrm{av} :&= \sum_{k = j - N_q/2}^{j+N_q/2 -1} H_k \otimes \vert k\rangle\langle k\vert \\
            H_\perp :&= C(H) - H_\mathrm{av}.
        \end{aligned}
        \end{align}
        Because $H_\mathrm{av}$ and $H_\perp$ commute, we can Trotterize with no error
        \begin{equation}
            e^{-i C(H) \eta}\vert\psi\rangle\vert\phi_j\rangle = e^{-i H_\perp \eta} e^{-i H_\mathrm{av} \eta}\vert\psi\rangle\vert\phi_j\rangle.
        \end{equation}
        We will show that the $H_\mathrm{av}$ term gives approximately $H(t_j)$, while $H_\perp$ acts as approximately the identity (with the right parameter values).

        First, consider $e^{-i H_\mathrm{av}\eta}$. Define $P_j = \sum_k \vert k\rangle\langle k\vert$ as the projector onto the clock states on which $H_\mathrm{av}$ has support ($k \in \mathbb{Z} \cap [j - N_q/2, j + N_q/2 -1]$). We have
        \begin{equation}
            \|e^{-i H_\mathrm{av} \eta} - e^{-i H_j \otimes P_j \eta} \| \leq \eta \|H_\mathrm{av} - H_j \otimes P_j\|.
        \end{equation}
        Meanwhile,
        \begin{equation}
            \left\|H_\mathrm{av} - H_j \otimes P_j\right\| = \left\|\sum_{k = j - N_q/2}^{j+N_q/2 -1} (H_k - H_j)\otimes \vert k\rangle\langle k\vert\right\| = \max_k \|H_k - H_j\|
        \end{equation}
        By an simple Taylor bound, $\| H_k - H_j\| \leq (\tau/2) \max_t \|\dot{H}(t)\|$, were the max is over $[t_k, t_j]$ (taking the appropriate ordering of $t_j, t_k$ if needed). We can therefore say
        \begin{equation}
            \|e^{-i H_\mathrm{av} \eta} - e^{-i H_j \otimes P_j \eta} \| \leq \eta \tau \max_{t\in[0,T]}\|\dot{H}\|
        \end{equation}
        so that, up to this error, we can replace a simulation by $H_\mathrm{av}$ with $H_j\otimes P_0$. Moving on to this situation, we have
        \begin{equation}
            e^{-i H_j \otimes P_0 \eta} \vert\psi\rangle \otimes \vert\phi_j\rangle 
            = e^{-i H_j \eta}\vert\psi\rangle P_j\vert\phi_j\rangle + \vert\psi\rangle (I-P_j)\vert\psi_j\rangle.
        \end{equation}
        Thinking of $\sigma < \tau$ and taking $\tau/\sigma$ increasing, we have $P_0 \vert\phi_j\rangle = \vert\phi_j\rangle + O\left(e^{-\tau^2/4\sigma^2}\right)$. Thus,
        \begin{equation}
            e^{-i H_j \otimes P_0 \eta} \vert\psi\rangle \vert\phi_j\rangle = e^{-i H_j \eta}\vert\psi\rangle \vert\phi_j\rangle + O\left(\eta \tau \max_{t\in[0,T]}\|\dot{H}\| + e^{-\tau^2/4\sigma^2}\right)
        \end{equation}
        
        For the remainder of the proof take, $\vert\psi'\rangle = e^{-i H_j \eta}\vert\psi\rangle$ for notational convenience. We now consider the action of $H_\perp$ on the remaining state, which we anticipate to be small. First,
        \begin{equation}
            \left\|e^{-i H_\perp \eta}\vert\psi'\rangle\vert\phi_j\rangle - \vert\psi'\rangle\vert\phi_j\rangle\right\| \leq \eta \|H_\perp \vert\psi'\rangle\vert\phi_j\rangle\|.
        \end{equation}
        Let $\mathcal{J}$ be an index set for all the time steps included in the summation $H_\mathrm{av}$. We have
        \begin{align}
        \begin{aligned}
            \|H_\perp\vert\psi'\rangle\vert\phi_j\rangle\| &= \left\|\sum_{k\notin \mathcal{J}} H_k \vert\psi'\rangle \vert k\rangle \langle k\vert\phi_j\rangle\right\| \\
            &\leq \sqrt{\sum_{k\notin\mathcal{J}} \frac{1}{\mathcal{N}} e^{-2 \vert t_j - t_k\vert_c^2/\sigma^2}\|H_k\|^2}.
        \end{aligned}
        \end{align}
        Employing a Hölder inequality on the inner product, followed by~\lem{normalization},
        \begin{align}
        \begin{aligned}
            \sqrt{\sum_{k\notin\mathcal{J}} \frac{1}{\mathcal{N}} e^{-2 \vert t_j - t_k\vert_c^2/\sigma^2}\|H_k\|^2} &\leq \max_{k\notin\mathcal{J}} \|H_k\| \sum_{k\notin \mathcal{J}} \frac{e^{-|t_j - t_k|_c^2/\sigma^2}}{\mathcal{N}}\\
            &\in O\left(\max_{t}\|H(t)\| (\delta t/\sigma)\sum_{k = N_q/2}^\infty e^{-k^2 \delta t^2/\sigma^2}\right).
        \end{aligned}
        \end{align}
        Following a similar procedure to before, we convert to an error function $\erf$ and take an exponential upper bound. Doing so gives
        \begin{equation}
            \|H_\perp \vert\psi'\rangle\vert\phi_j\rangle\| \in O\left(\max_{t\in[0,T]} \|H(t)\| e^{-\tau^2/2\sigma^2}\right)
        \end{equation}
        Thus, $e^{-i H_\perp \eta}$ acts trivially on this state up to $O\left(\eta \max_t\|H(t)\| e^{-(\tau/2\sigma)^2}\right)$. 
        
        Combining the errors together, we take the widest exponential $e^{-\tau^2/4\sigma^2}$ as a simple upper bound for all exponentials that appear. Putting all the error sources together gets us the result of the Lemma statement.
    \end{proof}
    With the previous two lemmas, we have the ingredients needed for a clock space simulation: controlled operations and time shifts. We combine them to show that our clock space indeed encodes time dependent dynamics.
    \begin{theorem} \label{thm:DiscreteClockErrorScaling}
        Let $H:[0,T]\rightarrow\mathrm{Herm}(\mathcal{H})$ be a time dependent Hamiltonian on a finite dimensional vector space $\mathcal{H}$, such that $H(t)$ as a function is bounded and differentiable with bounded derivative. Then, the clock Hamiltonian, with Gaussian input $\vert\phi_0\rangle$ approximately applies the time evolution operator $U(T,0)$ to an initial state $\vert\psi_0\rangle \in \mathcal{H}$. Precisely, 
        \begin{equation}
        \begin{split}
            &e^{-i H_c T}\vert\psi_0\rangle\vert\phi_0\rangle= \left(U(T,0)\vert\psi_0\rangle\right) \vert\phi_0\rangle \\
            &+O\left(T \delta t/\sigma^2 + \sqrt{N_c T/\sigma} e^{-T^2/4\sigma^2} + \max_{t}\|\dot{H}\| \frac{T^2}{N_p} + e^{-\tau^2/4\sigma^2}(N_p + \max_{t}\|H\|T)\right)
        \end{split}
        \end{equation}
    \end{theorem}
    \begin{proof}
        Let $\tau = T/N_p$. We begin with a first-order Trotterization of $H_c$ into $N_p$ steps.
        \begin{equation}
            e^{-i H_c T} = \left(e^{-i \Delta \tau}e^{-i C(H) \tau}\right)^{N_p} + O\left(\max_{t\in[0,T]} \|\dot{H}(t)\| \frac{T^2}{N_p}\right)
        \end{equation}
        With initial state $\vert\psi_0\rangle\vert\phi_0\rangle$, combining Lemmas~\ref{lem:C(H)_effect} and~\ref{lem:Delta_effect} gives the following error for a single Trotter step.
        \begin{equation}
        \begin{split}
            &e^{-i \Delta \tau}e^{-i C(H)\tau} \vert\psi_0\rangle\vert\phi_0\rangle = e^{-i H_0\tau}\vert\psi_0\rangle \vert\phi_{N_q}\rangle  \\
            &+O\left(\tau \delta t/\sigma^2 + \sqrt{N_q \tau/\sigma} e^{-T^2/4\sigma^2} + \tau^2 \max_{t}\|\dot{H}\| + (1 + \tau \max_{t}\|H\|) e^{-\tau^2/4\sigma^2}\right).
        \end{split}
        \end{equation}
        Thus, after all $N_p$ steps, we can multiply the single step error above to get an upper bound of
        \begin{equation}
        \begin{split}
            &\left(e^{-i \Delta \tau}e^{-i C(H) \tau}\right)^{N_p}\vert\psi_0\rangle\vert\phi_0\rangle = e^{-i H_{N_q(N_p-1)}\tau}\dots e^{-i H_{N_q}\tau}e^{-i H_0 \tau}\vert\psi_0\rangle\vert\phi_0\rangle \\
            &+ O\left(T \delta t/\sigma^2 + \sqrt{N_c T/\sigma} e^{-T^2/4\sigma^2} + \max_{t}\|\dot{H}\| \frac{T^2}{N_p} + e^{-\tau^2/4\sigma^2}(N_p + \max_{t}\|H\|T)\right)
        \end{split}
        \end{equation}
        The right side, without the error, is a 1st order Suzuki Trotter splitting, which approximates $U(T,0)$ to order $\max_{t\in[0,T]} \|\dot{H}(t)\| T^2/N_p$. This can be absorbed into the third term of the big-$O$. This gives the result stated in the Theorem.
    \end{proof}
    With this result in hand, we now show that the parameters $(N_p, N_q, \sigma)$ of the clock can be chosen such that any desired degree of approximation to $U(T,0)$ can be achieved.
    \begin{theorem} \label{thm:clock_limit}
        In the context of the previous theorem, for any $\epsilon \in \mathbb{R}_+$, there exists clock parameters $(N_p, N_q, \sigma)$ such that 
        \begin{equation}
            \left\|e^{-i H_c T}\vert\psi_0\rangle\vert\phi_0\rangle - U(T,0)\vert\psi_0\rangle\vert\phi_0\rangle\right\| < \epsilon
        \end{equation}
        with $(N_p, N_q)$ scaling as
        \begin{align}
        \begin{aligned}
            N_p \in \Theta\left(\max_{t\in[0,T]}\|\dot{H}\|\frac{T^2}{\epsilon}\right),\quad N_q \in \Theta\left( \frac{\max_t\|\dot{H}\| T^2}{\epsilon^2} x^2\right),\quad \sigma\in \Theta\left(\frac{\epsilon}{\max_t \|\dot{H}\| T x}\right). 
        \end{aligned}
        \end{align}
        Here,
        \begin{align}
        \begin{aligned}
            x &:= \sqrt{\log\left(\frac{\Gamma\,T}{\epsilon}\right)} \\
            \Gamma &:= \max\left\{\max_{t\in[0,T]} \|\dot{H}\| T, \epsilon \max_{t\in[0,T]}\|H\| \right\}.
        \end{aligned}
        \end{align}
        In particular, there exists a sequence $(N_p(j), N_q(j), \sigma(j))$ of clock space parameters, such that 
        \begin{equation}
            \lim_{j\rightarrow\infty} \Tr_c(e^{-i H_c T}\vert\psi_0\rangle\vert\phi_0\rangle) = U(T,0)\vert\psi_0\rangle
        \end{equation}
        where $\Tr_c$ is a partial trace over the clock register, and $\Tr_c(\vert \Psi\rangle) \equiv \Tr_c(\vert\Psi\rangle\langle\Psi\vert)$.
    \end{theorem}
    \begin{proof}
        To ensure a total error within $\epsilon$ is achievable, it suffices to ensure that each of the five terms constituting the error in~\thm{DiscreteClockErrorScaling} is within $O(\epsilon)$ independently. From the onset, we will choose $N_p \in \Theta\left(\max_{t\in[0,T]}\|\dot{H}\| T^2/\epsilon\right)$ to satisfy the third term. 

        We next move to understand the necessary $\sigma$ scaling. We parameterize it as
        \begin{equation}
            \sigma = \tau/x
        \end{equation}
        with the hope that $x$ can be chosen to increase slowly (i.e., that the Gaussian states have width only slightly smaller than the Trotter step size). For this, we focus on the last two terms, since they have no $N_q$ dependence (which will set the smallest scales). We seek
        \begin{equation}
            \frac{\max_t \|\dot{H}\| T^2}{\epsilon} e^{-x^2/4} \in O(\epsilon),\qquad \max_t \|H\| T e^{-x^2/4} \in  O(\epsilon)
        \end{equation}
        which can be satisfied provided that $x$ is asymptotically lower bounded as
        \begin{align}
        \begin{aligned}
            x^2 &\in \Omega\left(\log \max\left\{\frac{\max_t \|\dot{H}\| T^2}{\epsilon^2},\frac{\max_t \|H\| T}{\epsilon}\right\}\right) \\
            &= \Omega\left(\log (\Gamma\,T /\epsilon)\right)
        \end{aligned}
        \end{align}
        This sets the scaling for $\sigma$.
        
        We move next to the first term to fix $N_q$, since the 2nd term is expected to be quite small. We require $T\delta t/\sigma^2 \in O(\epsilon)$, which is equivalent to
        \begin{equation}
            \frac{N_p x^2}{N_q} \in O( \epsilon).
        \end{equation}
        Therefore, there exists an $N_q \in \Theta\left(N_p x^2/\epsilon\right)$, satisfying the bound. All that remains is the second term, whose contribution can be easily shown to be subdominant compared to the other sources. Therefore, the choice of parameter scaling suffice to achieve the desired error $\epsilon$. 

        We have shown that any desired precision $\epsilon$ for dynamical simulation can be accommodated for by appropriate choice of clock space parameters. Taking a sequence $\epsilon_j \rightarrow 0$, we see there exists a sequence of clock space evolutions whose limit, restricted to the main register, is $U(T,0)$. 
    \end{proof}

    We've thus shown that finite clock space constructions exist which, for $H(t)$ differentiable on $[0,T]$, approximate the dynamics of $H$ to arbitrary precision. One expects that the differentiability condition can be somewhat relaxed, since it doesn't appear in the continuous setting. Any improvements in error analysis here will enhance the performance guarantees of the qubitization algorithm presented in the following section. 

\section{Time Dependent Qubitization} \label{sec:TDQubitization}
    In~\sec{Finite_Clock_Space}, we developed a clock space construction which encoded a time dependent Hamiltonian as a time independent one on an augmented, finite dimensional space. The removal of time ordering using a clock register opens the door for quantum algorithms for time independent Hamiltonian simulation to simulate the full clock-system dynamics directly. In particular, qubitization is an asymptotically optimal~\cite{Low2019Qubitization} simulation method that can only be applied to time independent $H$. In this section, we propose the simulation of time dependent Hamiltonians using qubitization on the augmented system. To be concrete, we will work with an input model in which $H(t)$ is a linear combination of fixed unitaries with time-varying coefficients. This describes, for example, Pauli matrices on $n$ qubits with fluctuating coefficients. 

    \subsection{Overview}
        We take our main register, encoding the quantum system of interest, and append $n_c$ qubits to provide a clock register of size $N_c = 2^{n_c}$. The product state $\ket{\psi_0}\ket{\phi_0}$ is prepared on the joint register, where $\ket{\psi_0}$ is the initial state of the main register and $\ket{\phi_0}$ is a Gaussian as per equation~\eqref{eq:gaussian-state}. Many protocols for preparing Gaussian states exist~\cite{Klco2020Preparation, Rattew2021Efficient,Rattew2022Functions,Iaconis2024Preparation, Grover2002Creating}. For our purposes, we will simply refer to the approach by Kitaev and Webb~\cite{Kitaev2009Wavefunction,Bauer2021Practical} as efficient enough for our purposes. The Gaussian in our application has nonnegligible support over $O(N_q)$ clock states, and their algorithm scales polynomially in the number of qubits $n_q = \log N_q$ over the Gaussian. This cost is negligible compared to the other simulation costs that we will discuss presently.

        Once the initial state is prepared, we employ qubitization to approximate $e^{-iH_c T}$ on the full register. Given $H(t)$ in LCU form, we need to express $H_c$ in LCU form as well, which is not immediate. This is done through several applications of the Signature Matrix Decomposition (see~\app{signature_matrix_decomposition}). We also truncate $\Delta$ at high frequencies to reduce computational cost, with little loss in accuracy. Details of the LCU decomposition are provided in the next subsection.

        Once $H_c$ is in LCU form, select $\mathtt{SEL}$ and prepare $\mathtt{PREP}$ circuits may be constructed to block encode $H_c$ as
        \begin{equation}
            H_c/\|c\|_1 = (\bra{0}\mathtt{PREP}^\dagger \otimes \openone) \mathtt{SEL} (\mathtt{PREP}\ket{0}\otimes \openone)
        \end{equation}
        where $\|c\|_1$ is the one-norm of the LCU coefficients. Standard qubitization can now be done on this block encoded Hamiltonian~\cite{Low2019Qubitization}. The $\mathtt{PREP}$ circuit must create a ``quasi-uniform" distribution over some number $N$ of states, in the sense that, on the LCU auxiliary register,
        \begin{equation}
            \ket{\mathtt{PREP}} = \sum_{j=1}^{K-1} \sqrt{\delta} \ket{j} + \sum_{j=K}^{N} \sqrt{\delta'} \ket{j}
        \end{equation}
        with $\delta, \delta', K$ and $N$ determined by parameters of simulation. Meanwhile the $\mathtt{SEL}$ circuit will need to apply controlled $U_i$ operations, where $U_i$ is a unitary in the $H(t)$ decomposition, and controlled signature matrices. These second operations can be done with classical, reversible comparator circuits implemented quantumly. Each $\mathtt{SEL}$ will also require a Quantum Fourier Transform and its inverse on the clock register.

    \subsection{LCU Block Encoding}
        We assume $H(t)$ is of the form
        \begin{equation} \label{eq:LCU_form_Ht}
            H(t) = \sum_{i=1}^L \alpha_i(t) U_i
        \end{equation}
        where $U_j$ are Hermitian and unitary (e.g., $n$-qubit signed Pauli operators) and $\alpha_j(t)$ are nonnegative, real-valued functions on $[0,T]$. When we discretize, the coefficients $\alpha_{ij} \equiv \alpha_i(t_j)$ will be particularly important. Expanding out $C(H)$ from equation~\eqref{eq:CH_def} using~\eqref{eq:LCU_form_Ht},
        \begin{align}
        \begin{aligned}
            C(H) &= \sum_{j = 0}^{N_c - 1} \left(\sum_{i=0}^{L-1} \alpha_{ij} U_i\right)\otimes \vert j\rangle\langle j\vert \\
            &= \sum_{i=0}^{L-1} U_i \otimes D_i
        \end{aligned}
        \end{align}
        where
        \begin{align}
            D_i := \sum_{j=0}^{N_c-1} \alpha_{ij} \vert j\rangle\langle j\vert
        \end{align}
        is a diagonal operator on the clock register. Let $\Lambda_i(\delta) \equiv \lceil\max_{j} \abs{\alpha_{ij}}/\delta\rceil$. Using a signature matrix decomposition (see~\app{signature_matrix_decomposition}) we can write
        \begin{align}
            D_i = \sum_{k=1}^{\Lambda_i(\delta)} \delta S_{ik}(\delta) + O(\delta)
        \end{align}
        for $\delta > 0$, where
        \begin{align}
            S_{ik}(\delta) = \sum_{j=0}^{N_c -1} (-1)^{k[k > \alpha_{ij}/\delta]} \ketbra{j}
        \end{align}
        and $[P]$ is the Boolean function for proposition $P$, with $[\mathrm{True}] = 1$ and $[\mathrm{False}]=0$. Thus, we obtain an LCU decomposition of $C(H)$ as 
        \begin{align} \label{eq:CH_LCU}
            C(H) = \delta \sum_{i = 1}^L \sum_{k = 1}^{\Lambda_i(\delta)} U_i \otimes S_{ik}(\delta) + O(L\delta)
        \end{align}
        The prepare circuit $\mathtt{PREP}$ is simple enough because the linear combination is uniform. Therefore, it can be accomplished using a Hadamard gate on each of 
        \begin{align}
            n_{C(H)} \in O\left( \log\sum_{i=0}^L \max_j \abs{\alpha_{ij}}/\delta\right)
        \end{align}
        auxiliary qubits needed for a binary encoding. The unitaries $U_i\otimes S_{ik}(\delta)$ can be selected using two different $\mathtt{SEL}$ circuits: one for the original $U_i$ (presumed available to us) and one for the signature matrices $S_{ik}(\delta)$. These unitaries can be constructed using classical comparator circuits provided that each $\alpha_{ij}$ is computable.
    
        We turn out attention now to $\Delta$, defined in~\eqref{eq:Deltadef}. Although already in LCU form, the coefficient has size $2/\delta t$ and is too large to be desirable. However, $\Delta$ may be truncated at high-frequencies without significant loss of accuracy, reducing the coefficient sizes. To show this, we start by converting $\Delta$ to Fourier space, i.e., diagonalizing via the Quantum Fourier Transform. The result may be computed by diagonalizing $U_+$, and is found to be
        \begin{align} \label{eq:diagDelta}
        \begin{aligned}
            \Delta &= \QFT\sum_{j=0}^{N_c - 1} \frac{N_c}{T} \sin\left(2\pi \frac{j}{N_c}\right) \vert j\rangle\langle j\vert \QFT^\dagger \\
            &= \QFT\sum_{j=-N_c/2}^{N_c/2 - 1} \frac{N_c}{T} \sin\left(2\pi \frac{j}{N_c}\right) \vert j\rangle\langle j\vert \QFT^\dagger
        \end{aligned}
        \end{align}
        where, in the second line, we define indices $-j = N_c - j$ for $j > 0$ and write the diagonalized $\Delta$ symmetrically about $j = 0$. The benefit of this parameterization is that small $\abs{j}$ correspond to low-frequency modes. Let $\Delta_J$ be $\Delta$ truncated at frequencies above those of the index $J \in [0,N_c/2] \cap \mathbb{Z}$.
        \begin{align} \label{eq:Delta_J}
           \Delta_J := \QFT\sum_{j=-J}^J \frac{N_c}{T} \sin\left(2\pi \frac{j}{N_c}\right) \vert j\rangle\langle j\vert \QFT^\dagger
        \end{align}
        The error in a clock space evolution using $\Delta_J$ rather than $\Delta$ is upper bounded by $T\norm{\Delta\ket{\phi_0} - \Delta_J\ket{\phi_0}}$, which can be evaluated and upper bounded as
        \begin{align} \label{eq:Delta_evolve_error2}
        \begin{aligned}
            T \norm{\Delta\ket{\phi_0} - \Delta_J\ket{\phi_0}} &= \bigg\|\sum_{\abs{j}> J} N_c \sin(2\pi \frac{j}{N_c}) \ketbra{j} \mathrm{QFT^\dagger \ket{\phi_0}}\bigg\| \\
           &\leq N_c \sqrt{\sum_{\abs{j}> J} \lvert\bra{j}\QFT^\dagger\ket{\phi_0}\rvert^2}.
        \end{aligned}
        \end{align}
        We thus desire a characterization of $\QFT^\dagger\ket{\phi_0}$, which we naturally expect to be another Gaussian up to errors arising from the difference between discrete and continuous Fourier Transforms. This analysis was performed in Appendix C of~\cite{Rendon2024Improved}, and we adapt that work to our present situation. As the reference shows, the error in each component $j$ arises from three sources: 
        \begin{enumerate}
            \item Truncation of the time variable to $O(T)$, which we denote $\epsilon_\mathrm{trunc}$.
            \item Truncation of the frequency variable to $O(N_c/T)$ (``aliasing"), which we denote $\epsilon_\mathrm{alias}$.
            \item Differences in normalizing in the continuum vs the discrete setting, which we denote $\epsilon_\mathrm{norm}$.
        \end{enumerate}
    
        In our notation and setting, Rendon et al.~\cite{Rendon2024Improved} show that these errors satisfy the following asymptotic bounds.
        \begin{align}
        \begin{aligned}
            \epsilon_\mathrm{trunc} &\in O\left(\sqrt{\frac{\sigma}{T}} e^{-\Omega(T^2/\sigma^2)}\right) \\
            \epsilon_\mathrm{alias} &\in O\left(\sqrt{\frac{\sigma}{T}} e^{-\Omega(N_c^2 \sigma^2/T^2}\right) \\
            \epsilon_\mathrm{norm} &\in O\left(e^{-\Omega(N_c)}\right)
        \end{aligned}
        \end{align}
        Let's take these errors to all be $O(\epsilon_{\QFT})$, with the required $\epsilon_\mathrm{QFT}$ to be determined. The results from Theorem 16 and Appendix C of~\cite{Rendon2024Improved} imply that
        \begin{equation}
            \QFT^\dagger \ket{\phi_0} =  \sum_{j= -N_c/2}^{N_c/2 - 1} \left(\sqrt{\frac{\pi N_c}{\mathcal{N}}} \frac{\sigma}{T}e^{-(\pi j \sigma/T)^2} + O (\epsilon_\mathrm{QFT})\right) \ket{j}.
        \end{equation}
        With in hand, we return to~\eqref{eq:Delta_evolve_error2}. First,
        \begin{equation}
            \lvert\bra{j}\QFT^\dagger \ket{\phi_0}\rvert^2 = \frac{\pi N_c}{\mathcal{N}} \frac{\sigma^2}{T^2} e^{-2(\pi j\sigma/T)^2} + O(\sqrt{\frac{\sigma}{T}}e^{-(\pi j \sigma/T)^2}\epsilon_\mathrm{QFT})
        \end{equation}
        where we assume the error $\epsilon_\mathrm{QFT}$ is smaller asymptotically than the amplitude itself, to be justified. Taking the sum over high frequencies,
        \begin{align}
        \begin{aligned}
            \sqrt{\sum_{\abs{j}>J} \lvert \bra{j}\QFT^\dagger \ket{\phi_0}\rvert^2} &\in O\left(\sqrt{\frac{N_c}{\mathcal{N}}}\frac{\sigma}{T} e^{-\Omega(J^2 \sigma^2/T^2)} + \sqrt{\frac{\sigma}{T}} e^{-\Omega(J^2 \sigma^2/T^2)} \epsilon_\mathrm{QFT}\right) \\
            &\subseteq O\left(\sqrt{\frac{\sigma}{T}} e^{-\Omega(J^2 \sigma^2/T^2)}(1 + \epsilon_\mathrm{QFT})\right).
        \end{aligned}
        \end{align}
        We next observe that $\epsilon_\mathrm{QFT} \in O(1)$ by previous assumptions, and can now be removed. From~\eqref{eq:Delta_evolve_error2}, we get the full simulation error by multiplying by $N_c$
        \begin{equation}
            \epsilon_J \in O\left(N_c \sqrt{\frac{\sigma}{T}}e^{-\Omega(J^2\sigma^2 /T^2)}\right).
        \end{equation}
        In order for $\epsilon_J \in O(\epsilon)$, we want the cutoff $J$ to satisfy
        \begin{equation}
            e^{-J^2 \sigma^2/T^2} \in O\left(\sqrt{\frac{T}{\sigma}}\frac{\epsilon}{N_c}\right)
        \end{equation}
        which can be satisfied provided $J$ scales as
        \begin{equation}
            J \in \Theta\left(\frac{T}{\sigma} \left(\log T/\sigma + \log N_c + \log1/\epsilon\right)\right) \subseteq \tilde{\Theta}(T/\sigma).
        \end{equation}
        Letting $\tilde{\Delta} \equiv \Delta_J$ for this choice of $J$, we now switch to considering the simulation of $\tilde{\Delta}$. Let $\delta' > 0$, and let $\Gamma(\delta') := \lceil (N_c/T\delta ') \sin(2\pi J/N_c)\rceil$. We have
        \begin{equation}
            \sum_{j=-J}^J \frac{N_c}{T} \sin\left(2\pi \frac{j}{N_c}\right)\ketbra{j} = \delta' \sum_{\ell=1}^{\Gamma(\delta')} S_k^{(\Delta)}(\delta') + O(\delta')
        \end{equation}
        where
        \begin{equation}
            S_k^{(\Delta)}(\delta') := \sum_{j=-J}^J \sgn(j) (-1)^{k[k> (N_c/T\delta')\sin(2\pi j/N_c)]}.
        \end{equation}
        Defining the unitary $V_\ell(\delta') := \QFT\,S_\ell^{(\Delta)}(\delta')\QFT^\dagger$, we have obtained an LCU decomposition of $\Delta$. The $\mathtt{PREP}$ circuit is, as with $C(H)$, only a column of Hadamards on 
        \begin{equation}
            n_\Delta \in O\left(\log \left((N_c/T\delta') \sin (2\pi J/N_c)\right)\right) \subseteq \tilde{O}\left(\log\frac{1}{\sigma\delta'}\right)
        \end{equation}
        auxiliary qubits. Meanwhile the $\mathtt{SEL}$ circuit  may be constructed as $\QFT\,\mathtt{SEL}' \,\QFT^\dagger$, where $\mathtt{SEL}'$ is a select circuit using the $S_\ell^{(\Delta)}$ signature matrices that can, as before, be implemented with comparator circuits that compute sine. 
        
        Combining with~\eqref{eq:CH_LCU}, we obtain an approximate LCU decomposition of the approximate clock Hamiltonian $\tilde{H}_c$.
        \begin{equation}
            \tilde{H}_c = \delta \sum_{i=1}^L\sum_{k=1}^{\Lambda_i(\delta)} U_i \otimes S_{ik}(\delta) + \delta' \sum_{\ell = 1}^{\Gamma(\delta')} \openone \otimes V_\ell(\delta') + O(\epsilon/T + L\delta + \delta')
        \end{equation}
        To achieve an $\epsilon$-accurate simulation, we will require $\delta \in O(\epsilon/LT)$ and $\delta' \in O(\epsilon/T)$. The $1$-norm $\norm{c}_1$ of all of the coefficients is given by
        \begin{align}
        \begin{aligned}
            \norm{c}_1 &= \delta \sum_{i=0}^{L-1} \Lambda_i(\delta) + \delta' \Gamma(\delta') \\
            &\in O\left(\sum_{i=0}^{L-1} \max_j \abs{\alpha_{ij}} + \frac{N_c}{T}\sin(2\pi J/N_c)\right) \\
            &\subseteq O\left(\norm{\alpha}_{\infty,1}^\mathrm{rev} + J/T\right) \\
            &\subseteq \tilde{O}\left(\norm{\alpha}_{\infty,1}^\mathrm{rev} + \sigma^{-1}\right) \\
            &\subseteq \tilde{O}\left(\norm{\alpha}_{\infty,1}^\mathrm{rev} + \frac{\max_t \|\dot{H}\| T}{\epsilon}\right)
        \end{aligned}
        \end{align}
        where $\norm{\alpha}_{\infty,1} \equiv \sum_{i=0}^{L-1} \max_t \abs{\alpha_i(t)}$ and $\tilde{O}$ suppresses multiplicative logarithmic factors. Thus, the number of queries to $\mathtt{SEL}$ and $\mathtt{PREP}$ circuits in an LCU encoding scales as
        \begin{equation} \label{eq:qubitization-complexity}
        \boxed{
            Q \in \tilde{O}\left(\norm{\alpha}_{\infty,1}^\mathrm{rev} T + \frac{\max_t \|\dot{H}\| T^2}{\epsilon} + \frac{\log 1/\epsilon}{\log\log1/\epsilon}\right)
        }\;.
        \end{equation}
        The number of auxiliary qubits needed for the clock register is 
        \begin{equation}
            n_c = \log N_p + \log N_q \in O\left(\log (\max_t \|\dot{H}\| T^2) + \log 1/\epsilon\right)
        \end{equation}
        while the number of auxiliary qubits needed for the LCU block encoding is given by
        \begin{align}
        \begin{aligned}
            n_\mathrm{LCU} &= n_{C(H)} + n_\Delta \\
            &\in O\left( \log \frac{\norm{\alpha}_{\infty,1}^\mathrm{rev}}{\delta} + \log \frac{1}{\sigma\delta'}\right) \\
            &\subseteq O\left(\log \frac{L \norm{\alpha}_{\infty,1}^\mathrm{rev} T}{\epsilon} + \log \frac{\max_t \|\dot{H}\| T^2}{\epsilon^2}\right) \\
            &\subseteq O\left(\log L + \log (\norm{\alpha}_{\infty,1}^\mathrm{rev} T) + \log (\max_t \|\dot{H}\|T^2) + \log 1/\epsilon\right)
        \end{aligned}
        \end{align}
        for a total number of auxiliary qubits $n \in O(n_\mathrm{LCU})$.

    \subsection{Discussion} \label{sec:Qubitization_Discussion}
        In this section, we've provided an algorithm for time dependent simulation by qubitization for instances when the Hamiltonian is input as a linear combination of unitaries. We provide a procedure for constructing an LCU-block encoding on the augmented clock space, and use the errors analysis of~\sec{Finite_Clock_Space} to provide a query complexity for the method. 

        The presence of the Trotter term in the complexity~\eqref{eq:qubitization-complexity} is unfortunate because, if it were absent, the query complexity would match lower bounds for simulation in $T$ and $\epsilon$. As a note of optimism, we believe this term is not due to the method itself but a fault of the analysis. Specifically, forcing our Hamiltonian to vary slowly over the $N_p$ larger subdivisions should prove unnecessary. This was done essentially to make the evolution consistent across the clock Gaussian state. In reality, the Hamiltonian should only need to vary smoothly over the smallest increment $\delta t$. We are currently investigating modifications to the clock scheme that would make this more apparent.
        
        Besides an LCU encoding, other natural block encodings of $H_c$ may be possible. For example, a very general input model for $H(t)$ is to take it as a $d$-sparse matrix with query access to the nonzero entries. This seems quite promising an avenue to take, because then $H_c = C(H) + \Delta$ is $d+2$ sparse, and there is a natural way to query the entries of $H_c$. Hence, such a Hamiltonian should immediately simulatable by qubitization (or other quantum walk methods). The trouble is that the largest entry in absolute value $\|H_c\|_\mathrm{max}$ of $H_c$ comes from $\Delta$, which is of size $N_c/2T$. This is too large to yield an effective simulation algorithm. Of course, there is something odd about the need to care for the operator norm $\|\Delta\|$, since the typical state being acted on is a Gaussian $\vert\phi_j\rangle$. Thinking of $\Delta$ in frequency space, modes of frequency $\Omega(\sigma^{-1})$ should not be relevant for Gaussian states of width $O(\sigma)$ on the clock register. This suggests that a high-frequency truncation of $\Delta$, say $\tilde{\Delta}$ would act approximately the same on the Gaussians while decreasing the norm. However, there is no guarantee that the modified operator, $\tilde{\Delta}$, is sparse in the basis of clock times. Perhaps considering a reduced clock Hamiltonian $\tilde{H}_{ij} =\langle\phi_i\vert H_c\vert\phi_j\rangle$, with all small elements set to zero, would have the sparseness conditions required, along with a subspace norm of $\norm{\Delta}_\phi \in O(\sigma^{-1})$.

\section{Time Dependent Simulation by Multiproduct Formulas} \label{sec:TDMPF}
    As suggested in~\sec{MultiproductFormulas}, MPFs have already been considered extensively in the Hamiltonian simulation community~\cite{Childs2012LCU,Faehrmann2022Randomizing,Zhuk2023DynamicMPF}. However, one of the deficiencies of MPFs is that they have yet to be generalized, formally, for use in time dependent Hamiltonian simulations. Because $U$ generally has time ordering, the techniques used in~\cite{Blanes1999Extrapolation} involving Baker-Campbell-Hausdorff-type expansions do not carry over directly. An approach based instead on the Magnus expansion might be expected to work in its place, but no subset of terms in the expansion represents the exact evolution separated from error terms. Without this generalization, MPFs cannot be applied to interaction picture algorithms as well as simulations of physical systems that have intrinsic time dependence. 
    
    It is relatively easy to propose a generalization of MPFs that would be expected to work well in the time dependent case, by Trotterizing the continuous clock Hamiltonian~\eqref{eq:cont-clock-Ham}. When this is done in $k_j$ steps, this amounts to replacing the $k_j$th power in~\thm{low} with a sequence of $k_j$ unitaries at each time slice. This heuristic argument motivates the following definition.
    \begin{definition}[Time Dependent Multiproduct Formulas] \label{defn:timedependentMPF}
        For finite dimensional $\mathcal{H}$ and $L:[0,T]^2\rightarrow L(\mathcal{H})$, let $L_p:[0,T]^2\rightarrow L(\mathcal{H})$ be a $p$th-order formula for $L$. Given $m \in \mathbb{Z}_+$, $\vec{k} \in \mathbb{Z}_+^m$, and  $a\in\mathbb{R}^m$, define the time dependent multiproduct formula $L_{m,p}:[0,T]^2\rightarrow L(\mathcal{H})$ to be
        \begin{equation*}
            L_{p,m}(t,t_0) := \sum_{j=1}^m a_j L_p^{(k_j)}(t,t_0)
        \end{equation*}
        where
        \begin{equation*}
            L_p^{(k)}(t,t_0) := \prod_{\ell = 0}^{k-1} L_p(t_{\ell+1}, t_{\ell})
        \end{equation*}
        and $t_{\ell} = t_0 + (t - t_0)\ell/k$.
    \end{definition}
    As a limiting case, observe that $L_{p,1} = L_p$ with $a_1 = 1$. The choice to take the $t_\ell$ as equally spaced is not entirely coincidental, for the same reason that, in the time independent setting, we take $U_2^{k}(t/k)$ instead of, say,
    \begin{align}
        \prod_{j=1}^k U_2(s_j t)
    \end{align}
    where $s = (s_1, \dots, s_k)$ is a probability vector. Taking a simple power of $k$ makes working with the BCH expansion much simpler. While these definitions could be applied in more general contexts, our interest in Hamiltonian simulation means we will consider $L = U$ to be a time evolution operator.

        We finally turn to the question of whether the time dependent MPFs of~\defn{timedependentMPF} may be constructed for improved approximants. At the beginning of this section, we mentioned the difficulty presented by time ordering in adopting the techniques from~\cite{Blanes1999Extrapolation}. The reader of the previous chapter may recognize that clock spaces may be used to remove time ordering, circumventing the issue. However, when the clock variable $t$ is continuous, the shift term $-E$ in the clock Hamiltonian is an unbounded operator, complicating a BCH-type analysis. We conjecture, and provide a heuristic argument, that time dependent MPFs indeed boost the approximation order for sufficiently smooth Hamiltonians.

        \begin{conjecture} \label{conj:TDMPF_existence}
            Let $H = \sum_{i = 1}^L H_i(t)$, and let 
            \begin{equation*}
                U_2(t + \tau, t) = \prod_{i = L}^1 e^{-i H_i(t + \tau/2)\tau}\prod_{i=1}^L e^{-i H_i(t + \tau/2)\tau}
            \end{equation*}
            be the symmetric, 2nd order Trotterized midpoint formula. Suppose each $H_i$ is $2m+1$ time differentiable. Then the time dependent multiproduct formula $U_{2,m}(t + \tau,t)$ with base formula $U_2$ approximates $U(t + \tau,t)$ to order $2m$ in $t$.
        \end{conjecture}
        We now discuss a potential path to proof of this conjecture. Without loss of generality, we take $t = 0$. Let $k\in \mathbb{Z}_+$, and consider a sequence of  discrete clock constructions on interval $[0,\tau]$, with parameters $(N_p(\ell), N_q(\ell), \sigma(\ell))$, such that $k$ always divides $N_c = N_p N_q$, and such that the limit reproduces the dynamics of $H(t)$ on the main register, as per~\thm{clock_limit}. Consider one of the elements of this sequence. Using the form of $H$ given in the conjecture statement, we may write
        \begin{equation}
            C(H) = \sum_{i=1}^L C(H_i).
        \end{equation}
        Thus, the clock Hamiltonian $H_c$ admits the following 2nd order symmetric Trotterization.
        \begin{equation}
            V_2(\tau) = e^{-i \Delta \tau/2} \left(\prod_{i=L}^1 e^{-i C(H_i) \tau/2} \prod_{i=1}^L e^{-i C(H_i) \tau/2}\right) e^{-i \Delta \tau/2}
        \end{equation}
        From~\cite{Blanes1999Extrapolation}, we have that
        \begin{equation}
            V(\tau) - V_2^k(\tau/k) = \sum_{j=1}^{m-1} \mathcal{E}_{2j+1}(\tau) \frac{\tau^{2j+1}}{k^{2j}} + \mathcal{E}(\tau,k)
        \end{equation}
        where $\mathcal{E} \in O(\tau^{2m+1})$ is analytic in $\tau$. Thus the standard, well-conditioned multiproduct formula $V_{2,m}$ of Theorem~\ref{thm:low} with base formula $V_2$ satisfies
        \begin{equation} \label{eq:MPF_clock_diff}
            V(\tau) - V_{2,m}(\tau) = \sum_{j=1}^m a_j \mathcal{E}(\tau,k_j).
        \end{equation}
        We now wish to look at the action on the main register. Applying equation~\eqref{eq:MPF_clock_diff} to the state $\ket{\psi}\ket{\phi_0}$ of the full register, where $\ket{\psi}$ is arbitary, and then taking the trace $\Tr_c$ over the clock register, one obtains
        \begin{equation} \label{eq:Tr_clock}
            \Tr_c(V(\tau) \ket{\psi}\ket{\phi_0}) - \Tr_c(V_{2,m}(\tau) \ket{\psi}\ket{\phi_0}) = \sum_{j=1}^m a_j E(\tau,k_j) (\ket{\psi})
        \end{equation}
        where $E(\tau,k)$ is a linear map on the main register defined by
        \begin{equation}
            E(\tau,k) (\ket{\psi}) := \Tr_c(\mathcal{E}(\tau,k)\ket{\psi}\ket{\phi_0}).
        \end{equation}
        The above holds for every clock space in the sequence defined by $(N_p(\ell), N_q(\ell), \sigma(\ell))$. Taking the limit as $\ell\rightarrow\infty$ of equation~\eqref{eq:Tr_clock} we may pass the limits through the finite sums and scalar multiplications  
        \begin{equation}
            \lim_{\ell\rightarrow\infty} \Tr_c(V(\tau) \ket{\psi}\ket{\phi_0}) - \lim_{\ell\rightarrow\infty} \Tr_c(V_{2,m}(\tau) \ket{\psi}\ket{\phi_0}) = \sum_{j=1}^m a_j \lim_{\ell\rightarrow\infty} E(\tau,k_j) \ket{\psi}
        \end{equation}
        provided that these limits exist. Indeed, by Theorem~\ref{thm:clock_limit}, 
        \begin{equation}
            \Tr_c(V(\tau) \ket{\psi}\ket{\phi_0}) = U(\tau,0)\ket{\psi}.
        \end{equation} 
        As for the MPF, taking $k$ steps of the Trotterization, we \emph{should} find that 
        \begin{equation}
            \lim_{\ell\rightarrow\infty} \Tr_c(V_2(\tau/k)^k \psi \ket{\phi_0}_c) = U_2^{(k)}(\tau,0)\ket{\psi}
        \end{equation}
        though this must be shown. This shouldn't be too hard, as the idea is clear: perform a sequene of clock shifts followed by 2nd order Trotter on the main register. By passing the limit through the multiproduct sum,
        \begin{equation}
            \lim_{\ell\rightarrow\infty} \Tr_c(V_{2,m}(\tau,0) \ket{\psi}\ket{\phi_0}_c) = U_{2,m}(\tau,0)\ket{\psi}.
        \end{equation}
        
        It remains to show that the limit $\lim_\ell E(\tau,k)$ exists, and moreover is in $O(\tau^{2m+1})$. This is where the main challenge lies. To show that the limit of a sequence with terms of order $O(\tau^{2m+1})$ is also $O(\tau^{2m+1})$, we can show that the $2m+1$ derivative is bounded at $\tau = 0$. Unfortunately, our current clock constructions have the width $\sigma$ of the clock state shrinking to infinity, which means the derivatives grow as well. If a different clock construction can be provided where the clock state can have width $\sigma \in O(1)$, a bound can be placed and thus the limit will be $O(\tau^{2m+1})$. 
    
        Current ongoing work is being undertaken to fill in the gaps of the previous argument. However, the numerics of~\sec{NumericalDemos} strongly suggest that the time dependent MPFs indeed work as expected. Moreover, the form of the time-dependent MPF of~\defn{timedependentMPF} can be obtained by a naive Trotterization of the continuous clock space, which is very suggestive that, beyond formal issues, the approach is reasonable. Thus, we proceed assuming Conjecture~\ref{conj:TDMPF_existence} is true.

    \subsection{Time Dependent MPF Simulation}\label{sec:pseudoMPF}
        Having argued, informally, that good time dependent MPFs exist, we now propose an algorithm for Hamiltonian simulation using these formulas. We will provide some accompanying discussion to explain our choices, and at the end we will more directly state the approach. 
        
        A natural input model for $H(t)$ is a linear combination of Hamiltonians
        \begin{equation}
            H(t) = \sum_{i=1}^L \alpha_i(t) H_i
        \end{equation}
        where each $\alpha_i(t) \in \mathbb{R}$ is assumed $2m+1$ differentiable for an $m$-term MPF. Without loss of generality we take $\norm{H_i} \leq 1$. Because we utilize the well-conditioning results of~\cite{Low2019Multiproduct}, we want the base formula to be 2nd order and symmetric. A reasonable choice is
        \begin{equation} \label{eq:Trotterized_midpoint_formula}
            U_2(t + \tau, t) := \prod_{i=L}^1 \exp\left\{-i H_i \alpha_i\left(t + \frac{\tau}{2}\right) \tau\right\}\prod_{i=1}^L \exp\left\{-i H_i \alpha_i\left(t + \frac{\tau}{2}\right) \tau\right\},
        \end{equation}
        which is a 2nd-order Trotter splitting of the midpoint formula. Thus, from now on we will be interested in the MPF
        \begin{equation}
            U_{2,m}(t,0) = \sum_{j=1}^m a_j U_2^{(k_j)}(t,0).
        \end{equation}
        As a caution, we remark that, despite notation, the MPF $U_{2,m}$ is not generally unitary for $m > 1$, though when suitably constructed it will approximate the unitary $U$, hence be approximately unitary.
        
        That $U_2$ is second-order can be seen from Taylor expanding the Dyson series of $U$ about $\tau = 0$ ($H$ needs to be at least, say, twice differentiable). Moreover, $U_2$ is time-reversal symmetric in the same sense as $U$: $U_2(t,t_0) = U_2(t_0,t)^\dagger$. This gives the nice property that the error series for $U(t+\tau, t) - U^{(k)}(t + \tau, t)$ has only even terms, such that higher order formulas can be reached with approximately half the number of addends. 
        
        From the onset, there are a couple of choices to make. The MPFs, in principle, could approximate the entire interval $[0,T]$ provided that the Trotter steps $k_i$ are sufficiently large. However, this has several disadvantages. First, there is no flexibility to treat some subintervals of $[0,T]$ as more difficult than others and allocate resources appropriately. Second, the well-conditioned scheme of~\cite{Low2019Multiproduct} would have to be abandoned or modified to accommodate larger $\vec{k}$. Instead, we divide $[0, T]$ into a mesh of $r$ subintervals, not necessarily uniform, but rather constructed to account for more difficult parts of the simulation. We provide a greedy algorithm for constructing such a mesh at the end of this chapter. The algorithm requires a computable $\Lambda_{2m+1}$-bound to work (see~\defn{LambdaBound}), however, a practitioner might prefer a more heuristic approach to constructing the time mesh. For the moment, we will simply say that, given $t_i$, the next time point $t_{i+1}$ is incremented roughly as $1/\Lambda_{2m+1}(t)$ for $t$ in a neighborhood of $t_i$, where $\Lambda_{2m+1}$ is a positive real-valued function of $H$ and its derivatives that grows for larger or faster fluctuating $H$.

        Once the mesh points $t_0, t_1, \dots, t_r$ are determined, a time dependent MPF is performed over each subinterval $[t_i, t_{i+1}]$ in sequence. We assume the MPF is implemented using the LCU technique. The base midpoint formula $U_2$ must be implemented by some scheme which depends on the structure of $H(t)$, though the approximating unitary $\tilde{U}_2$ should be at least 2nd-order and preserve the time-reversal symmetry of $U_2$ (and $U$). We take~\eqref{eq:Trotterized_midpoint_formula} as our base formula for the subsequent analysis. It is known that such Trotter formulas exhibit commutator scaling, meaning that, in the limit where all $H_j$ commute pairwise and all $\alpha_j$ are constant functions, the simulation error goes to zero. Hence, the MPF will also inherit this desirable property.

        Let us now supply our pseudo-algorithm for the MPF procedure. Given fundamental parameters, $[0,T]$, $\epsilon$, and a description of $H(t)$:
        \begin{enumerate}
            \item Compute a $\Lambda_{2m+1}$ bound (\defn{LambdaBound}) for some $M$ larger than the expected number of MPF terms. This is more a less a bound on the ``difficulty" of $H(t)$ at various times.
            \item Construct a time mesh of $r$ steps using the algorithm of~\app{adapt_deltat}.
            \item Perform a sequence of MPFs over each time slice, with 2nd order base formula $W_2$ approximating the midpoint formula.
        \end{enumerate}
        Specific information about the parameter choices, such as $m$ and $r$, is provided in the subsequent error analysis, though sometimes only in a big-$O$ sense.

    \subsection{Error Analysis} \label{sec:ErrorAnalysis}
        In this section, we analyse the errors arising between the exact unitary $U$ and the MPF approximation $\tilde{U}$ given by
        \begin{equation} \label{eq:Utilde}
            \tilde{U}(T, 0) = \prod_{i=1}^r U_{2,m}(t_i,t_{i-1}).
        \end{equation}
        This analysis will ignore hardware imperfections and decoherence, assume that $U_2$ is implemented perfectly, and assume exact coefficients $a_j$. In the query complexity analysis of~\sec{QueryComplexity} we will consider additional algorithmic errors arising from a more precise specification of the Hamiltonian input model.
    
        We introduce a useful definition to quantify errors succinctly. It is well understood that MPFs, like regular product formulas, have smoothness requirements to ensure convergence. To quantify errors and costs of MPFs, we provide a metric which captures the ``size" of $H$ and its derivatives at each point in time, in order to characterize the difficulty of simulation.
        \begin{definition}\label{defn:LambdaBound}
            Let $H(t) = \sum_{i=1}^L \alpha_i(t) H_i$ be a time dependent, finite-dimensional Hamiltonian with $H_i$ Hermitian and $\alpha_i(t) \in \mathbb{R}$ having $n \in \mathbb{N}\cup\{\infty\}$ continuous derivatives. For each $i$ define a $\Lambda_{i,n}$-bound ("Lambda  i n bound") as any continuous function $\Lambda_{i,n}: [0,T] \to \mathbb{R}_+$ satisfying the following bounds with respect to $H$ and its derivatives
            \begin{equation*}\label{eq:LambdaBound}
                \Lambda_{i,n} (t) \geq \sup_{j\in [n]} \sqrt[j+1]{\|{\alpha_i^{(j)}(t)}\|}\quad \forall t\in [0,T] 
            \end{equation*}
            where $f^{(n)}$ represents an $n$th derivative of $f$, and $[n] :=\{j\in\mathbb{N}\mid j\leq n\}$. Assuming such bounds exist for all $i = 1,\dots, L$, we say that $H(t)$ is $\Lambda_n$-bounded. We further say that $H(t)$ is $\Lambda_n$-boundable if it admits some $\Lambda_n$-bound. For convenience, we define $\Lambda_i \equiv \Lambda_{i,\infty}$. We also define a $\Lambda_n$ bound as any continuous on $[0,T]$ satisfying
            \begin{equation*}
                \Lambda_n (t) \geq \max_{i\in [L]} \Lambda_{i,n}(t).
            \end{equation*}
        \end{definition}
        For near-constant $\alpha_i(t)$, $\Lambda_{i,n}$ is simply an upper bound on $\abs{\alpha_i}$, while for rapid oscillations the derivative terms will dominate. Observe that for finite $n$, our assumptions imply that $\Lambda_{i,n}(t)$ exists ($H$ is $\Lambda_{i,n}$-boundable), since $\lvert{\alpha_i^{(j)}}\rvert$ is continuous on a compact interval and hence a bounded function. Also in the finite case, the supremum may be replaced with a simple $\max$, and $\Lambda_{i,n}(t)$ may be taken as equal to the right hand side because it is the maximum of a finite set of continuous functions, which is continuous. For this ``minimal choice," $\Lambda_{i,n}(t)$ is a nondecreasing sequence in $n$. For each $n$, there also exists a $\Lambda_{i,n}$ that is constant in $t$. Allowing $\Lambda_{i,n}$ to vary in time, however, takes into consideration the possibility that the expense of simulating $H$ will vary with time.  We note that $\Lambda_{i,n}$-bounds are additive in the sense that, for $H(t)$ and $G(t)$ admitting $\Lambda^{H}_{i,n}$ and $\Lambda^{G}_{i,n}$-bounds, respectively, $\Lambda_{i,n}^{H} + \Lambda_{i,n}^{G}$ is a $\Lambda_{i,n}$-bound on $H + G$.
        
        In contrast to finite $n$, the existence of a $\Lambda_{i,\infty}$-bound is not guaranteed, and amounts to the assumption that the derivatives of $H$ grow at most exponentially for asymptotically large $j$ and fixed $t$. There are smooth, even analytic functions which do not satisfy this, many of which are physically interesting. A simple example is a Gaussian pulse
        \begin{equation}
            \alpha(t) = e^{-t^2}
        \end{equation}
        whose derivatives, generating the Hermite polynomials, grow factorially with $n$ at $t = 0$. Other interesting cases, such as harmonic oscillations or exponential growth and decay, do admit a $\Lambda$-bound. Despite these restrictions, we adopt this approach for simplicity and in order to facilitate comparison with prior work on general-order Suzuki-Trotter formulas~\cite{Wiebe2010Higher}. Admittedly, a modification of~\defn{LambdaBound} to be an upper bound on
        \begin{equation}
            \max_j j^{-1} \sqrt[j+1]{\|\alpha_i^{(j)}(t)\|}
        \end{equation}
        would expand the class of functions admitting $\Lambda_\infty$-bounds to analytic functions (though not generic smooth functions).
        
        We now begin the error analysis of~\eqref{eq:Utilde} in earnest. From a triangle inequality the error can be bounded as the error in each step.
        \begin{equation} \label{eq:error_sum}
            \|U(T, 0)-\tilde{U}(T, 0)\| \leq \sum_{i=1}^r \|U(t_i, t_{i-1})-U_{2,m}(t_i,t_{i-1})\|
        \end{equation}
        Therefore, to ensure an error at most $\epsilon$, it suffices that each subinterval has error at most $\epsilon/r$. We thus focus a single subinterval. An upper bound on this error is supplied by the following theorem, which the main technical result of this section.         
        \begin{theorem} \label{thm:errorMain}
            Let $H: [t_0,t_1]\rightarrow \mathrm{Herm}(\mathcal{H})$ be a time dependent Hamiltonian on finite-dimensional $\mathcal{H}$ with $2m+1$ continuous derivatives on $[t_0,t_1]$ and $\Lambda_{2m+1}$-bound. Suppose further that 
            \begin{equation*}
                e L \max_{\tau\in[t_0,t_1]}\Lambda_{2m+1}(\tau) (t_1 - t_0) < 1.
            \end{equation*}
            Then for any $m \in \mathbb{Z}_+$ there exists $\vec{k}\in \mathbb{Z}_+^{m}$ and $a\in \mathbb{R}^m$ such that
            \begin{equation*}
                \|U(t_1,t_0)-U_{2,m}(t_1,t_0)\| < \frac{\|a\|_1}{\sqrt{\pi m}}\left(5 L \max_{\tau\in[t_0,t_1]}\Lambda_{2m+1}(\tau) (t_1-t_0)\right)^{2m+1}      
            \end{equation*}
            and $\|a\|_1 \in O(\log(m))$. 
        \end{theorem}
        Observe that convergence of the above error bound to zero as $m \rightarrow\infty$ is conditioned on sufficiently small $t_1 - t_0$. This is potentially unsurprising, as the Suzuki-Trotter formulas also do not provide an unconditionally converging sequence of approximations to the time evolution operator. Note as well the parallel roles between $m$ and the Suzuki-Trotter order $k$ in reducing the error. In our case, however, we shall see that the simulation cost increases only polynomially in $m$, whereas for product formulas the cost is necessarily exponential in $k$. 
        
        The term $\norm{a}_1/\sqrt{\pi m}$ is $o(1)$ for large $m$ and can be more or less ignored. Unfortunately, the $\Lambda_{2m+1}$ scales as the ``worst" coefficient $\alpha_i$ multiplied by the number of terms $L$, which seems too cynical. However, improving on this may greatly complicate the proof of the error bound.~\thm{errorMain} will be the important result that informs the algorithmic choices and complexity analysis of subsequent sections. Having characterized the error on a single subinterval of $[0,T]$, the full error over $r$ subintervals may be found simply using~\eqref{eq:error_sum}.
        
        We prove~\thm{errorMain} using a similar strategy to that used to provide error estimates for the Suzuki-Trotter formulas~\cite{Berry2007Sparse,Wiebe2010Higher,Childs2019Randomization}. As $H$ is continuously differentiable at least $2m+1$ times, $U_{2,m}$ is a valid extrapolant by Conjecture~\ref{conj:TDMPF_existence}, and cancels the first $m$ terms in the error series. We can thus express the difference $U_{2,m} - U$ using the integral Taylor remainder formulas
        \begin{equation} 
            U_{2,m}(t,t_0) - U(t,t_0)  =  R_{2m} - \mathcal{R}_{2m}
        \end{equation}
        with
        \begin{align} 
            \mathcal{R}_{2m} &:= \frac{1}{2m!} \int_{t_0}^t (t-\tau)^{2m} U^{(2m+1)}(\tau, t_0) d\tau \label{eq:taylor_remainder_R}\\
            R_{2m} &:= \frac{1}{2m!} \int_{t_0}^t (t-\tau)^{2m} U_{2,m}^{(2m+1)}(\tau, t_0) d\tau,\label{eq:taylor_remainder_Rc}
        \end{align}
        where $U^{(n)}$ refers to derivatives in the first argument. By the triangle inequality,
        \begin{equation} \label{eq:triangleR}
            \|U_{2,m}(t, t_0) -U(t, t_0)\| \leq \|\mathcal{R}_{2m}\| + \|R_{2m}\|
        \end{equation}
        and we upper bound each remainder in separate lemmas. 

        The easier bound is $\mathcal{R}_{2m}$, so we begin with the corresponding lemma.
        \begin{lemma} \label{lem:Rc2m}
            The remainder term $\mathcal{R}_{2m}$ in equation~\eqref{eq:taylor_remainder_Rc} satisfies
            \begin{equation*}
                \|\mathcal{R}_{2m}\| < \frac{1}{2\sqrt{\pi m}}\left(2 L \max_{\tau\in [t_0,t]}\Lambda_{2m+1}(\tau) (t-t_0)\right)^{2m+1}.
            \end{equation*}
        \end{lemma}
        \begin{proof}        
            Recall that $U$, as the exact propagator, satisfies the Schrödinger equation \eqref{eq:Schrödinger_eqtn}. Higher derivatives can easily be found through repeated application of the product rule. The result will be a polynomial in the derivatives of $H$ times $U$ itself. Under the spectral norm, using the triangle and submultiplicative properties, the ordering of terms doesn't matter, and therefore equivalent to the expression one gets taking derivatives of a scalar exponential. Noting that $\|U\| = 1$, the resulting polynomial is the complete exponential Bell polynomial from Faà di Bruno's formula (see~\app{milieu}). Letting $n = 2m+1$, we have
            \begin{equation}
                \|\partial_{t}^n U(t,t_0)\| \leq Y_n\left(\|H(t)\|, \|\dot{H}(t)\|, \dots, \|H^{(n-1)}(t)\|\right).
            \end{equation}
            From the definition of $\Lambda_{i,n}$, we have
            \begin{align}
            \begin{aligned}
                \|H^{(j)}(t)\| &\leq \sum_{i=1}^L \lvert \alpha_i^{(j)}(t)\rvert \\
                &\leq \sum_i \Lambda_{i,n}(t)^{j+1}\\
                &\leq (L\Lambda_n(t))^{j+1}
            \end{aligned}
            \end{align}
            and since the Bell polynomials $Y_n$ are monotonic in each argument,
            \begin{align}
            \begin{aligned}
                Y_{n}\left(\|H\|, \|\dot H\|, \dots, \|H^{(n-1)}\|\right) &\leq Y_n(L\Lambda_n(t), (L\Lambda_n(t))^2, \dots , (L\Lambda_n(t))^n) \\
                &= (L\Lambda_n(t))^n b_n 
            \end{aligned}
            \end{align}
            where $b_n$ are the Bell numbers (\app{milieu}). Thus, 
            \begin{equation} \label{eq:Bell_num_bound}
                \|\partial_{t}^n U(t,t_0)\| \leq (L\Lambda_n(t))^n b_n.
            \end{equation}
            Finally, returning to  the bound on $\mathcal{R}_{2m}$, we have from the integral triangle inequality that
            \begin{align}
            \begin{aligned}
                \|\mathcal{R}_{2m}\| &\leq \frac{1}{(2m)!} \int_{t_0}^{t} (t-\tau)^{2m} \|\partial_\tau^{2m+1} U(\tau, t_0)\| d\tau\\
                &\leq \frac{1}{(2m)!} \int_{t_0}^t (t-\tau)^{2m} (L\Lambda_{2m+1}(\tau))^{2m+1} b_{2m+1} d\tau
            \end{aligned}
            \end{align}
            where we made use of equation~\eqref{eq:Bell_num_bound}. This, in turn, can be bounded by maximizing $\Lambda_{2m+1}$ over $[t_0,t]$.
            \begin{align} \label{eq:Bell_over_factorial}
            \begin{aligned}
                \|\mathcal{R}_{2m}\| &\leq \frac{b_{2m+1}}{(2m)!}(L\max_{\tau\in [t_0,t]}\Lambda_{2m+1}(\tau))^{2m+1} \int_{t_0}^{t} d\tau (t-\tau)^{2m} \\
                &\leq \frac{b_{2m+1}}{(2m+1)!} \left(L\max_{\tau\in [t_0,t]}\Lambda_{2m+1}(\tau) (t-t_0)\right)^{2m+1}
            \end{aligned}
            \end{align}
            Finally, we upper bound the prefactor using a Stirling bound and bounds from~\cite{Berend2010Improved} on the bell numbers. For all $m\in \mathbb{Z}_+$,
            \begin{align}
            \begin{aligned}
                \frac{b_{2m+1}}{(2m+1)!} &< \frac{\left(\frac{0.792(2m+1)}{\log(2m+2)}\right)^{2m+1}}{\sqrt{2\pi(2m+1)}\left(\frac{2m+1}{e}\right)^{2m+1}} \\
                &= \frac{1}{\sqrt{2\pi (2m+1)}} \left(\frac{.792 e}{\log(2m+2)}\right)^{2m+1}.
            \end{aligned}
            \end{align}
            Plugging this into equation~\eqref{eq:Bell_over_factorial},
            \begin{align}
            \begin{aligned}
                \|\mathcal{R}_{2m}\| &< \frac{1}{\sqrt{2\pi(2m+1)}}\left(\frac{0.792 e}{\log(2m+2)}L\max_{\tau\in [t_0,t]}\Lambda_{2m+1}(\tau) (t-t_0)\right)^{2m+1} \\
                &< \frac{1}{2\sqrt{\pi m}}\left(2L\max_{\tau\in [t_0,t]}\Lambda_{2m+1}(\tau) (t-t_0)\right)^{2m+1}.
            \end{aligned}
            \end{align}
            The last line is the result of the lemma.
        \end{proof}
        We now state the bound on the Taylor $R_{2m}$ for the time dependent MPF. 

        \begin{lemma}\label{lem:R2m}
            In the notation above, suppose that 
            \begin{equation*}
                e L \max_{\tau\in[t_0,t_1]}\Lambda_{2m+1}(\tau) (t_1 - t_0) < 1.
            \end{equation*}
            Then the remainder term $R_{2m}$ in equation~\eqref{eq:taylor_remainder_Rc} satisfies
            \begin{equation*}
                \|R_{2m}\| < \frac{\|a\|_1}{2\sqrt{\pi m}}\left(5 L \max_{\tau\in [t_0,t] }\Lambda_{2m+1}(\tau) (t-t_0)\right)^{2m+1} \;.
            \end{equation*}
        \end{lemma}
        The proof is more technical than the previous bound, and is given at the end of this section. First, we quickly prove~\thm{errorMain} assuming the truth of the above Taylor remainder lemmas.
        \begin{proof}[Proof of~\thm{errorMain}]
            First, we note that $\|a\|_1 \geq 1$, since $a$ necessarily satisfies $\sum_j a_j = 1$ from the Vandermonde constraints~\eqref{eq:vandermonde}. From equation~\eqref{eq:triangleR}, the error $\|U(t,t_0)-U_{2,m}(t,t_0)\|$ is bounded by the sum of the remainder upper bounds derived in Lemmas~\ref{lem:R2m} and~\ref{lem:Rc2m}. Comparing the two, we see that  $R_{2m}$ dominates $\mathcal{R}_{2m}$ for all $m \geq 1$. We thus take twice the larger as an upper bound
            \begin{align}
            \begin{aligned}
                \|U(t,t_0)-U_{2,m}(t,t_0)\| &< 2 \|R_{2m}\| \\
                &< \frac{\|a\|_1}{\sqrt{\pi m}}\left(5 L\max_{\tau\in[t_0,t]}\Lambda_{2m+1}(\tau) (t-t_0)\right)^{2m+1} \;.
            \end{aligned}
            \end{align}
            This completes the proof.
        \end{proof}
        
        To prove~\lem{R2m}, we will first need a technical lemma that bounds the size of ordinary exponentials of time dependent matrices. 
        \begin{lemma} \label{lem:FaadiBrunoOperator}
            Let $A(t)$ be an anti-Hermitian valued function of $t \in \mathbb{R}$ with $n$ bounded derivatives. Then
            \begin{equation*}
                \norm{d_t^n e^{A(t)}} \leq Y_n\left(\norm{d_t A(t)}, \norm{d_t^2 A(t)}, \dots, \norm{d_t^n A(t)}\right) 
            \end{equation*}
            where $Y_n$ is the complete exponential Bell polynomial.
        \end{lemma}
        In the scalar case, Faà di Bruno's bound is an exact expression (see~\app{milieu}), so the content of our result is that a corresponding bound holds even in the non-scalar case. Th exponential disappears because $e^{A(t)}$ is unitary. The proof of this is provided in~\app{MPFproofs}. 
        
        We finally conclude this section with a proof of the bound on $R_{2m}$.
        \begin{proof}[Proof of~\lem{R2m}]    
            Without loss of generality, we take $t_0 = 0$. The relevant expressions are
            \begin{align}
                U_{2,m}(t,0) = \sum_{j=1}^m a_j U_2^{(k_j)}(t,0)
            \end{align}
            and
            \begin{align}
                U_2^{(k)}(t,0) := \prod_{\ell = 1}^{k} U_2(t_\ell, t_{\ell-1})
            \end{align}
            with $t_{\ell} := t \ell/k$. The Taylor remainder in integral form is given by 
            \begin{align}
            \begin{aligned}
                R_{2m} &= \frac{1}{(2m)!} \int_0^t (t-\tau)^{2m} \frac{d^{2m+1}}{d\tau^{2m+1}} U_{2,m}(\tau,0) d\tau \\
                &= \frac{1}{(2m)!}\sum_{j=1}^m a_j \int_0^t (t-\tau)^{2m} \frac{d^{2m+1}}{d\tau^{2m+1}} U_2^{(k_j)}(\tau,0) d\tau.
            \end{aligned}
            \end{align}
            With a couple triangle inequalities, this is upper bounded as
            \begin{align} \label{eq:Holderbound_R2m}
            \begin{aligned}
                \norm{R_{2m}} &\leq \frac{1}{(2m)!} \sum_{j=1}^m \abs{a_j} \frac{t^{2m+1}}{2m+1} \max_{\tau \in [0,t]} \norm{d_\tau^{2m+1} U_2^{(k_j)}(\tau,0)} \\
                &\leq \frac{\norm{a}_1}{(2m+1)!}  t^{2m+1}\max_{j,\tau} \norm{d_\tau^{2m+1} U_2^{(k_j)}(\tau,0)}
            \end{aligned}
            \end{align}
            where in the last line we employed a Hölder inequality. Our focus is now on bounding the derivative, which we unravel layer by layer using frequent multinomial expansions. First,
            \begin{align} \label{eq:V2_seq_multinom}
                d_\tau^n U_2^{(k)}(\tau,0) = \sum_{N} \binom{n}{n_1,\dots, n_k} \prod_{\ell = 1}^{k} d_\tau^{n_\ell} U_2(\tau_\ell, \tau_{\ell-1}).
            \end{align}
            Next, we write
            \begin{align}
            \begin{aligned}
                U_2(\tau_\ell, \tau_{\ell-1}) &= \prod_{i=L}^1 e^{-i H_i \alpha_i(\tau_{\ell-1/2}) \tau/k}\prod_{i=1}^L e^{-i H_i \alpha_i(\tau_{\ell-1/2}) \tau/k} \\
                &= \prod_{i=1}^{2L} e^{A_{i,\ell}}
            \end{aligned}
            \end{align}
            where 
            \begin{align}
                A_{i,\ell} := -i H_i \alpha_i(\tau_{\ell-1/2})\tau/k
            \end{align}
            and $i$ is defined by reflection for $i > L$. Once again performing a multinomial expansion,
            \begin{align} \label{eq:2step_deriv}
                d_\tau^n U_2(\tau_\ell, \tau_{\ell-1}) = \sum_N \binom{n}{n_1,\dots, n_{2L}} \prod_{i=1}^{2L} d_\tau^{n_i} e^{A_{i,\ell}}.
            \end{align}
            We now bound the individual ordinary operator exponentials. Invoking~\lem{FaadiBrunoOperator},
            \begin{align} \label{eq:FaadiBruno_Bound}
                \norm{d_\tau^n e^{A_{i,\ell}}} \leq Y_n\left(\norm{d_\tau A_{i,\ell}}, \dots, \norm{d_\tau^n A_{i,\ell}}\right).
            \end{align}
            In turn, we have
            \begin{align}
            \begin{aligned}
                d_\tau^n A_{i,\ell} &= -i \frac{H_i}{k} d_\tau^n(\alpha_i(\tau_{\ell-1/2})\tau) \\
                &= -i \frac{H_i}{k} \left[\left(\frac{\ell-1/2}{k}\right)^n \tau \alpha_i^{(n)}(\tau_{\ell-1/2}) + n \left(\frac{\ell-1/2}{k}\right)^{n-1} \alpha_i^{(n-1)}(\tau_{\ell-1/2})\right] \\
            \end{aligned}
            \end{align}
            where $\alpha^{(n)}(x)$ refers to the $n$th derivative of $\alpha$ with respect to its argument, then evaluated at $x$ (i.e., not a $\tau$ derivative). Since $\norm{H_i}\leq 1$ we have
            \begin{align}
                \norm{d_\tau^n A_{i,\ell}} <  \frac{1}{k} (\ell/k)^{n-1} \left((\ell/k)\tau\lvert{\alpha_i^{(n)}(\tau_{\ell-1/2})}\rvert + n\lvert{\alpha_i^{(n-1)}(\tau_{\ell-1/2})}\rvert\right).
            \end{align}
            From~\defn{LambdaBound}, $\alpha_i^{(j)}(t) \leq \Lambda_{i,n}(t)^{j+1}$. Dropping the $n$ and $t$ dependence for the moment,
            \begin{align}
            \begin{aligned}
                \norm{d_\tau^n A_{i,\ell}} &< (\ell/k)^n\left((\tau/k) \Lambda_i^{n+1} + (n/\ell)\Lambda_i^n\right) \\
                &=(\Lambda_i\ell/k)^n\left(\Lambda_i\tau/k + n/\ell\right).
            \end{aligned}
            \end{align}
            We've reached the bottom, and now proceed to work our way back up to the Taylor remainder $R_{2m}$, starting with~\eqref{eq:FaadiBruno_Bound}. Using the equation~\eqref{eq:Faa_di_Bruno} of~\app{milieu},
            \begin{align}
            \begin{aligned}
                \norm{d_\tau^n e^{A_{i,\ell}}} &\leq \sum_C \frac{n!}{c_1! c_2!\dots c_n!} \prod_{j=1}^n \left(\frac{\norm{d_\tau^j A_{i,\ell}}}{j!}\right)^{c_j} \\
                &< \sum_C \frac{n!}{c_1! c_2!\dots c_n!} \prod_{j=1}^n \left(\frac{(\Lambda_i\ell/k)^j\left(\Lambda_i\tau/k + j/\ell\right)}{j!}\right)^{c_j}.
            \end{aligned}
            \end{align}
            Using the sum rule for $C$ we can pull out a factor of $(\Lambda_i \ell/k)$. Using the upper bound $j \leq  n$ and the monotonicity of $Y_n$, we obtain the bound
            \begin{align}
                \norm{d_\tau^n e^{A_{i,\ell}}} &< (\Lambda_i \ell/k)^n B_n(\Lambda_i \tau/k + n/\ell)
            \end{align}
            where $B_n$ is the Bell polynomial (see~\app{milieu}). For simplicity, define 
            \begin{equation}
                x_{i,\ell, n} = \Lambda_i\tau/k + n/\ell
            \end{equation}
             as the argument to $B_n$. Employing the bound~\eqref{eq:Bellfunc_bound}, 
            \begin{align}
                \norm{d_\tau^n e^{A_{i,\ell}}} &< (\Lambda_i \ell/k)^n \left(\frac{n}{\log(1 + n/x_{i,\ell,n})}\right)^n.
            \end{align}
            which is valid for all $n>0,$ and for $n = 0$ when defined by the $0^+$ limit. We can simplify the reciprocal $\log$ with the bound
            \begin{align}
            \begin{aligned}
                \frac{1}{\log(1 + n/x_{i,\ell,n})} &< \left(\frac{1}{2} + \frac{x_{i,\ell,n}}{n}\right)^n \\
                &= \frac{1}{2^n} \left(1 + \frac{2 x_{i,\ell,n}}{n}\right)^n.
            \end{aligned}
            \end{align}
            This gives us the simplified exponential derivative
            \begin{align}
                \norm{d_\tau^n e^{A_{i,\ell}}} &< \left(\Lambda_i \ell/ 2k\right)^n(n+2x_{i,\ell,n})^n.
            \end{align}
            
            We now move up a level to reconsider~\eqref{eq:2step_deriv}. Employing a triangle inequality,
            \begin{align}
            \begin{aligned}
                \norm{d_\tau^n U_2(\tau_\ell, \tau_{\ell-1})} &\leq \sum_N \binom{n}{n_1,\dots, n_{2L}} \prod_{i=1}^{2L} \norm{d_\tau^{n_i} e^{A_{i,\ell}}} \\
                &< \sum_N \binom{n}{n_1,\dots, n_{2L}} \prod_{i=1}^{2L} \left(\Lambda_i \ell/ 2k\right)^{n_i}(n_i+2x_{i,\ell, n_i})^{n_i}.
            \end{aligned}
            \end{align}
            Maximize $\Lambda_i$ over all $i = 1,\dots, L$ and call it $\Lambda$. We can factor out the corresponding term, and with some rewriting obtain
            \begin{align}
                (\Lambda\ell/2k)^n \sum_N \binom{n}{n_1,\dots, n_{2L}}\prod_{i=1}^{2L} (n_i + 2x_{\ell,n_i})^{n_i}.
            \end{align}
            where we've also let $x_{\ell,n_i}$ be $x_{i,\ell,n_i}$ with the subscript dropped on $\Lambda_i$.Focusing on the rightmost product over $i$, one can show using a Lagrange multiplier that the maximum is given by $n_i = n/2L$ for all $i$ (we maximize over $n_i \in \mathbb{R}_+$, which is an upper bound). This is intuitive from symmetry of the product as well. Taking this as an upper bound, we have
            \begin{align}
            \begin{aligned}
                \norm{d_\tau^n U_2(\tau_\ell, \tau_{\ell-1})} &< (\Lambda\ell/2k)^n \left(\frac{n}{2L} + \frac{2 \Lambda\tau}{k} + \frac{n}{L\ell}\right)^n \sum_{N}\binom{n}{n_1,\dots,n_{2L}} \\
                &= (\Lambda\ell/2k)^n \left(n + \frac{4 \Lambda\tau L}{k} + \frac{2n}{\ell}\right)^n  \\
                &= (\Lambda/k)^n \left(n + n\ell/2 + \frac{2\Lambda\tau L \ell}{k}\right)^n.
            \end{aligned}
            \end{align}
            where in going to the second line we evaluated the multinomial sum as $(2L)^n$ and simpified.
            
            With this in hand, we return to~\eqref{eq:V2_seq_multinom} and bound it as
            \begin{align}
            \begin{aligned}
                \norm{d_\tau^n U_2^{(k)}(\tau, 0)} &\leq \sum_N \binom{n}{n_1,\dots, n_k} \prod_{\ell=1}^k \norm{d_\tau^{n_\ell} U_2(\tau_\ell, \tau_{\ell-1})} \\
                &< \sum_N \binom{n}{n_1,\dots,n_k}\prod_{\ell=1}^k (\Lambda/k)^{n_\ell} \left(n_\ell + n_\ell \ell/2 + \frac{2\Lambda \tau L \ell}{k}\right)^{n_\ell}.
            \end{aligned}
            \end{align}
            Using the upper bound $\ell \leq k$ and factoring out the $(\Lambda/k)^{n_\ell}$ using the sum rule,
            \begin{equation}
                \norm{d_\tau^n U_2^{(k)}(\tau, 0)} < (\Lambda/k)^n \sum_N \binom{n}{n_1,\dots,n_k} \prod_{\ell=1}^k (n_\ell + n_\ell k/2 + 2 \Lambda \tau L)^{n_\ell}.
            \end{equation}
            Similar to, we upper bound the product using $n_\ell = n/k$ for all $\ell$, which can be justified through a maximization using Lagrange multipliers. The corresponding bound is
            \begin{align}
            \begin{aligned}
                \norm{d_\tau^n U_2^{(k)}(\tau, 0)} &<(\Lambda/k)^n(n/k + n/2 + 2 \Lambda \tau L)^n \sum_N \binom{n}{n_1,\dots,n_k} \\
                &= (n\Lambda)^n \left(\frac{1}{k} + \frac{1}{2} + \frac{2\Lambda \tau L}{n}\right)^n.
            \end{aligned}
            \end{align}
            
            We are finally ready to return to equation~\eqref{eq:Holderbound_R2m} and bound $R_{2m}$.  We recall that $\Lambda$ has $\tau$ dependence, and let $\Lambda_\mathrm{max} := \max_{\tau\in[0,t]} \Lambda(\tau)$. We also upper bound any appearance of $\tau$ otherwise by $t$ because these are always in the numerator. So far, using $n = 2m+1$, these reductions give
            \begin{equation} \label{eq:final_bossR2m}
            \begin{split}
                \norm{R_{2m}} 
                &< \frac{\norm{a}_1}{(2m+1)!} ((2m+1)\Lambda_\mathrm{max} t)^{2m+1} \max_j  \left(\frac{1}{k_j} + \frac{1}{2}+\frac{2\Lambda t L}{2m+1}\right)^{2m+1}.
            \end{split}
            \end{equation}
            Employing a Stirling bound on the factorial, and factoring out an additional $L$ from the rightmost term, 
            \begin{align}
                \norm{R_{2m}} < \frac{\norm{a}_1}{\sqrt{2\pi(2m+1)}} (e L \Lambda_\mathrm{max} t)^{2m+1} \max_j \left(\frac{1}{Lk_j} + \frac{1}{2L} + \frac{2\Lambda t}{2m+1}\right)^{2m+1}.
            \end{align}
            We now apply the assumption that $e L \Lambda_\mathrm{max} t < 1$ to upper bound the $\max_j$ term, along with $k_j, L \geq 1$.
            \begin{align}
                \max_j \left(\frac{1}{Lk_j} + \frac{1}{2L} + \frac{2\Lambda t}{2m+1}\right)^{2m+1} < \left(\frac32 + \frac{2}{3 e}\right)^{2m+1}
            \end{align}
            Thus,
            \begin{align}
            \begin{aligned}
                \norm{R_{2m}} &< \frac{\norm{a}_1}{2\sqrt{\pi m}}\left(\left(\frac{3e}{2} + \frac{2}{3}\right)L\max_{\tau \in [0,t]} \Lambda_{2m+1}(\tau) t\right)^{2m+1} \\
                &< \frac{\norm{a}_1}{2\sqrt{\pi m}}\left(5 L\max_{\tau \in [0,t]} \Lambda_{2m+1}(\tau) t\right)^{2m+1}.
            \end{aligned}
            \end{align}
            In these last lines, we remind ourselves that $\Lambda$ has the subscript $2m+1$ as per~\defn{LambdaBound}.
        \end{proof}
            
    \subsection{Time Step Analysis} \label{sec:LongTimeSim}
        The next ingredient we need for a complexity analysis is asymptotic bounds on the number of subintervals $r$ needed in the time mesh. This will be the concern of this section. Unfortunately, in pursuing best-case bounds on $r$, we eschew a practical procedure for generating the time points $t_i$.~\app{adapt_deltat} provides a concrete procedure which is based on the analysis of this section.
        
        For time dependent Hamiltonians, because the cost per unit time can vary with $t$ in general, one should adaptively choose the step size depending on the cost. For our purposes, this means choosing a step size inversely proportional to the energy measure $\Lambda_{2m+1}(t)$. We will explore this adaptive time stepping and show $L^1$-norm scaling with $\Lambda_{2m+1}(t)$ here.

        To derive bounds on $r$, we will need to assume something about size of the derivative $\dot{\Lambda}_{2m+1}$ compared to $\Lambda_{2m+1}$ itself. Given a $\Lambda_n$-bound, a differentiable (smooth, even) $\Lambda_n$-bound exists. From now on, we consider $\Lambda_n$-bounds for which there exists a $K \in \mathbb{R}_+$ be such that $\lvert\dot{\Lambda}_n(t)\rvert \leq K \Lambda_n (t)^2$ for all $t \in [0,T]$. Given $H$ that is $\Lambda_{n}$ boundable, there is always, in fact, a $\Lambda_{n}$ bound such that $K$ exists and is arbitrarily close to zero. For example, we may take a constant bound $\Lambda_{n}' := \max_t \Lambda_{n}(t)$, noting that $\Lambda_{n}$ is continuous on a compact interval. Of course, $\Lambda'$ does not capture the changing behavior of $H(t)$, and is therefore suboptimal. Nevertheless, we've demonstrated that our additional assumptions are not much more restrictive than those we've already made. Note that (in natural units) $K$ is dimensionless. 
        
        With these preliminaries in place, the following result provides an upper bound on the number of time steps needed for our MPF algorithm.
        \begin{lemma}\label{lem:adaptive}
            Let $H$ satisfy the assumptions of~\thm{errorMain}, and let $\Lambda_{2m+1}$ be a $\Lambda_{2m+1}$-bound for $H$ such that, for some $K\in \mathbb{R}_+$, $|\dot{\Lambda}_{2m+1} (t)| \leq K \Lambda_{2m+1}(t)^2$ for all $t \in [0,T]$.  For every $\epsilon > 0$, there exists a list $(t_0, t_1, \dots, t_r)$ of monotonically increasing times $t_j \in [0,T]$, with $t_0 = 0$ and $t_r = T$, such that
            \begin{equation*} 
                \|U(T,0) - \prod_{i=1}^{r} U_{2,m}(t_i,t_{i-1})\| \leq \epsilon
            \end{equation*}
            with the number of time steps $r$ bounded above as
            \begin{equation*}
                r  \leq \left\lfloor \left(5 \left(1+\frac{3}{2}K\right) L \|\Lambda\|_1 \right)^{1+\frac{1}{2m}}\left( \frac{ \|a\|_1}{\epsilon\sqrt{\pi m}}\right)^{\frac{1}{2m}}\right\rfloor.
            \end{equation*}
            Here, $\|\Lambda_{2m+1}\|_1$ is the $L^1$ norm.
            \begin{equation*}
                \|\Lambda_{2m+1}\|_1 := \int_{0}^T \Lambda_{2m+1}(t) dt
            \end{equation*}
        \end{lemma}
        \begin{proof}
            As discussed in~\sec{ErrorAnalysis} in order to satisfy the $\epsilon$-error constraint of~\lem{adaptive}, it suffices that the error on each subinterval is less than $\epsilon/r$. Using~\thm{errorMain}, the sum is bounded as
            \begin{equation}
                \sum_{i=1}^{r}  \|U(t_i,t_{i-1}) - U_{2,m}(t_i,t_{i-1})\| \leq \frac{\|a\|_1}{\sqrt{\pi m}}\sum_{i=1}^{r}\left(5 L\max_{\tau\in [t_{i-1},t_i]}\Lambda_{2m+1}(\tau)(t_i - t_{i-1}) \right)^{2m+1}.
            \end{equation}
            To ensure an overall error $\epsilon$, it therefore suffices to produce a mesh such that for each $i$,
            \begin{equation}
                \frac{\|a\|_1}{\sqrt{\pi m}} \left(5 L \max_{\tau\in [t_{i-1},t_i]}\Lambda_{2m+1}(\tau)(t_i-t_{i-1})\right)^{2m+1} \leq \epsilon/r.
            \end{equation}
            Rearranging, this corresponds to choosing $t_i,$ given all other parameters, that satisfy
            \begin{equation} \label{eq:timestepBD}
                L \max_{\tau\in [t_{i-1},t_i]}\Lambda_{2m+1}(\tau)(t_i -t_{i-1}) \leq \frac{1}{5} \left(\frac{\epsilon\sqrt{\pi m}}{\|a\|_1 r} \right)^{1/(2m+1)}.
            \end{equation}
            
            We now digress in order to relate $\max_\tau \Lambda_{2m+1}(\tau)$ and its average. Here is where we will make use of the $K$-bounds on the derivative $\dot{\Lambda}$, we closely follow arguments found in~\cite{Wiebe2010Higher}. From the inequality in the lemma statement, we have
            \begin{align}
            \begin{aligned}
                \left\lvert\frac{\dot{\Lambda}_{2m+1}(t)}{\Lambda_{2m+1} (t)^2}\right\rvert &\leq K \\
                \left\lvert\frac{d}{dt} \frac{1}{\Lambda_{2m+1} (t)}\right\rvert &\leq K.
            \end{aligned}
            \end{align}
            Suppose the time $t_{i-1}$ has been chosen by the previous iteration (if $i = 1$, $t_0 = 0$). Let $t > t_{i-1}$ and integrate the above inequality from $t_{i-1}$ to $t$.
            \begin{align}
            \begin{aligned}
                \int_{t_{i-1}}^{t} \left\lvert \frac{d}{d\tau}\frac{1}{\Lambda_{2m+1} (\tau)}\right\rvert d\tau &\leq K (t-t_{i-1}) \\
                \left\lvert\int_{t_{i-1}}^{t} \frac{d}{d\tau} \frac{1}{\Lambda_{2m+1} (\tau)} d\tau\right\rvert &\leq K (t-t_{i-1}) \\
                \left\lvert\frac{1}{\Lambda_{2m+1}(t)} - \frac{1}{\Lambda_{2m+1}(t_{i-1})}\right\rvert &\leq K (t-t_{i-1}).
            \end{aligned}
            \end{align}
            Let us rearrange this in terms of $\Lambda_{2m+1} (t)$ alone.
            \begin{align} \label{eq:LambdaBounds}
            \begin{aligned}
                -K(t-t_{i-1}) &\leq \frac{1}{\Lambda_{2m+1} (t)} -\frac{1}{\Lambda_{2m+1} (t_{i-1})} \leq K (t-t_{i-1}) \\
                \frac{1}{\Lambda_{2m+1} (t_{i-1})}-K (t-t_{i-1}) &\leq \frac{1}{\Lambda_{2m+1} (t)} \leq \frac{1}{\Lambda_{2m+1} (t_{i-1})} + K (t-t_{i-1}) \\
                \frac{\Lambda_{2m+1} (t_{i-1})}{1+K (t-t_{i-1}) \Lambda_{2m+1} (t_{i-1})} &\leq \Lambda_{2m+1} (t) \leq \frac{\Lambda_{2m+1} (t_{i-1})}{1-K(t-t_{i-1}) \Lambda_{2m+1} (t_{i-1})}.
            \end{aligned}
            \end{align}
            The lowerbound inequality holds for all $t > t_{i-1}$, while the upper bound only holds when 
            \begin{equation}
                (t-t_{i-1})\Lambda_{2m+1}(t_{i-1})K < 1.
            \end{equation} 
            We restrict our attention to $t$ for which both bounds hold. Consider, for the moment, only the leftmost inequality. The lower bound on the left is monotonically decreasing with $t$. This means that it is also a uniform lower bound on $\Lambda_{2m+1} (t')$  for any $t' \in [t_{i-1}, t]$. Therefore, it is a lower bound for the average  $\bar{\Lambda}_{2m+1}(t)$ on the interval $[t_{i-1}, t]$.
            \begin{equation}
                \bar{\Lambda}_{2m+1}(t, t_{i-1}) := \frac{1}{t-t_{i-1}} \int_{t_{i-1}}^{t} \Lambda_{2m+1} (\tau) d\tau
            \end{equation}
            That is,
            \begin{equation}
                 \frac{\Lambda_{2m+1} (t_{i-1})}{1+K (t-t_{i-1}) \Lambda_{2m+1} (t_{i-1})} \leq \bar{\Lambda}_{2m+1}(t, t_{i-1}),
            \end{equation}
            or, after isolating for $\Lambda_{2m+1}(t_{i-1})$
            \begin{equation} \label{eq:average_bounds_value}
                \Lambda_{2m+1}(t_{i-1}) \leq \frac{\bar{\Lambda}_{2m+1}(t, t_{i-1})}{1-K(t-t_{i-1}) \bar{\Lambda}_{2m+1}(t, t_{i-1})}.
            \end{equation}
                  
            At this point, let's now consider the upper bound in equation~\eqref{eq:LambdaBounds}. This bound is monotonically \emph{increasing} in $t$, and therefore also upper bounds $\Lambda_{2m+1}(\tau)$ for any $\tau$ in $[t_{i-1}, t]$. Therefore, it is also a bound for the maximum. 
            \begin{align}
                \max_{\tau\in[t_{i-1},t]}\Lambda_{2m+1}(\tau) \leq \frac{\Lambda_{2m+1}(t_{i-1})}{1- K(t-t_{i-1}) \Lambda_{2m+1}(t_{i-1})}.
            \end{align}
            Substituting bounds for $\Lambda_{2m+1}(t_{i-1})$ from equation~\eqref{eq:average_bounds_value} gives us a bound on the maximum value in terms of the average.
            \begin{align}
                \max_{\tau\in[t_{i-1},t]}\Lambda_{2m+1}(\tau) \leq\frac{\bar{\Lambda}_{2m+1}(t, t_{i-1})}{1-\frac{3}{2} K\bar{\Lambda}_{2m+1}(t, t_{i-1})(t-t_{i-1}) }.
            \end{align}
            Solving for the average value of $\Lambda_{2m+1}$, and multiplying by $t-t_{i-1}$ on both sides,
            \begin{align} \label{eq:LambdabarbyMax}
                (t-t_{i-1})\bar{\Lambda}_{2m+1}(t, t_{i-1})  \geq \frac{(t-t_{i-1})\max_{\tau\in[t_{i-1},t]}\Lambda_{2m+1}(\tau)}{1+\frac{3}{2}K (t-t_{i-1})\max_{\tau\in[t_{i-1},t]}\Lambda_{2m+1}(\tau) }
            \end{align}
            Let us finally choose a $t=t_i$ which will serve as the next time step in the adaptive scheme. We would like come as close as possible to saturating equation~\eqref{eq:timestepBD} while staying within the constraint imposed by the maximum bound of equation~\eqref{eq:LambdaBounds}. Thus, we choose $t_i$ such that
            \begin{align}
                \max_{\tau\in[t_{i-1},t_i]}\Lambda_{2m+1}(\tau) (t_i-t_{i-1}) = \min \left\{\frac{1}{K}, \frac{1}{5L}\left(\frac{\epsilon\sqrt{\pi m}}{\|a\|_1 r}\right)^{1/(2m+1)}\right\}.
            \end{align}
            Since $K$ is a constant, for asymptotic purposes we will assume sufficiently small $\epsilon$ such that the right term is smaller. Plugging in to~\eqref{eq:LambdabarbyMax} yields
            \begin{equation}
                \bar{\Lambda}_{2m+1}(t_i, t_{i-1})(t_i - t_{i-1}) \geq  \frac{\frac{1}{5L} \left(\frac{\epsilon\sqrt{\pi m}}{\|a\|_1 r} \right)^{1/(2m+1)}}{1+\frac{3}{2}\frac{1}{5L} \left(\frac{\epsilon\sqrt{\pi m}}{\|a\|_1 r} \right)^{1/(2m+1)} K }.\label{eq:rgInt}
            \end{equation}
            We then find, by using the fact that $\frac{1}{5L} \left(\frac{\epsilon\sqrt{\pi m}}{\|a\|_1 r} \right)^{1/(2m+1)} < 1$ and by
            summing over $i=1,\ldots,r$ in~\eqref{eq:rgInt} that
            \begin{equation}
                \|\Lambda\|_1 \geq r^{\frac{2m}{2m+1}}
                \frac{1}{5L}\left(\frac{\epsilon\sqrt{\pi m}}{\|a\|_1} \right)^{1/(2m+1)}\left(1+\frac{3}{2} K\right)^{-1}
            \end{equation}
            Finally, rearranging the above, this implies that the number of steps required for the MPF algorithm is upper bounded as
            \begin{align} \label{eq:final_step_bound}
                r  \leq\left(5 \left(1+\frac{3}{2}K\right) L \|\Lambda\|_1 \right)^{1+\frac{1}{2m}}\left( \frac{ \|a\|_1}{\epsilon\sqrt{\pi m}}\right)^{\frac{1}{2m}}.
            \end{align}
            The result then directly follows from the requirement that $r$ is an integer.
        \end{proof}
        To summarize, we've provided an upper bound on the number of steps $r$ needed given assumptions on the derivative of $\Lambda_{2m+1}$. What is perhaps objectionable is that, in determining our subsequent time stepping, we seemed to need information about the total number of steps $r$ that we would end up with. While this does not detract from the correctness of our result, it does indicate possible difficulty in constructing a suitable set of $t_j$ for which the Lemma holds. One approach is to guess the final number $r_\mathrm{try}$ of steps needed, construct the mesh according to the proof, then see if $r_\mathrm{try}$ can be made correct. This approach is considered in~\app{adapt_deltat}.

        \subsection{Query Complexity} \label{sec:QueryComplexity}
            With the results of the previous two sections, we proceed to bound the query complexity needed to perform a time dependent MPF simulation. First, we define a set of oracles that are appropriate for this simulation problem. As discussed above, the most natural input model in our setting is the linear combinations of Hamiltonians model
            \begin{equation}
                H = \sum_{j=1}^L \alpha_j (t) H_j,
            \end{equation}
            where $\alpha_j:[0,T]\rightarrow\mathbb{R}$ has $2m+1$ continuous derivatives and $H_j \in \mathrm{Herm}(\mathbb{C}^{2^n})$. Without loss of generality, we assume $\|H_j\| \leq 1$. We discretize $[0,T]$ into $2^{n_t}$ uniform grid points $t_k = k T/2^{n_t}$ for $k \in [0,2^{n_t})\cap\mathbb{Z}$, and define $\alpha_{jk} := \alpha_j(t_k)$. Let $\delta t := T/2^{n_t}$. Let $U_\alpha$ and $U_H$ be unitary oracles which provide the input Hamiltonian as follows.
            \begin{align} \label{eq:oracles}
            \begin{aligned}
                U_\alpha \ket{j}\ket{k}\ket{\tau}\ket{0} &:=  \ket{j}\ket{k}\ket{\tau}\ket{\alpha_{jk}\tau}. \\
                U_H \ket{j} \ket{\alpha_{jk}\tau} \ket{\psi} &:=\ket{j} \ket{\alpha_{jk}\tau} \exp\{-iH_j \alpha_{jk}\tau\}\ket{\psi}
            \end{aligned}
            \end{align}
            The oracle $U_\alpha$ encodes a reversible classical computation and may be taken as self-inverse. Here $\ket{\tau}$ encodes a step of size $\tau \in \mathbb{R}$ in binary using $n_c$ qubits. Such step sizes are always nonnegative for the low-order formulas we consider, and therefore we take $\tau \in [0,T]$. Hence, $\delta t = T/2^{n_c}$ is the rounding error for the step sizes. We neglect rounding effects due to the values $\alpha_{jk}\tau$.
            
            Our first result concerns the approximate implementation of $U_2$ using the two oracles.
            \begin{lemma}\label{lem:2nd_order_formula_construction}
            Let $U_2(\tau + t,t)$ be the 2nd-order Suzuki-Trotter formula for the midpoint formula, with $t\in[0,T]$ and $\tau\in[0,T-t]$. Then an approximation $W_2$ can be constructed using at most $6L-3$ queries to $U_H$ and $U_\alpha$, such that
            \begin{equation}
                \|U_2(t +\tau,t) - W_2(t + \tau,t) \| \leq L \max_{j, t\in[0,T]} \abs{\dot{\alpha}_j(t + \tau/2)}\frac{T^2}{2^{n_c}}.
            \end{equation}
            \end{lemma}
            \begin{proof}
                Define $W_2$ as $U_2$ but with each $\alpha_j$ evaluated at the nearest discrete times in $\{t_k\}$. Using the techniques of~\cite{Wiebe2010Higher}, two queries to $U_\alpha$ and one query to $U_H$ are needed to exactly simulate each of the $2L-1$ exponentials present in $W_2$. Thus $3\times (2L-1)$ queries are needed total. To evaluate the discretization error, by Box 4.1 of~\cite{Nielsen2002QCQI} we have that
                \begin{equation}\label{eq:unitaryError}
                    \|W_2 - U_2\| \leq 2 \sum_{j=1}^L \|e^{-i H_j \alpha_j(\rnd[t + \tau/2])\tau/2} - e^{-i H_j \alpha_j(t + \tau/2)\tau/2}\|
                \end{equation}
                which in turn is upper bounded, through an application of the fundamental theorem of calculus, by
                \begin{equation}
                    2\sum_{j=1}^L \big\|H_j \alpha_j(\rnd \big[t + \frac{\tau}{2}\big])\frac{\tau}{2} - H_j \alpha_j(t+\frac{\tau}{2}) \frac{\tau}{2}\big\|
                \end{equation}
                where $\rnd$ rounds to the nearest $n_c$-bit value. Since $\|H_j\| \leq 1$ this is merely upper bounded as
                \begin{equation} \label{eq:round_vals_error}
                    \tau\sum_{j=1}^L \big\lvert \alpha_j(\rnd\big[t + \frac{\tau}{2}\big]) - \alpha_j(t+\frac{\tau}{2}) \big\rvert.
                \end{equation}
                By the fundamental theorem of calculus, with an integral upper bound, each term is upper bounded as $\delta t \max_{\delta t\in t \pm \delta t} \abs{\partial_t \alpha_j(t + \tau/2)}$. Maximizing over $[0,T]$ instead, and making other simplifying choices,we get a crude upper bound
                \begin{align}
                \begin{aligned}
                    \|W_2 - U_2\| &\leq \tau L \delta t \max_{j,[0,T]} \abs{\dot{\alpha}_j(t)} \\
                    &\leq L \frac{T^2}{2^{n_c}}\max_{j,[0,T]} \abs{\dot{\alpha}_j(t)}
                \end{aligned}
                \end{align}
                Rearranging this gives the inequality of the lemma statement.
            \end{proof}
            
            Having supplied an approximate base formula $W_2$ with our queries, we next need to implement an approximate MPF $W_{2,m}$ over a subinterval $[t_0,t_1]$. This is conventionally done through the use of "select" $\mathtt{SEL}$ and "prepare" $\mathtt{PREP}$ circuits
            \begin{align} \label{eq:MPF_SEL_PREP}
            \begin{aligned}
                \mathtt{PREP} \ket{0} &:= \sum_{j=1}^{m} \sqrt{\frac{\lvert a_j\rvert}{\|a\|_1}} \ket{j} \\
                \mathtt{SEL} \ket{j} \ket{\psi} &:= \sgn(a_j) \ket{j}W_2^{(k_j)}(t_1,t_0)\ket{\psi}            
            \end{aligned}
            \end{align}
            The circuit $\mathtt{PREP}$ can be implemented without any queries to $U_\alpha$ or $U_H$ whereas $\mathtt{SEL}$ requires $O(L \|\vec{k}\|_\infty)$ queries.  Following the well-conditioned MPF scheme of~\cite{Low2019Multiproduct} we have that $k_j \leq 3 m^2$. This implies that a query to $\mathtt{SEL}$ requires $O(L m^2)$ queries to $U_H$ and $U_{\alpha}$.

            We can use the $\mathtt{SEL}$ and $\mathtt{PREP}$ for a standard LCU block encoding in order to construct a time dependent MPF with base formula $W_2$.
            \begin{lemma}\label{lem:Wapprox}
                Under the assumptions of~\thm{errorMain} and the query model above, for any $[t_0,t_1] \subseteq [0,T]$ the time dependent MPF $W_{2,m}$ with base formula $W_2$ satisfies
                \begin{align}
                    \|W_{2,m}(t_1,t_0) - U(t_1,t_0)\| \in O\left({\|a\|_1} \left(\max_{t\in[t_0,t_1]} \Lambda_{2m+1}(t) T\right)^{2m+1}\right),
                \end{align}
                provided that
                \begin{align}
                    n_c \geq \log\left(\frac{3 \sqrt{\pi}m^{5/2}{L} \max_{t,j} |\partial_t \alpha_j(t)|(t_1 - t_0)^2}{\left(5L\max_{t\in[t_0,t_1]} \Lambda_{2m+1}(t) (t_1 - t_0)\right)^{2m+1}} \right),
                \end{align}
                and can be constructed with a number of queries to $U_H$ and $U_\alpha$ scaling as $O(m^2L)$.
            \end{lemma}
            \begin{proof}
                From Lemma 4 of~\cite{Berry2015Optimal}, we have
                \begin{align}\label{eq:blockEncode}
                \begin{aligned}
                    (\bra{0}\otimes I) (\mathtt{PREP}^\dagger) \mathtt{SEL} (\mathtt{PREP})   (\ket{0}\otimes I) &= \frac{1}{\|a\|_1} \sum_{j=1}^m a_j W_2^{(k_j)}\left(t_1, t_0\right) \\
                    &= W_{2,m}(t_1,t_0)/\|a\|_1.
                \end{aligned}
                \end{align}
                Let $\delta' > 0$ be such that, for all $j$ and $\ell \in \{1,\dots, k_j\}$, 
                \begin{equation}
                    \left\| W_2\left(\Delta t\frac{\ell}{k_j} + t_0,\Delta t\frac{\ell-1}{k_j} + t_0\right) - U_2\left(\Delta t\frac{\ell}{k_j} + t_0,\Delta t \frac{\ell-1}{k_j} + t_0\right) \right\|\leq\delta'.
                \end{equation} 
                where $\Delta t = t_1 - t_0$. Then, by invoking Box 4.1 from~\cite{Nielsen2002QCQI},
                \begin{align}
                    \left\|U_2^{(k_j)}(t_1, t_0) -  W_2^{(k_j)}(t_1, t_0) \right\| \leq k_j \delta'
                \end{align}
                which, since $k_j \leq 3 m^2$, implies that
                \begin{align}
                    \|V_{2,m}(t_1,t_0) - W_{2,m}(t_1,t_0)\| \leq 3 m^2 \delta' \|a\|_1.
                \end{align}
                We supply $\delta'$ using~\lem{2nd_order_formula_construction}, obtaining
                \begin{align}
                    3m^2 \delta' \|a\|_1 \leq 3 m^2 \|a\|_1 L \max_{j, t\in[0,T]} \abs{\dot{\alpha}_j(t + \tau/2)}\frac{T^2}{2^{n_c}},
                \end{align}
                giving us a bound on the discretized MPF $W_{2,m}$ relative to the undiscretized $V_{2,m}$.
                
                It then follows from the triangle inequality and~\thm{errorMain} that
                \begin{align} \label{eq:MPF_discrete_error}
                \begin{aligned}
                    \left\|W_{2,m}(t_1,t_0) -U(t_1,t_0)\right\| &\leq \left\|U_{2,m}(t_1,t_0) -U(t_1,t_0)\right\| 
                    + \|W_{2,m}(t_1,t_0) - U_{2,m}(t_1,t_0)\| \\
                    &\leq \frac{\|a\|_1}{\sqrt{\pi m}} \left(5L\max_{t\in[0,T]} \Lambda_{2m+1}(t) T\right)^{2m+1} + 3m^2L \|a\|_1 \max_{j,t}\abs{\dot{\alpha}_j(t)}\frac{T^2}{2^{n_c}}.
                \end{aligned}
                \end{align}  
                Under the assumption that 
                \begin{equation}
                    n_c \geq \log\left(\frac{3 \sqrt{\pi} m^{5/2}{L} \max_{j,t}\abs{ \dot{\alpha}_j(t)}T^2}{\left(5L\max_{t\in[0,T]} \Lambda_{2m+1}(\tau) T\right)^{2m+1}} \right)
                \end{equation}
                the second term is bounded by the first~\eqref{eq:MPF_discrete_error}, so we have an upper bound
                \begin{equation}
                    \left\|\sum_{j=1}^M a_j \prod_{q=1}^{k_j} W_2\left(Tq/k_j,T(q-1)/k_j\right) -U(T,0)\right\|\leq  \frac{2\|a\|_1}{\sqrt{\pi m}} \left(5L\max_{t\in[0,T]} \Lambda_{2m+1}(t) T\right)^{2m+1}
                \end{equation}
                Since $U(T,0)$ is unitary, we know that the MPF implemented by our algorithm is close to a unitary.  This means that we satisfy the preconditions of robust oblivious amplitude amplification given by Lemma 5 of~\cite{Berry2015Optimal}. This result implies that using $O(\|a\|_1)$ applications of the unitary given by~\eqref{eq:blockEncode}, we can implement an operator $\widetilde{W}(T,0)$ such that (for constant $m$)
                \begin{equation}
                    \|W_{2,m}(t_1,t_0) - U(t_1,t_0)\| \in O\left(\|a\|_1 \left(\max_{t\in[0,T]} \Lambda_{2m+1}(t) (t_1 - t_0)\right)^{2m+1}\right).
                \end{equation}
                The number of queries scales as
                \begin{equation}
                    Q_\mathrm{step} \in O(\|a\|_1 m^2 {L}) \subseteq \widetilde{O}(m^2 {L}).
                \end{equation}
            \end{proof}
            With the short-time simulation costs in place we are now ready to state our main theorem, which bounds the number of queries needed to perform the full multiproduct simulation of a time dependent Hamiltonian.
            \begin{theorem}
                In the query setting above, and under the assumptions of~\thm{errorMain}, and~\lem{adaptive} ($\Lambda_{2m+1}$-bounded $H$ with $K$ bound on $\dot{\Lambda}_{2m+1}$), we have that the number of queries $Q_\mathrm{tot}$ needed to $U_\alpha$ and $U_H$ to construct an operator $W_\mathrm{tot}(T,0)$ simulate a time dependent Hamiltonian of the form $\sum_{j=1}^{L} \alpha_j(t) H_j$ such that $\|(\bra{0} \otimes \openone)W_\mathrm{tot}(T,0) (\ket{0} \otimes \openone) - U(T,0)\| \leq \epsilon $ satisfies
                \begin{equation*}
                    Q_\mathrm{tot} \in \widetilde{O}\left( {L(1+K) \|\Lambda_{2m+1}\|_1 \log^2(1/\epsilon)}\right),
                \end{equation*}
                and the total number of auxiliary qubits is in
                \begin{equation*}
                    \widetilde{O}\left(\log\left(\frac{L(1+K)\norm{\Lambda_{2m+1}}_1\max_{j,t} \lvert \dot{\alpha}_j(t)\rvert T^2}{\epsilon}\right)\right)\;.
                \end{equation*}
            \end{theorem}
            
            \begin{proof}
                From~\lem{adaptive} we have that the number of segments needed to perform a the simulation within error $\epsilon$ obeys
                \begin{equation}
                    r \in \widetilde{O}\left(\frac{((1+K) \norm{\Lambda_{2m+1}}_1)^{1+1/(2m)} }{\epsilon^{1/(2m)}} \right).
                \end{equation}
                Therefore, using~\lem{Wapprox},
                \begin{equation}
                \label{eq:m2lr}
                    Q_\mathrm{tot} \in \widetilde{O}(m^2 L r) \subseteq \widetilde{O}\left(\frac{m^2 L((1+K) \norm{\Lambda_{2m+1}}_1)^{1+1/(2m)} }{\epsilon^{1/(2m)}} \right)
                \end{equation}
                the approximate value of the optimal $m$ can be found by equating the exponentially shrinking component of the cost to the polynomially increasing value of $m$.  We choose $m$ to satisfy
                \begin{equation}
                    m^2 = \left(\frac{(1+K) \norm{\Lambda_{2m+1}}_1}{\epsilon} \right)^{1/2m}.
                \end{equation}
                Solving for $m$ yields
                \begin{equation} \label{eq:fun_with_lambert}
                    m = \frac{\log \big((1+K) \norm{\Lambda_{2m+1}}_1/\epsilon \big)}{4\LambertW\Big(\log \big((1+K) \norm{\Lambda_{2m+1}}_1/\epsilon\big)/4\Big)} \in \widetilde{O}\left(\log \left(\frac{(1+K) \norm{\Lambda_{2m+1}}_1}{\epsilon} \right) \right)
                \end{equation}
                This implies that the query complexity $Q_\mathrm{tot}$ is in
                \begin{equation}
                    \widetilde{O}\left( {L(1+K) \norm{\Lambda_{2m+1}}_1 \log^2(1/\epsilon)}\right).
                \end{equation}
                The number of auxiliary qubits needed in the construction is in $O(\log(m))$ to implement the MPF and $(\lceil\log{L}\rceil+n_c)$ to implement the $U_\alpha$ oracle. From the result of~\lem{Wapprox} we see that $n_c$ dominates this cost. We thus have a number of auxiliary qubits scaling as
                \begin{align}
                \begin{aligned}
                    n_\mathrm{aux} &\in O\left(\log\left(\frac{m^2L\max|\partial_t\alpha_j(t)|T^2}{\left(\max_{t\in[0,T]} \Lambda_{2m+1}(t) T\right)^{2m+1}}\right)\right)\\
                    &\in O\left(\log\left(\frac{m^2Lr\|a\|_1\max|\partial_t\alpha_j(t)|T^2}{\epsilon}\right)\right)\\
                    &\in\widetilde{O}\left(\log\left(\frac{L(1+K)\norm{\Lambda_{2m+1}}_1\max|\partial_t\alpha_j(t)|T^2}{\epsilon}\right)\right)\;
                \end{aligned}
                \end{align}
                where used Eq.~\eqref{eq:m2lr} and Eq.~\eqref{eq:fun_with_lambert} above.
            \end{proof}
            This shows that the cost of quantum simulation using MPFs broadly conforms to the cost scalings that one would expect of previous methods.  In particular, similar to the truncated Dyson series simulation method~\cite{Low2018Interaction,Kieferova2019Dyson} we obtain that the cost of simulating a time dependent Hamiltonian scales near-linearly with time $T$ and poly-logarithmically with $1/\epsilon$. 
            
    \subsection{Numerical Demonstrations} \label{sec:NumericalDemos}
        In the above sections, we developed and characterized MPFs for time dependent simulations by showing their existence and proving error bounds. However, these bounds are unlikely to be the final word on the performance of the algorithm. For example, we already mentioned that, for time independent $H$, the MPF of ~\defn{timedependentMPF} is exact in cases where the Hamiltonian consists of only commuting terms. Yet this behavior is not captured in the bound of \thm{errorMain} because $\Lambda_{2m+1}$ is at least as large as $\norm{H}$. This discrepancy is unrelated to the fact that, in practice, the $2^\mathrm{nd}$-order formula $U_2$ can only be computed approximately.
            
        To begin bridging the gap between algorithm's actual performance and our bounds, we investigate time dependent MPFs empirically through two numerical examples. We compute $U_{2,m}$ for these systems on a classical computer (using matrix computations) and compare the result with the exact propagator (computed within machine $\epsilon$). The vector $\vec{k}\in \mathbb{Z}_+^m$ we will use comes from the bottom half of Table I from \cite{Low2019Multiproduct}, which minimizes $\vec{\norm{k}}$ for $\norm{a}_1 \leq 2$.
            
        In general, deriving an analytical solution for the propagator given a time dependent Hamiltonian is challenging or impossible. To bypass this problem, we will consider a time independent Hamiltonian which is viewed from a ``non-inertial" frame, thereby rendering the dynamics time dependent in the new frame. More specifically, suppose $H$ is a time independent Hamiltonian with propagator $U(t) = e^{-iHt}$ (henceforth the initial time is set to zero). Let $\ket{\psi_t}$ be the solution to the Schrödinger equation $i\partial_t \ket{\psi_t} = H \ket{\psi_t}$. Under a frame transformation $T(t)$, which transforms vectors as $\vert\tilde{\psi}_t\rangle = T(t)\ket{\psi_t}$, the Hamiltonian and propagator transform as 
        \begin{align}
        \begin{aligned}
            \tilde{U}(t) &= T(t) U(t) \\
            \tilde{H}(t) &= i \frac{\partial T(t)}{\partial t} T(t)^\dagger + T(t) H(t) T(t)^\dagger.
        \end{aligned}
        \end{align}
        Thus, in order to benchmark the error of the MPF, we compute $\tilde{U}_k$ for Hamiltonian $\tilde{H}$, then compare with the exact propagator (accurate to machine precision).
        \begin{equation}
            \epsilon_c = \norm{\tilde{U}_{\vec{k}}(t)-T(t)U(t)}
        \end{equation}

        \subsubsection{Example 1: Electron in Magnetic field, Rotating Frame} \label{electron_numerical_example}
            As a very simple first demonstration, consider a spin-1/2 particle (say, electron) in a homogeneous external magnetic field $B$. Choose a coordinate system such that $B$ makes an angle $\theta$ with respect to the $z$-axis, and lies within the $xz$ plane. This system can be described by the Hamiltonian
            \begin{equation}
                H = \mu B(\cos\theta Z/2 + \sin\theta X/2)
            \end{equation}
            where $Z$ and $X$ (and later $Y$) are Pauli operators, and $\mu$ is a coupling parameter that will henceforth be set to one. The propagator $U(t) = e^{-i H t}$ is easy to compute, and corresponds to precession about the magnetic field axis with frequency $B$.
            
            To obtain a time dependent problem, let's shift to a reference frame that rotates with angular frequency $\omega$ about the $z$-axis. The transformation is given by $R_z(\omega t)$, where $R_a$ is the usual $SU(2)$ rotation operator about axis $a$. The Hamiltonian in the rotating frame is
            \begin{equation}
                \tilde{H}(t) = (\omega + B\cos\theta) Z/2 + B\sin\theta(\cos\omega t X/2 + \sin\omega t Y/2)
            \end{equation}
            Because we know that this Hamiltonian is just a transformed time independent system, it is easy to compute the exact propagator $\tilde{U}(t)$.
            \begin{equation}
                \tilde{U}(t) = R_z (\omega t) U(t)
            \end{equation}
    
            Though it is not strictly necessary to run the algorithm, let's compute an appropriate $\Lambda(t)$ upper bound. The spectral norm of $\tilde{H}$ may be upper bounded as
            \begin{equation}
                \norm{\tilde{H}} \leq \frac{|\omega + B \cos\theta|}{2} +  |B \sin\theta| 
            \end{equation}
            while the derivatives $\tilde{H}^{(n)}(t)$ have the bound
            \begin{align}
            \begin{aligned}
                \norm{\tilde{H}^{(n)}(t)} &\leq |B \sin\theta \omega^n| \\
                \sqrt[n+1]{\norm{\tilde{H}^{(n)}(t)}} &\leq \omega \left|\frac{B\sin\theta}{\omega}\right|^{1/n+1}.
            \end{aligned}
            \end{align}
            For $\omega$ not too much larger than $B$, we see then that $\Lambda(t) = \omega$ is an appropriate choice. 
            
            The first thing to check will be that the error has the appropriate power law scaling. Namely, for $M$-term formulas, the error $\epsilon_c$ for small $t$ should scale as $O(t^{2m+1})$ or better. We can check this by computing the ``running power" $p(t,t')$.
            \begin{equation} \label{eq:running_power}
                p(t, t') := \frac{\log \epsilon_{t}/\epsilon_{t'}}{\log t/t'}
            \end{equation}
            For different but small values of $t, t'$, the value of $p$ should approach the expected order of the error: $2m+1$. Indeed, this is precisely the behavior observed in~\fig{power_law_scaling}. For sufficiently small simulation times, a power-law dependence on the simulation error is observed, and the corresponding power is as anticipated. Additionally, we see that the error decreases by orders of magnitude with each additional term once the power-law regime is reached. Choosing $m > 4$ in this example quickly leads to machine precision being the dominant error source.
            \begin{figure}[t] 
                \centering
                \includegraphics[width=0.45\linewidth]{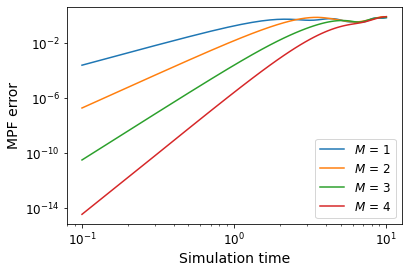}\hspace{0.5cm}
                \includegraphics[width=0.45\linewidth]{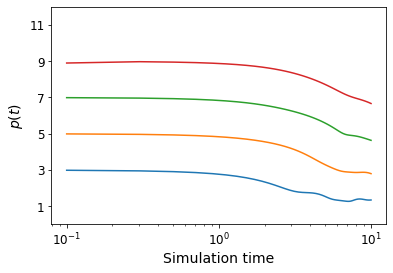}
                \caption{(left) Multiproduct errors plotted against simulation time, for several low-order MPFs, on a log-log plot. Notice the power law scaling for small values of $t$. The parameters used here are $B=1, \omega = 4$, $\theta = \pi/6$. For larger $M$, one quickly runs into machine precision becoming the dominant error source. (right) The running power $p(t, t')$ defined in equation~\eqref{eq:running_power}, with $t' = .3$. Note the plateau corresponds with the anticipated value of $2M+1$. \label{fig:power_law_scaling}}
            \end{figure}
            
            Next, we vary the MPF order $m$ for fixed simulation time $t$. Since $\Lambda = \omega$, our bounds predict an exponential decay in the error, but only provided $t < 1/\omega$. Otherwise, the bounds grow exponentially and say nothing useful about performance. In \fig{exp_decrease_error}, we fix $t$ at several different times and plot the error dependence on the multiproduct order $m$. Past a certain threshold value for $m$ (which increases with $t$) an exponential decay in error is observed, possibly superexponential. It is promising that, even for $t = 10$, the exponential decay is eventually achieved at $m \gtrsim 6$. This suggests our error bounds may be too conservative, and in particular MPFs could \emph{absolutely} converge to $U$ as $m\rightarrow \infty$ in certain circumstances. This would be a notable improvement to product formulas alone, which tend to lead to errors that diverge as $m\rightarrow \infty$ if the time step $t$ remains fixed~\cite{Berry2007Sparse,Wiebe2010Higher,Childs2019Randomization}.  In contrast,~\thm{errorMain} shows that if the time step is sufficiently small, then the MPF converges to the exact result. However, such convergence is not anticipated from the bounds for a large value such as $t=10$.
    
            Indeed, there are good reasons to believe the absolute convergence property holds more generically than this example. No matter how large the order $m$, we are still using a low order formula (such as the midpoint formula $U_2$) as a base. Moreover, recall that the MPF is essentially a sum of product formulas with different numbers of time steps (for the same time interval). As the order $m$ increases, higher weight is given to terms in the multiproduct sum with finer meshes. Correspondingly, terms which have larger time steps, and therefore may not converge properly, become suppressed at large $m$. Such behavior is not reflected in our derived error bounds, so there is likely room for improvement.  
            
            \begin{figure}[t] 
                \centering
                \includegraphics[width=0.6\linewidth]{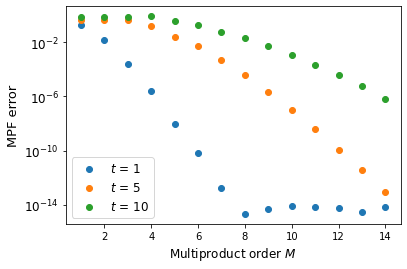}
                \caption{Multiproduct error shows an (super)exponential decrease in error for sufficiently large order $m$. The threshold for this regime is seen to increase as the simulation time increases. This behavior surpasses the expectation of our proven bounds, since there are no guarantees if the time step is too large. Note that, in practice, one should typically split a longer simulation time into smaller steps. The plateau for $t = 1, m>8$ is a result of machine precision limitations. Parameter values: $B=1, \omega = 4, \theta = \pi/6$.\label{fig:exp_decrease_error}}
            \end{figure}
            
            Practitioners in quantum simulation will likely want to know how MPFs fare against the more-familiar and simpler Trotter techniques. To facilitate this, numerical studies across a broad range of physically interesting systems would be desirable. Such a comprehensive analysis must be left to future work; here we will be satisfied with comparing MPFs with Trotterization for our spin-1/2 example. Our Trotterization is just an MPF with $m=1$, corresponding to a midpoint-formula approximation. To facilitate as fair a comparison as possible, we will keep the number of midpoint-formula queries between the two methods the same. That is, we will enforce the requirement
            \begin{align} \label{eq:fairness_condition}
                r_\mathrm{trot} = r_\mathrm{mpf} \max_j| k_j|
            \end{align}
            where $r_\mathrm{trot}$ and $r_\mathrm{mpf}$ are the number of time steps for Trotter and MPF, respectively. Note that the number of midpoint queries per time step for Trotter and MPFs are 1 and $ \max_j| k_j|$ respectively.
    
            \fig{Trott_MPF_fight} shows the results of these head-to-head comparisons for the several values of the magnetic field $B$ and rotation frequency $\omega$. The number of MPF steps $r_\mathrm{mpf}$ is fixed at 10, a reasonable value since it makes $\Lambda \Delta t \sim 1$ on each subinterval. As the MPF order increases, so does the number of Trotter steps $r_\mathrm{trot}$ by the  condition \eqref{eq:fairness_condition}. These results show that, for $m$ not too large, MPFs outperform Trotterization, at a value of the error $\epsilon$ which is large enough to be of practical significance for scientific or industrial applications.
            \begin{figure}[t] 
                \centering
                \includegraphics[width = \linewidth]{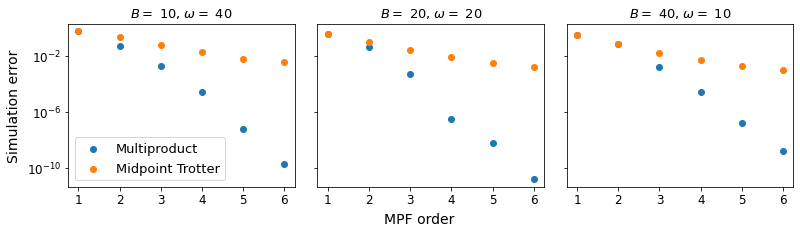}
                \caption{Simulation error (spectral norm) of MPFs and midpoint-formula Trotterization, for the spin-1/2 system, with number of midpoint-formula queries kept fixed between the two. Each plot corresponds to different values for the parameters $B$ and $\omega$, always with $\theta = \pi/6$. The number of MPF steps $r_\mathrm{mpf}$ is fixed at 10. The crossover point tends to occur for error $\epsilon > 10^{-3}$, which is large enough for practical significance. Such error tolerances can be orders of magnitude larger than those required in many quantum simulation proposals.~\cite{Reiher2017Reaction,Lee2021Tensor} \label{fig:Trott_MPF_fight}}
            \end{figure}
            Admittedly, the spin-1/2 system considered above is rather simplistic. However, we anticipate most of the inferences drawn above to hold even as we increase the dimensionality of the Hilbert space. For example, though the complexity of simulating $U_2$ generally increases as $\dim(H)$ grows, it does so both for MPFs and Trotterization. Nevertheless, benchmarking of MPFs on more complex systems would be a welcomed proof (or disproof) of concept.
    
        \subsubsection{Example 2: Spin Chain in Interaction Picture} \label{spinchain}
            As a first step towards more complicated many-body quantum systems, we investigate the use of MPFs for a particular one-dimensional chain of spins with nearest-neighbor interactions. As before, we will take advantage of a change of reference frame, allowing us to compare the multiproduct simulations with an machine precision simulation in an equivalent, time independent frame. In pursuit of a good case study, we seek a (time independent) Hamiltonian $H = H_0 + H_1$ which produces nontrivial time-dependence in the so-called ``interaction picture." We also ask that it satisfies a simple conservation law. A special instance of the 1D $XX$ model will suffice to meet these conditions. Consider a circular chain of $N$ qubits with nearest-neighbor hopping interactions, with Hamiltonian $H = H_0 + H_1$ of the form
            \begin{align}
            \begin{aligned}
                H_0 &= \sum_{k=1}^N \frac{\omega_k}{2} Z_k \\
                H_1 &= \sum_{k=1}^N \frac{J_k}{2} \left(X_k X_{k+1} + Y_k Y_{k+1}\right).
            \end{aligned}
            \end{align}
            Here, $\omega_k, J_k$ are real, site-dependent parameters, and any index increments are done modulo $N$. For any value of the parameters, the Hamiltonian conserves the total magnetization $\mu := \sum_k Z_k$. 
            \begin{equation}
                [\mu,H] = 0
            \end{equation}
    
            Conceptually will think of $H_0$ as a ``base" Hamiltonian, with perturbation $H_1$ generating interactions, though we make no assumptions as to the smallness of $H_1$. We will switch to an interaction picture which is comoving with the simple dynamics of $H_0$. In this frame, the Hamiltonian $\tilde{H}(t)$ is given by
            \begin{align} \label{eq:XX_interaction_pic}
            \begin{aligned}
                \tilde{H}(t) &= e^{i H_0 t} H_1 e^{-i H_0 t} \\
                &=\sum_{k=1}^N \frac{J_k}{2} \left(X_k(t) X_{k+1}(t) + Y_k(t) Y_{k+1}(t)\right)
            \end{aligned}
            \end{align}
            where
            \begin{align} \label{eq:time_e}
            \begin{aligned}
                X_k(t) &:= e^{i H_0 t} X_k e^{-i H_0 t} = \cos(\omega_k t) X_k - \sin(\omega_k t) Y_k \\
                Y_k(t) &:= e^{i H_0 t} Y_k e^{-i H_0 t} =\cos(\omega_k t) Y_k + \sin(\omega_k t) X_k
            \end{aligned}
            \end{align}
            correspond to rotating the pauli vectors about the $z$-axis with frequency $\omega_k$. We can express equation \eqref{eq:XX_interaction_pic} in terms of the time independent $X_k$ and $Y_k$ of the original frame,
            \begin{equation} \label{eq:XX_interaction_pic_2}
                \tilde{H}(t) = \sum_{k=1}^N \frac{J_k}{2}\big\{ \cos(\Delta\omega_k t) (X_k X_{k+1} + Y_k Y_{k+1}) + \sin(\Delta \omega_k t)(X_k Y_{k+1}-Y_k X_{k+1})\big\},
            \end{equation}
            where $\Delta \omega_k = \omega_{k+1}-\omega_k$. We see that having different qubit frequencies $\omega_k$ on neighboring sites should give rise to a nontrivial time-dependence in $\tilde{H}$. Another indication is gleaned from the commutator of $H_0$ and $H_1$.
            \begin{equation}
                [H_0, H_1] = -i \sum_k \frac{J_k}{2}(X_k Y_{k+1}-Y_k X_{k+1})(\Delta \omega_k).
            \end{equation}
            The time dependence in $H_I$ will be nontrivial when the commutator does not vanish, as occurs when $\Delta\omega_k \neq 0$. A simple choice is to set
            \begin{equation} \label{eq:minimal_parameters}
                J_k = J, \quad \omega_k = (-1)^k \omega.
            \end{equation}
            That is, the qubit frequency alternates sign at each site, and the coupling is translation invariant. For simplicity, we consider only even numbers of qubits to avoid frequency-matching at $k = N$.  
            Plugging \eqref{eq:minimal_parameters} into the expression for $\tilde{H}$ in \eqref{eq:XX_interaction_pic_2},
            \begin{equation} \label{eq:TD_spin_chain}
                \tilde{H}(t) = \frac{J}{2} \big(\cos(2\omega t) G_1 + \sin(2\omega t) G_2\big)
            \end{equation}
            where
            \begin{align}
            \begin{aligned}
                G_1 &= \sum_{k=1}^N X_k X_{k+1} + Y_k Y_{k+1} \\
                G_2 &= \sum_{k=1}^N (-1)^k (X_k Y_{k+1} - Y_k X_{k+1})
            \end{aligned}
            \end{align}
            As a final check, one can see that $G_1$ and $G_2$ do not commute with each other. Yet they both commute with $\mu$. Thus, $\tilde{H}(t)$ given in \eqref{eq:TD_spin_chain} is our model system to investigate.
    
            Assuming $\tilde{H}$ commutes with an observable $\mu$, to what degree does the MPF $U_{2,m}$ conserve $\mu$? Since $U_{2,m}$ is an algebraic combination of exponentials of $\tilde{H}$, $U_{2,m}$ also commutes with $\mu$. If $U_{2,m}$ were truly unitary, then the operator $\mu$ would evolve in the Heisenberg picture as
            \begin{equation}
                \mu_{2,m}(t) := U_{2,m}^\dagger(t) \mu U_{2,m}(t) = \mu
            \end{equation}
            as it would under the exact propagator $U$. However, $U_{2,m}$ is not necessarily unitary.
            \begin{equation}
                U_{2,m}^\dagger(t) U_{2,m}(t) \neq \openone 
            \end{equation}
            This implies that conservation laws are only approximately conserved. 
            \begin{equation}
                \mu_{2,m}(t) - \mu = \left(U_{2,m}^\dagger(t) U_{2,m}(t) - \openone\right)\mu \neq 0.
            \end{equation}
            Because $U_{2,m}(t) - U(t) \in O(t^{2m+1})$, so is $\left(U_{2,m}^\dagger(t) U_{2,m}(t) - \openone\right)$. 
            
            \fig{Conservation_laws} plots the deviations in the conserved $\mu$, $\|\mu - \mu_{2,m}(t)\|$, with respect to the simulation time. As the simulation time tends to zero, we see the expected power-law scaling, as evidence by the linear relationship on a log-log plot. For larger $m$, the slope and hence power $p$ increases, corresponding to improved performance. We can extract the power as the slope of the line, and this is plotted in the right frame. Notice there are sudden dips in the error at specific simulation times, which tend to occur before reaching the power law scaling regime. This could be due to cancellation between two terms in an error series of comparable magnitude. Similar phenomenon occurs in several other contexts, such as the error from adiabatic evolution~\cite{Wiebe2012Adiabatic}. Conclusive identification of these phenomenon will require further study.
            \begin{figure}[t] 
                \centering
                \includegraphics[width=0.45\linewidth]{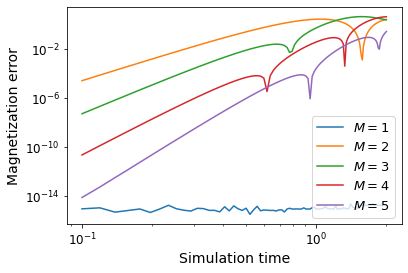}\hspace{0.5cm}
                \includegraphics[width=0.45\linewidth]{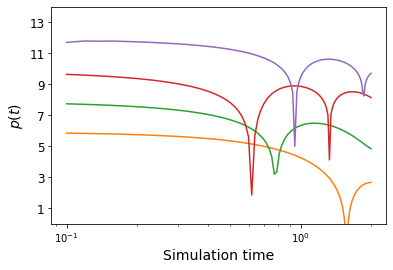}
                \caption{(left) Deviations from the conservation of magnetization $\mu$ under time-evolution by MPFs. Note that the order $m=1$ is simply a product formula evolution, which conserves $\mu$ exactly. For small simulation times, the expected power-law scaling is observed, with larger powers as $m$ increases. (right) The running power $p(t, t')$ as defined in \eqref{eq:running_power}, with $t' = .3$. Note the plateau at $2m + 2$, which indicates slightly better convergence than naively expected ($p = 2m + 1$). This phenomenon generalizes to other systems and is formalized by~\thm{Deviation_from_unitarity}. Parameter values: $N = 4$, $J = 1$, $\omega = 4$\label{fig:Conservation_laws}}
            \end{figure}
            Naively, we would expect $p = 2m+1$, but here we actually get slightly better: $p = 2m+2$. In fact, this scaling can be justified.  The following argument, a variant of which can be found in~\cite{Chin2010Multiproduct}, shows that the integrator is nearly unitary.
            \begin{theorem} \label{thm:Deviation_from_unitarity}
               The deviation of $U_{2,m}$ from being unitary obeys
                \begin{equation*}
                    \|U_{2,m}^\dagger(t) U_{2,m}(t) - \openone\| \in O(t^{2m+2})
                \end{equation*}
            \end{theorem}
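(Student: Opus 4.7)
My plan starts from an algebraic identity that converts deviation from unitarity into a sum of pairwise differences. Writing $U_j := U_2^{(k_j)}$ and using the fact that each $U_j$ is unitary together with $\sum_j a_j = 1$ (the first row of the Vandermonde system \eqref{eq:vandermonde}), symmetrizing $U_{\vec{k}}^\dagger U_{\vec{k}} = \sum_{i,j} a_i a_j U_i^\dagger U_j$ via the identity $U_i^\dagger U_j + U_j^\dagger U_i = 2\openone - (U_i - U_j)^\dagger (U_i - U_j)$ yields
\begin{align*}
U_{\vec{k}}^\dagger U_{\vec{k}} - \openone = -\tfrac{1}{2} \sum_{i, j} a_i a_j \,(U_i - U_j)^\dagger (U_i - U_j).
\end{align*}
The gain is that this expression only involves \emph{differences} of the base formulas, which vanish more rapidly in $t$ than any individual $U_j$.

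Next, I would write $U_j(t, 0) = U(t, 0) + A_j(t)$ where $A_j(t)$ is the single-formula error, so that $U_i - U_j = A_i - A_j$. Expanding the quadratic form and using $A := \sum_j a_j A_j = U_{\vec{k}} - U$ produces
\begin{align*}
U_{\vec{k}}^\dagger U_{\vec{k}} - \openone = A(t)^\dagger A(t) - \sum_j a_j A_j(t)^\dagger A_j(t).
\end{align*}
Since $A(t) \in O(t^{2M+1})$ by \thm{errorMain}, we have $A^\dagger A \in O(t^{4M+2}) \subseteq O(t^{2M+2})$ for every $M \ge 0$. The remaining task is to bound the weighted diagonal sum $\sum_j a_j A_j^\dagger A_j$ at the same order.

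For that, I would use the structural expansion that underlies the MPF construction and the proof of \thm{errorMain}: each single-formula error admits
\begin{align*}
A_j(t) = \sum_{\ell \ge 1} \frac{c_\ell(t)}{k_j^{2\ell}}, \qquad c_\ell(t) \in O(t^{2\ell+1}),
\end{align*}
where the operator-valued coefficients $c_\ell$ do not depend on $j$. This is the hallmark of a symmetric second-order formula stepped $k_j$ times: the error contains only even powers of the step size $t/k_j$, weighted by a total-time factor that raises the leading scaling from $t^{2\ell}$ to $t^{2\ell+1}$. Substituting gives
\begin{align*}
\sum_j a_j A_j^\dagger A_j = \sum_{\ell, m \ge 1} c_\ell(t)^\dagger c_m(t) \,\sigma_{\ell+m}, \qquad \sigma_n := \sum_j \frac{a_j}{k_j^{2n}},
\end{align*}
and the remaining Vandermonde conditions in \eqref{eq:vandermonde} force $\sigma_n = 0$ for $n = 1, \dots, M-1$. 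Hence only pairs with $\ell + m \ge M$ contribute, and each such term carries order $t^{2(\ell+m)+2} \in O(t^{2M+2})$, completing the bound after combining with the $A^\dagger A$ piece.

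The main obstacle is not the algebra --- which is largely mechanical --- but the rigorous justification of the $1/k_j^{2\ell}$ asymptotic expansion in the time-dependent setting with the claimed $t$-scaling of each $c_\ell$. This is the same structural fact that drives \thm{errorMain}, and it can either be lifted from the proof of that theorem or derived afresh via a BCH-type analysis of the midpoint formula within the discrete clock construction of \sec{DiscreteTimeDisplacementOperator}. Once the expansion is in place with $c_\ell(t) \in O(t^{2\ell+1})$, the Vandermonde cancellation produces the advertised $O(t^{2M+2})$ deviation from unitarity.
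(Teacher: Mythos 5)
Your argument is correct in outline but takes a genuinely different route from the paper's. The paper's proof is a short symmetry argument: writing $E := U_{\vec{k}} - U$ and $N := U^\dagger E + E^\dagger U$, it observes $U_{\vec{k}}^\dagger U_{\vec{k}} = \openone + N + O(t^{4M+2})$, that all derivatives of $N$ through order $2M$ vanish at $t=0$ because $E \in O(t^{2M+1})$, and that the $(2M+1)$st derivative also vanishes at $t=0$ because the time-symmetry of $U$ and $U_{\vec{k}}$ forces $(E^\dagger)^{(2M+1)}(0) = -E^{(2M+1)}(0)$; hence $N \in O(t^{2M+2})$. It needs only \thm{errorMain}, smoothness, and the symmetry of the formula --- no Vandermonde structure and no expansion of single-formula errors. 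Your proof instead converts the deviation from unitarity into $A^\dagger A - \sum_j a_j A_j^\dagger A_j$ (both of your identities are algebraically correct, using unitarity of each $U_2^{(k_j)}$ and $\sum_j a_j = 1$) and then cancels the diagonal sum with the remaining Vandermonde conditions of \eqref{eq:vandermonde}. This buys a more structural statement --- it localizes the non-unitarity in differences of the constituent product formulas and shows explicitly how the order conditions supply the extra power of $t$ --- at the price of requiring the Gragg-type even-power error expansion $A_j(t) = \sum_{\ell} c_\ell(t)\,k_j^{-2\ell} + \text{remainder}$ with $j$-independent $c_\ell \in O(t^{2\ell+1})$.

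That expansion is the one real caveat. It does not follow from \thm{errorMain}, whose proof (\app{proof}) only bounds Taylor remainders of $U_{\vec{k}}$ and $U$ and never produces a $k_j$-independent coefficient structure; it is closer to what is invoked, with citations to \cite{chin2010multi,Low2019} and the single-step expansion of \cite{Wiebe_2010}, inside the proof of \thm{timeorder}. If you take this route you should either cite the time-dependent Gragg/Bulirsch--Stoer expansion for repeated symmetric steps explicitly or derive it (e.g., through the clock construction), and state it as a finite expansion up to $\ell \le M-1$ plus an $O(t^{2M+1})$ remainder rather than an infinite series; the cross terms with the remainder and the remainder squared are then $O(t^{2M+4})$ and $O(t^{4M+2})$, so the order count survives. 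Note finally that both proofs ultimately lean on the symmetry of $U_2$: the paper uses it directly through the time-reversal of $E$, while in your version it is hidden in the fact that only even powers $k_j^{-2\ell}$ appear in the expansion.
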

            \begin{proof}
                We suppress all function evaluations at $t$ when convenient. Let $E := U_{2,m} - U$, so that $U_{2,m} = U + E$. Then, using the unitarity of $U$ and the fact that $E \in O(t^{2m+1})$,
                \begin{equation}
                    U_{2,m}^\dagger U_{2,m} = \openone + N + O(t^{4m+2})
                \end{equation}
                where
                \begin{align}
                    N := U^\dagger E + E^\dagger U.
                \end{align}
                Since $N \in O(t^{2m + 1})$, all of its derivatives up to degree $2m$ vanish when evaluated at $t = 0$. Hence, it suffices to show that
                \begin{equation}
                    N^{(2m + 1)}(0) = 0.
                \end{equation}
                We can expand this derivative in terms of $E$ and $U$ using the binomial theorem. When we evaluate at $t=0$, those terms with derivative less than degree $2m+1$ in $E$ vanish. We are left with
                \begin{equation} \label{eq:N_deriv}
                    N^{(2m + 1)}(0) = E^{\dagger (2m + 1)}(0) U(0) + U^\dagger(0) E^{(2m+1)}(0).
                \end{equation}
                We have $U(0) = U^\dagger(0) = \openone$. Moreover, by the time-symmetric property of $U$ and $U_{2,m}$, $E(t)$ is also symmetric. Therefore
                \begin{equation}
                    E^{\dagger(2m+1)}(0) = E^{(2m+1)}(-t) \Big\rvert_{t=0} = -E^{(2m+1)}(0).
                \end{equation}
                Hence, the two terms in~\eqref{eq:N_deriv} cancel, yielding $N^{(2m+1)}(0) = 0$. This completes the proof. 
            \end{proof}
            In summary, though MPFs do not inherently preserve commutations laws, the error is due to nonunitarity in $U_{2,m}$. This can be bounded and reduced in a systematic way, either by decreasing the time step or increasing the MPF order.

\section{Conclusion} \label{sec:Conclusion}
    The main contribution of this paper is a computational reduction, based on the $(t,t')$ formalism, of time dependent systems to time independent ones, allowing for the replacement of ordered operator exponentials with ordinary operator exponentials acting on a higher but finite dimensional space. This augmented clock system may be directly simulated by quantum algorithms designed for time independent Hamiltonians, thus extending their domain of applicability. In particular, we provide the first nontrivial application of qubitization to time dependent Hamiltonians. Though our analysis does not show improvements over alternatives for time dependent simulation, such as Trotter, we expect the fault to lay within the analysis rather than the method itself. Simple numerics may elucidate whether this claim is plausible. Besides direct simulation, the clock framework provides a useful conceptual tool for developing algorithms. As a demonstration, we adapt the multiproduct formalism to the time dependent case, and in turn provide a simulation algorithm that not only has commutator scaling, but also outperforms time dependent Trotter-Suzuki methods. We support our theoretical findings with numerical demonstrations, which indicate the improved performance of time dependent MPFs over low-order product formulas. 
    
    This work opens up a number of interesting possibilities. Most obvious, in our view, is to determine whether the clock formalism can lead to algorithms for time dependent Hamiltonian simulation that match proven lower bounds through the use of qubitization. Another open question is whether these techniques could be used to translate commutator bounds for product formulas~\cite{Childs2021Theory} over to the time dependent case. This would be a significant step towards the development of a complete understanding of the error in Trotter-Suzuki formulas, since for the first time we would have a bound on the error of ordered operator exponentials that yields the anticipated commutator scaling.
    
    Regarding the MPF algorithm specifically, there is a possibility that MPFs converge to the propagator in the limit of large order $m$ regardless of the time step size, assuming sufficient smoothness in $H$. The corresponding statement is not true for Trotter-Suzuki: smaller and smaller time intervals must be taken to ensure convergence as one reaches higher order formulas. Proving (or disproving) absolute convergence would be a valuable avenue for future research. On the numerical side, more convincing demonstrations of time dependent MPFs, using larger systems, would be desirable. 

\section*{Acknowledgments}
    We thank Jeffrey Schenker and Dominic Berry for helpful discussions. This work was supported in part by the U.S. Department of Energy (DOE), Office of Science, Office of Nuclear Physics, Inqubator for Quantum Simulation (IQuS) under Award Number DOE (NP) Award DE-SC0020970, as well as awards DE-SC0021152 and DE-SC0013365, and by the National Science Foundation Graduate Research Fellowship under Grant No. DGE-1848739.
    NW's work on this project is supported by ``Embedding  Quantum Computing  into  Many-body  Frameworks  for  Strongly  Correlated  Molecular  and  Materials  Systems''  project,  which  is funded  by  the DOE Office  of Science,  Office  of  Basic  Energy  Sciences,  the  Division of Chemical Sciences, Geosciences, and Biosciences.
    
\appendix

\section{Clock Space Technical Lemmas} \label{app:clocklemmas}
    Here we provide the proofs of several technical lemmas which are listed in~\sec{Finite_Clock_Space}.

    \begin{proof}[Proof of~\lem{canonical_commutator}]
        We proceed in several steps, first by computing $[U_+, C(H)]$. We have
        \begin{align}
        \begin{aligned}
            [U_+, C(H)] &= \sum_{j=0}^{N_c-1} H_j \otimes [U_+, \vert j\rangle\langle j\vert] \\
            &= \sum_{j=0}^{N_c-1} H_j \otimes (\vert j+1\rangle\langle j\vert - \vert j\rangle \langle j-1 \vert.
        \end{aligned}
        \end{align}
        By splitting the sum and reindexing (all increments modulo $N_c$), we can move the difference to the $H_j$, giving
        \begin{align}
        \begin{aligned}
            [U_+, C(H)] &= \sum_j (H_j - H_{j+1}) \otimes \vert j+1\rangle\langle j\vert \\
            &= - U_+ \sum_j (H_{j+1} - H_j) \otimes \vert j\rangle\langle j\vert.
        \end{aligned}
        \end{align}
        Next, we have that $[U_-, C(H)] = -[U_+, C(H)]^\dagger$. Thus,
        \begin{align}
            [U_+ - U_-, C(H)] = -2 \Re\left(U_+ \sum_j (H_{j+1} - H_j) \otimes \vert j\rangle\langle j\vert\right)
        \end{align}
        and the full result follows almost immediately from the definition of $\Delta$ given in equation~\eqref{eq:Deltadef}.

        As for the upper bound, we note that $\|\Re(A)\| \leq \|A\|$ for any finite-dimensional $A$, and by unitary invariance of the spectral norm we have
        \begin{align}
            \|[\Delta, C(H)]\| \leq \left\|\sum_j \frac{H_{j+1}-H_j}{\delta t} \otimes \vert j\rangle\langle j\vert\right\| = \max_j \left\|\frac{H_{j+1} - H_j}{\delta t}\right\|.
        \end{align}
        The upper bound then follows from the claim
        \begin{align}
            \left\|\frac{H_{j+1} - H_j}{\delta t}\right\| \leq \max_{t \in [t_j, t_{j+1}]} \|\dot{H}(t)\|
        \end{align}
        coming from a the fundamental theorem of calculus and the triangle inequality.
    \end{proof}
    \begin{proof}[Proof of~\lem{normalization}]
        By cyclicity, the normalization $\mathcal{N}$ is the same for all $\vert\phi_j\rangle$, so we consider $j = 0$. Because $\vert\phi_0\rangle$ is normalized in the Euclidean norm, we have
        \begin{align}
        \begin{aligned}
            \mathcal{N}&=\sum_{j=0}^{N_c-1}e^{-2|\delta t^2j|_c^2/\sigma^2} \\
            &=\sum_{j=0}^{N_c/2-1}e^{-2j^2\delta t^2/\sigma^2}+\sum_{j=N_c/2}^{N_c-1}e^{-2(N_c-j)^2\delta t^2/\sigma^2} \\
            &=1+\sum_{j=1}^{N_c/2-1}e^{-2j^2\delta t^2/\sigma^2}+\sum_{j=1}^{N_c/2}e^{-2j^2\delta t^2/\sigma^2} \\
            &=\sum_{j=0}^{\frac{N_c}{2}-1}e^{-2j^2\delta t^2/\sigma^2}+\sum_{j=0}^{\frac{N_c}{2}}e^{-2j^2\delta t^2/\sigma^2}-1
        \end{aligned}
        \end{align}
        We may lower bound the sums as Riemann approximations to a Gaussian integral, giving error functions $\erf$.
        \begin{equation}
        \mathcal{N}\geq \sqrt{\frac{\pi}{8}}\left(\erf\left(\frac{T+2\delta t}{\sqrt{2}\sigma}\right)+\erf\left(\frac{T}{\sqrt{2}\sigma}\right)\right)-1>\sqrt{\frac{\pi}{2}}\frac{\sigma}{\delta t}\erf\left(\frac{T}{\sqrt{2}\sigma}\right)-1\;,
        \end{equation}
        which then implies
        \begin{align}
        \begin{aligned}
            \frac{1}{\mathcal{N}}\leq \sqrt{2/\pi}(\delta t/\sigma)\frac{1}{\erf\left(\frac{T}{\sqrt{2}\sigma}\right)-\sqrt{\frac{2}{\pi}}\frac{\delta t}{\sigma}}&=\sqrt{\frac{2}{\pi}}(\delta t/\sigma)+O\left((\delta t/\sigma)\left(\frac{\delta t}{\sigma}+e^{-\frac{T^2}{2\sigma^2}}\right)\right) \\
            &\in O(\delta t/\sigma).
        \end{aligned}
        \end{align}
        The result follows simply from taking a square root.      
    \end{proof}

\section{Signature Matrix Decomposition} \label{app:signature_matrix_decomposition}
    Here we provide an overview of the signature matrix decomposition, a.k.a. the alternating sign trick, which was used in our clock space qubitization algorithm of~\sec{TDQubitization} for achieving an LCU expression for diagonal (or easily diagonalized) linear operators. While this technique has been a part of the digital quantum simulation toolbox for some time, unfortunately the literature leaves no clear trace of it. Because of this, we hope the reader will find this overview helpful beyond our present application, by filling in a needed record.

    Let $H$ be a Hermitian operator on $D$ dimensions, with diagonal decomposition $H = \sum_{j=1}^D \lambda_j \ketbra{j}$. The question is how we can write this operator as a sum of unitaries, at least to some apporximation. In looking for an appropriate set $\{U_j\}$, it makes sense to restrict our attention to those diagonal in the same basis as $H$. We may naturally restrict $U_j$ to be Hermitian because $H$ is as well. These stringent requirements force $U_j$ to be a so-called signature matrices: diagonal matrices with nonzero entries $\pm 1$.

    To state the idea clearly, we focus on a single eigenvalue $\lambda_j$. We count up to $\lambda_j$ by units of 1 until $\lceil\lambda_j\rceil$ is reached. Then, we alternate between adding units of $-1$ and $+1$. The last step may seem odd, but is necessary because, with unitaries, we can't simply add zero. Nor can we stop the adding procedure before \emph{all} of the eigenvalues have been reached by additions of 1. 

    Let's now proceed more formally. Let $L = \lceil \norm{H} \rceil$ be the first integer larger than the largest eigenvalue of $H$. Define a signature matrix $U_k$ for each $k \in \{1,\dots,L\}$ as follows.
    \begin{align}
        U_k = \sum_{j=1}^D (-1)^{k [k> \lambda_j]} \ketbra{j}
    \end{align}
    Here, $[P]$ is the boolean function for proposition $P$ assigning 1 to true, 0 to false. We see that, for $k$ even, $U_k = \openone$ is the identity operator, while for odd $k$ $U_k$ has eigenvalue $-1$ whenever $j$ is such that $k > \lambda_j$.

    Let $G = \sum_{k=1}^L U_k$. Then $G$ is also diagonal in the $\ket{j}$ basis, and moreover the associated eigenvalue $\eta_j$ is given by
    \begin{align}
    \begin{aligned}
        \eta_j &= \sum_{k=1}^\Lambda (-1)^{k [k >\lambda_j]}
        &= \lfloor \lambda_j \rfloor + O(1)
    \end{aligned}
    \end{align}
    where $O(1)$ in fact denotes an integer from the set $\{-1,0,1\}$. Thus, the error between $\eta_j$ and $\lambda_j$ is upper bounded by 2. 

    This might not seem like a good approximation, especially when $\lambda_j$ is small. But we can artificially increase the size of $\lambda_j$ by performing the same procedure for $H/\delta$ for suitably small $\delta$, then multiplying by $\delta$. Let $L_\delta = \lceil \norm{H}/\delta \rceil$. Then
    \begin{align}
        H/\delta = \sum_{k=1}^{L_\delta} U_k + O(1) 
    \end{align}
    so
    \begin{align}
        H = \sum_{k=1}^{L_\delta} \delta U_k + O(\delta).
    \end{align}
    We've succeeded at expressing $H$ in LCU form to accuracy $O(\delta)$ using $L_\delta$ terms. 

    What about LCU computation? If $H$ is defined on $n$ qubits, we need $H$ to be efficiently diagonalizable by a unitary circuit $W$ into the computational basis. We then need to construct the $\mathtt{PREP}$ and $\mathtt{SEL}$ oracles. The $\mathtt{PREP}$ is simple enough: after normalization we just need a uniform superposition. Meanwhile, the $\mathtt{SEL}$ requires controlled $U_k$ operations. Each $U_k$ can be constructed with the help of a classical comparator circuit to compare each $\lambda_j$ to the integer $k$. The number of auxiliary qubits we will need is $\lceil \log L_\delta \rceil \in O(\log \norm{H}/\delta)$ to get accuracy $\delta$. We will leave the discussion at that: suffice to say that because these constructions exist, our query complexities give an accurate reading on the gate simulation complexity.

\section{Tools from Combinatorics} \label{app:milieu}
    This section is a reference for several tools from combinatorics used, especially in connection to the MPF error analysis of~\sec{ErrorAnalysis}.

    The simple factorial $n!$ counts the number of permutations of $n$ objects, and is usefully approximated by \emph{Stirling's approximation}. In the paper, we always make use of a version of the approximation which gives strict bounds for $n \in \mathbb{Z}_+$. 
    \begin{equation}
        \sqrt{2\pi n}\left(\frac{n}{e}\right)^n < n! < \sqrt{2\pi n}\left(\frac{n}{e}\right)^n e^{1/(12n)}
    \end{equation}
    These bounds are extremely tight, even for small $n$.
    
    The \emph{multinomial coefficient} is a generalization of the more common binomial coefficient, and it arises in several combinatorial situations. It is defined by
    \begin{align} \label{eq:multinomial}
        \binom{n}{n_1, ..., n_k} := \frac{n!}{n_1!n_2!...n_k!}
    \end{align}
    where $n \in \mathbb{Z}_+$ and the $(n_\ell)_{\ell=1}^k$ are nonnegative integers which sum to $n$. It is a positive integer corresponding to the number of distinct ways of placing $n$ distinguishable items into $k$ boxes, where each box has a fixed number $n_\ell$ of items. In this work, we will find occasion to make use of the multinomial when evaluating high-order derivatives of a product.
    \begin{align}
        \left(\frac{d}{dt}\right)^n f_1(t)f_2(t)\dots f_k (t)
    \end{align}
    Here, $(f_\ell)_{\ell=1}^k$ are $n$-differentiable functions of $t \in \mathbb{R}$. Employing the product rule, one is left to count all the possible combinations of derivatives of each $f_\ell$. It turns out that the multinomial is suited for this. 
    \begin{align} \label{eq:multinomial_expansion}
        \left(\frac{d}{dt}\right)^n \prod_{\ell=1}^k f_\ell (t) = \sum_N \binom{n}{n_1,\dots, n_k} \prod_{\ell=1}^k \left(\frac{d}{dt}\right)^{n_\ell} f_\ell (t)
    \end{align}
    The sum is taken over the set $N$ of sequences of nonnegative integers $(n_\ell)_{\ell=1}^k$ summing to $n$. A useful property is that
    \begin{align}
        \sum_N \binom{n}{n_1,\dots,n_k} = k^n
    \end{align}
    for nonnegative integers $k, n$ (with convention $0^0 = \lim_{x\rightarrow0} x^x = 1$).
    
    Besides derivatives of products, we will also need to bound derivatives of ordinary exponentials of a time dependent matrix. Useful for this purpose is an expression for derivatives of exponentials of a scalar function $a(t)$.
    \begin{align}
        \left(\frac{d}{dt}\right)^n e^{a(t)}
    \end{align}
    The solution we rely on is \emph{Faà di Bruno's formula}, which asserts that
    \begin{align} \label{eq:Faa_di_Bruno}
        \left(\frac{d}{dt}\right)^n e^{a(t)} = e^{a(t)} Y_n(a'(t), a''(t),\dots, a^{(n)}(t)) 
    \end{align}
    where $Y_n$ is the \emph{complete exponential Bell polynomial}~\cite{Comtet2012Combinatorics}. An explicit formula is given by
    \begin{align} \label{eq:Bell_def}
        Y_n(x_1, x_2, \dots, x_n) = \sum_C \frac{n!}{c_1!c_2!\dots c_n!} \prod_{j=1}^n \left(\frac{x_j}{j!}\right)^{c_j}
    \end{align}
    where the sum is taken over the set $C$ of all sequences $(c_j)_{j=1}^n$ such that $c_j \geq 0$ and
    \begin{align} \label{eq:sum_rule}
        c_1 + 2 c_2 + \dots + n c_n = n.
    \end{align}
    Essentially, each coefficient in $Y_n$ counts the ways one can partition a set of fixed size $n$ into subsets of given sizes and number. When one simply wants to count the total number of possible partitions, one is led to the \emph{Bell numbers} $b_n$. These are related to the $Y_n$ by evaluating all arguments to $1$.
    \begin{align}
        b_n = Y_n(1,1,...1)
    \end{align}
    More generally, for any $x \in \mathbb{R}$,
    \begin{align}
        Y_n(x, x^2, \dots, x^n) = x^n b_n, 
    \end{align}
    which can be seen directly from \eqref{eq:Bell_def} along with the sum rule \eqref{eq:sum_rule}. The Bell numbers $b_n$ grow combinatorially; in particular, the following upper bound~\cite{Berend2010Improved} is useful.
    \begin{align} \label{eq:bellnumbound}
        b_n < \left(\frac{.792n}{\log(n+1)}\right)^n, \quad \forall n\in \mathbb{Z}_+
    \end{align}
    More generally, the single-variable \emph{Bell polynomial}, or \emph{Touchard polynomial} $B_n(x)$, is simply $Y_n$ with all arguments evaluated to $x$.
    \begin{align}
        B_n(x) = Y_n(x,x,\dots,x).
    \end{align}
    Of course, $b_n = B_n(1)$. The $n$th Bell polynomial $B_n(x)$ is also the value of the $n$th moment of the Poisson distribution with mean $x$. From \cite{Ahle2022Sharp} we have the following upper bound on $B_n$
    \begin{align} \label{eq:Bellfunc_bound}
        B_n(x) \leq \left(\frac{n}{\log(1+\frac{n}{x})}\right)^n,\quad \forall x \geq 0
    \end{align}
    which we observe is very close to that for the Bell numbers ($x=1$) in equation~\eqref{eq:bellnumbound}. From their definitions, $Y_n$, $B_n$ and $b_n$ all grow monotonically, both in their functional arguments and their index $n$. This is intuitive from being combinatorial functions whose coefficients count something according to the size of $n$.

\section{Proof of Operator Faà di Bruno Bound} \label{app:MPFproofs}    
    In this appendix, we prove the Faà di Bruno type bound used in~\sec{ErrorAnalysis}. 

    \begin{proof}[Proof of~\lem{FaadiBrunoOperator}]
        From the Trotter product theorem, we have
    \begin{equation}
        \partial_{t}^n \exp(A(t)) = \partial_{ t}^n\lim_{r\rightarrow \infty}(\exp(A( t)/r))^r.
    \end{equation}
    Using the fact that the series converges uniformly, we may interchange the order of differentiation and the limit.  This leads to
    \begin{equation}
        \|\partial_{t}^n \exp(A(t))\| \le \lim_{r\rightarrow \infty}  \sum_S \binom{n}{s_1,\ldots,s_r}\prod_{q=1}^r\left\|\partial_{t}^{s_q}\exp(A(t)/r)\right\|.
    \end{equation}
    Here the sum over $S$ is constrained such that $s_j \geq 0$ and $s_1 +\cdots+ s_r = n$.  Then using Taylor's theorem we have 
    \begin{equation}
        \left\|\partial_{t}^{s_q}\exp(A(t)/r)\right\| \le \frac{\|A^{(s_q)}(t)\|}{r}  + O(1/r^2).
    \end{equation}
    for $s_q > 0$, where the $O(1/r^2)$ terms will vanish as $r\rightarrow \infty$. The $s_q = 0$ case has upper bound 1 by unitarity. Hence, put together,
    \begin{equation}
         \|\partial_{t}^n \exp(A(t))\| \le \lim_{r\rightarrow \infty}  \sum_S \binom{n}{s_1,\ldots,s_r}\prod_{q=1}^r\left(\frac{\|A^{(s_q)}(t)\|(1-\delta_{s_q,0})}{r} +\delta_{s_q,0}\right). \label{eq:deriv1}
    \end{equation}
    
    Now let us define a scalar function $a(x)$ defined for $x$ in a neighborhood of $t$ such that, for any $k$ such that $0 \leq k \leq n$,
    \begin{equation}
        a^{(k)}(t) = \|A^{(k)}(t)\|(1-\delta_{k,0}).
    \end{equation}
    for a particular $x = t$. Such a function can be seen to exist by considering the $n$th degree Taylor polynomial.
    We may apply the standard Faà di Bruno formula~\eqref{eq:Faa_di_Bruno} to $a$, so that
    \begin{equation}
       \partial_{x}^n e^{a(x)}\bigg\rvert_{x = t} =e^{a(t)} Y_n(\|A^{(1)}(t)\|, \dots, \|A^{(n)}(t)\|) =Y_n(\|A^{(1)}(t)\|, \dots, \|A^{(n)}(t)\|).\label{eq:Ybd}
    \end{equation}
    On the other hand we can split $a(t)$ into $r$ steps and compute the $n$th derivative, just as for the Trotter product theorem.
    \begin{equation}
        \partial_{x}^n e^{a(x)}\bigg\rvert_{x = t} =\lim_{r\rightarrow \infty}  \sum_S \binom{n}{s_1,\ldots,s_r}\prod_{q=1}^r\left(\frac{\|A^{(s_q)}(\Delta t)\|(1-\delta_{s_q,0})}{r} +\delta_{s_q,0}\right) \label{eq:deriv2}
    \end{equation}
    By comparing expressions~\eqref{eq:deriv1} and~\eqref{eq:deriv2}, we see that
    \begin{align}
        \norm{\partial_t^n \exp{A(t)}} \leq \partial_x^n e^{a(x)}\bigg\rvert_{x = t}
    \end{align}
    and applying~\eqref{eq:Ybd}, we reach our desired bound Faà di Bruno bound.
    \begin{equation}
         \|\partial_{t}^n \exp(A(t))\| \le Y_n(\|A^{(1)}(t)\|, \dots, \|A^{(n)}( t)\|)\label{eq:fake_di_bruno}
    \end{equation}
    
    We evaluate the derivatives of $A(t)$, and express them in terms of the derivatives of the Hamiltonian, $H^{(j)}$ (for simplicity, we leave off the evaluation point. The derivative is with respect to the Hamiltonian's single argument). The result is
     \begin{align}
         \partial_{t}^j A(t) = \frac{-i}{k} \left[ \left(\frac{q-1/2}{k}\right)^j (t-t_0) H^{(j)} + j \left(\frac{q-1/2}{k}\right)^{j-1} H^{(j-1)}\right]
     \end{align}
     Employing the $\Lambda_n$-bound from~\defn{LambdaBound}, we have that
     \begin{align}
     \begin{aligned}
         \norm{\partial_{t}^j A(t)} &\leq  \frac{1}{k} \left[ \left(\frac{q-1/2}{k}\right)^j (t-t_0)\Lambda_{n, q}^{j+1}  + j \left(\frac{q-1/2}{k}\right)^{j-1} \Lambda_{n,q}q^{j}\right] \\
         &= \left(\frac{q-1/2}{k}\right)^j \Lambda_{n,q}^j \left[\frac{j}{q-1/2}+\frac{1}{k} (t-t_0)\Lambda_{n,q} \right].
        \end{aligned}
     \end{align}
     Here,
     \begin{align}
         \Lambda_{n,q} := \max_{\tau \in I_q} \Lambda_n(\tau)
     \end{align}
     and $I_q = [t_0+(q-1)(t-t_0)/k, t_0+q(t-t_0)/k]$ is the $q$th interval in the mesh from $t_0$ to $t$ with $k$ even spaces. Since $\Lambda_{n,q} \leq \max_{\tau \in [t_0,t]}\Lambda_n(\tau)$, from the assumptions of the lemma, $\Lambda_{n,q} (t-t_0) <1$. Hence,
     \begin{align}
     \begin{aligned}
         \norm{\partial_{t}^j A(t)} &\leq  \tilde{\Lambda}_{n,q}^j \left[\frac{j}{q-1/2}+\frac{1}{k}\right] \\
    \end{aligned}
    \end{align}
    where $\tilde{\Lambda}_{n,q} 
    \equiv \Lambda_{n,q} (q-1/2)/k$.  
    
    Plugging this into the formula into~\eqref{eq:fake_di_bruno} and using the definition of $Y_n$ given by \eqref{eq:Bell_def}, our bound becomes
    \begin{align}
        \norm{\partial_{t}^n U_2(t)} \leq \sum_C \frac{n!}{c_1!\dots c_n!} \prod_{j=1}^n \left(\frac{(\frac{j}{q-1/2}+\frac{1}{k}) \tilde{\Lambda}_{n,q}^j}{j!}\right)^{c_j}.
    \end{align}
    Using the sum property of the coefficients $c_j$, we can move the $\tilde{\Lambda}_{n,q}^j$ out of the sum. 
    \begin{align}
        \norm{\partial_{t}^n U_2(t)} &\leq \left(\Lambda_{n,q} \frac{q-1/2}{k}\right)^n \sum_C \frac{n!}{c_1!\dots c_n!} \prod_{j=1}^n \left(\frac{\frac{j}{q-1/2}+\frac{1}{k}}{j!}\right)^{c_j} \\
        &= \left(\Lambda_{n,q} \frac{q-1/2}{k}\right)^n Y_n\left(\vec{x}_{q,k}^{(n)}\right).
    \end{align}
    In the last line, we reapplied the definition of $Y_n$ and of the vectors $\vec{x}_{q,k}^{(n)}$. This completes our bound for the $U_2$ formula for the $q$th segment of mesh defined by $k_j$.
    \end{proof}

\section{Greedy Algorithm for Adaptive Time Steps} \label{app:adapt_deltat}
    Here we discuss schemes for constructing the adaptive, nonuniform mesh of time steps used in the MPF algorithm described in~\sec{pseudoMPF}. Specifically, we seek a decomposition of the desired simulation interval $[0,T]$ into a monotonically increasing sequence of times $t_0,t_1,\dots,t_r$, with $t_0 = 0$, $t_r=T$. The mesh construction of~\sec{LongTimeSim}, although theoretically sound, is not directly implementable since it requires knowing the total number of steps while constructing each new point based on local data. To avoid this issue, as well as the restriction $|\Dot{\Lambda} (\tau)| \leq K \Lambda^2(\tau)$  we seek a simple-to-use greedy algorithm. 
    
    One possibility is to use a direct approach which first selects a candidate number of steps $r_{\text{try}}$. Starting from $r_{\text{try}}=1$, we then build recursively a sequence of times using the condition (see Eq.~\eqref{eq:timestepBD} in the main text)
    \begin{equation} \label{eq:app_cond}
            \max_{t\in [t_{i-1},t_i]}\Lambda(t) \left(t_i-t_{i-1}\right)\le \frac{1}{41} \left(\frac{\epsilon}{0.32 \norm{a}_1 r} \right)^{1/(2m+1)}\;,
    \end{equation}
    with $r=r_{\text{try}}$. Starting from $t_0 = 0$ and looking for the largest $t_i$ that satisfies the condition, we finally check whether the generated number of intervals is greater than $r_{\text{try}}$ in which case we increase $r_{\text{try}}$ by one and repeat. When the algorithm stops at the optimal value $r_{\text{opt}}$, we have performed a total of $r_{\text{opt}}(r_{\text{opt}}+1)/2$ non-linear optimization steps, each one requiring multiple evaluations of the left hand side of Eq.~\eqref{eq:app_cond}. This can be very demanding when the left hand side of Eq.~\eqref{eq:app_cond} is expensive to evaluate and the optimal number of intervals is around half the upperbound
    \begin{equation}
        r_{\text{max}}= \left(41(t-t_0) \max_{\tau\in[t_0,t]}\Lambda_{2m+1}(\tau)\right)^{\frac{2m+1}{2m}}\left(\frac{0.32\norm{a}_1}{\epsilon}\right)^{\frac{1}{2m}}
    \end{equation}
    obtained considering identical intervals and bounding $\Lambda(t)$ with its maximum value over the whole simulation interval $[0,T]$. In this case, finding an approximation to the optimal decomposition requires $O(r_{\text{max}}^2)$ optimization steps, each one requiring multiple evaluations of the lefty hand side of Eq.~\eqref{eq:app_cond}.
    
    We now describe an alternative approach which determines $r_{\text{opt}}$ within a factor of 2 and uses only $r_{\text{max}}$ evaluations of $\max_{t \in [t_{i-1},t_i]}\Lambda(t)$ and additional $O(\log(r_{\text{max}})r_{\text{max}})$ simple arithmetic operations. This procedure can be used to find a viable, and approximately optimal, decomposition of the time interval or as a good starting point to find the optimal one using a procedure as the one described above. The idea is to start by decomposing the interval $[0,T]$ into $r_{\text{max}}$ segments with equal length and storing the maximum of $\Lambda(t)$ in each segment in an array $A$ of size $r_{\text{max}}$. We then introduce an additional array of the same size
    \begin{equation}
        L_m = \left[ \max_{k\leq m} A_k \right] m \frac{T}{r_{\text{max}}}\;,
    \end{equation}
    together with an additional set of vectors of the same size
    \begin{equation}
        R^{(n)}_{m} = \left[ \max_{n\geq k > m} A_k \right] (n-m) \frac{T}{r_{\text{max}}}\;,
    \end{equation}
    with $n$ an additional index between $1$ and $r_{max}$. The first vector stores the left hand side of Eq.~\eqref{eq:app_cond} for the interval up to the $m$-th time while the second vector stores the same information for the interval starting at the $m$-th time and ending at the $n$-th one. The algorithm proceeds by splitting the time interval recursively into two parts so that the left hand side of Eq.~\eqref{eq:app_cond} takes (approximately) the same value on both halves (ie. we are splitting the error equally on both sides). At every iteration the number of
    intervals doubles and the right hand side of Eq.~\eqref{eq:app_cond} shrinks accordingly. We stop the procedure once Eq.~\eqref{eq:app_cond} is satisfied on one interval (since we are guaranteed it will in all others). The procedure will stop at some $r_K$ at which point we know the optimal value $r_{\text{opt}}$ is in $[\lceil r_K/2\rceil,r_K]$. The algorithm can then be described as follows
    \begin{enumerate}
        \item Compute $L_m$ for all $m=1,...,r_{\text{max}}$
        \item Set $n=r_{\text{max}}$ and $r=2$
        \item Compute the elements of $R^{(n)}_m$ for all $m=1,...,n-1$
        \item Initialize an auxiliary array $D_m$ as $D_m = L_m - R^{(n)}_m$
        \item Find the least index $k$ for which $D_k>0$
        \item If $L_k$ is less than the right hand side of Eq.~\eqref{eq:app_cond} with the current value of $r$, set $r_K=r$ and exit
        \item If $2r \geq r_{\text{max}}$ set $r_K=r_{\text{max}}$ and exit
        \item set $r=2r$, $n=k$ and repeat from step 3
    \end{enumerate}
    
    Step 1 requires $r_{\text{max}}$ operations while Steps 3 and 4 cost $n$ operations each. Since the number of iterations is bounded by $\log_2(r_{\text{max}})$, their combined cost is bounded by $2\log_2(r_{\text{max}})r_{\text{max}}$. If we use binary search, Step 5 costs $\log_2(n)$ operations so its total cost is at most $\log_2(r_{\text{max}})^2$ operations. From this analysis we see that Steps 3 and 4 are the most expensive ones and they dominate the cost of the scheme. On exit we have $r_K\approx r_{\text{opt}}$ together with the first interval $[t_0,t_1]$. The rest of the intervals can then be found keeping $r=r_K$ fixed with additional $O(r_{\text{max}})$ operations.

\bibliographystyle{unsrt}
\bibliography{ref}


\end{document}